\newcommand{\adom}{\textup{adom}}
\newcommand{\dom}{\textup{cod}\xspace}
\newcommand{\sdidist}{\textup{sdi-dist}\xspace}
\newcommand{\sdqdist}{\textup{sdq-dist}\xspace}
\newcommand{\editdist}{\textup{edit-dist}\xspace}
\newcommand{\core}{\textup{core}\xspace}
\newcommand{\problem}[1]{\textup{\sc #1}}
\newcommand{\calS}{\mathcal{S}}
\newcommand{\extension}[1]{[\![#1]\!]}
\newcommand{\colondash}{\text{ :- }}
\newcommand{\mypara}[1]{\medskip\par\noindent{\bfseries\sffamily\large #1} }
\newcommand{\removespace}{\vspace{-2mm}}
\title{Query Repairs (Full Version)}
 \author{Balder {ten Cate}}{University of Amsterdam, The Netherlands}{b.d.tencate@uva.nl}{https://orcid.org/0000-0002-2538-5846}{Supported by EU Horizon
 2020 Grant MSCA-101031081.}
 \author{Phokion {Kolaitis}}{UC Santa Cruz \& IBM Research - Almaden, USA}{kolaitis@soe.ucsc.edu}{https://orcid.org/0000-0002-8407-8563}{Partially supported by NSF Grant IIS-1814152.}
 \author{Carsten {Lutz}}{Leipzig University and Center for Scalable Data Analytics and Artifcial Intelligence (ScaDS.AI), Dresden/Leipzig, Germany}{clu@informatik.uni-leipzig.de}{https://orcid.org/0000-0002-8791-6702}{Supported by the DFG Collaborative Research Center 1320 EASE.}
\authorrunning{B. ten Cate, Ph. Kolaitis and C. Lutz}
\keywords{Query Repairs, Databases, Conjunctive Queries,  Data Examples, Fitting} 
\begin{document}

\maketitle

\begin{abstract}
    We formalize and study the problem of repairing database queries based on user feedback in the form of a collection of labeled examples. We propose a framework based on the notion of a proximity pre-order, and we investigate and compare query repairs for conjunctive queries (CQs) using different such pre-orders. The proximity pre-orders we consider are based on query containment and on 
    distance metrics for CQs.
\end{abstract}

\section{Introduction}
    
When querying a database, it may happen that the query result includes some undesired tuples and/or that some desired  tuples are missing. In such cases, it is often necessary to adjust
the query to ensure that the result aligns with expectations, i.e., it includes the  desired tuples and omits the undesired ones. 

\begin{wrapfigure}{r}{.3\linewidth}
    \vspace{-4mm}
    \centering
    \begin{tabular}{|lll|}
    \hline
    \multicolumn{3}{|c|}{Release} \\
    \hline
    Babygirl & 2025 & DE \\
    Babygirl & 2025 & FR \\
    Nosferatu & 2025 & DE \\
    Nosferatu & 2024 & FR \\
    \ldots && \\
    \hline
    \end{tabular}
    \caption{Example instance}
    \label{fig:example-db}
    \vspace{-2mm}
\end{wrapfigure}

\begin{example} Consider a database instance $I$ in Figure~\ref{fig:example-db}.
    A user,  wanting to retrieve movies released in both Germany and France, issues the 
    query
    $q(x) \colondash \text{Release}(x,y,\text{FR}), \text{Release}(x,y,\text{DE})$. The query results
    include \text{Babygirl} but not \text{Nosferatu}.
    The user spots the latter as a missing answer, and wants to revise the query. A solution is to change the 
    query to     $q'(x) \colondash \text{Release}(x,y,\text{FR}), \text{Release}(x,z,\text{DE})$. A more radically different query such as $q''(x) \colondash \text{Release}(x,y,\text{FR})$ would also account for the missing answer, but clearly fails to capture the user's intention. 
\end{example}

    
We propose a formalization of the above problem through the notion of \emph{query repairs}, as follows. We assume that we are given a query $q$ and a set of \emph{labeled examples},
by which  we mean pairs $(I,\textbf{a})$ with $I$ a 
database instance  and $\textbf{a}$ a tuple of values
from the active domain of $I$, labeled as positive or negative to indicate whether $\textbf{a}$ is 
desired or undesired as an answer on input $I$. 
In the above example, for instance, the input query is $q$ and there is a positively labeled example $(I,\text{Nosferatu})$ that $q$ fails to fit.
A \emph{query repair}, then, is a query $q'$ that fits the given
 labeled examples  and ``differs from $q$ in a minimal way''.
Different notions of query repair arise by using different means to formalize what it means for two queries to differ in a minimal way.
Besides requiring that $q'$ fits the labeled examples and 
differs minimally from $q$, depending on the context,
it may  be natural to additionally require that $q\subseteq q'$ or that $q'\subseteq q$. This leads to further refinements of the notion of a query repair, namely \emph{query generalization} and \emph{query specialization}, respectively, which we also investigate.



We propose a broad framework for defining what it means for
two queries to differ in a minimal way, based on a 
\emph{proximity pre-order} $\preceq$, i.e.,  a family of
pre-orders $\preceq_q$ (one for each query $q$), where  
$q' \preceq_q q''$ asserts that query $q'$ is at
least as close to $q$ as $q''$. A query $q'$ is then a \emph{$\preceq$-repair} of a query $q$
if $q'$ fits the given labeled examples and there is no query $q''$ with 
the same property  such that $q'' \prec_q q'$. We instantiate this
framework for conjunctive queries (CQs), focussing
mainly on two kinds of 
%
proximity pre-orders:
the \emph{containment-of-difference} proximity pre-order $\preceq^{\dom}$,  based on query containment, and the \emph{edit-distance} proximity pre-order $\preceq^{\editdist}$,  based on a  distance metric between queries defined in terms of a suitably adapted version of edit distance. To be more
precise,  $q_1 \preceq^{\dom}_q q_2$ if for every instance $I$, the symmetric difference of $q_1(I)$ and $q(I)$ is contained in the symmetric difference of $q_2(I)$ and~$q(I)$. Moreover, $q_1 \preceq^{\editdist}_q q_2$ if the edit distance between
the homomorphism core of $q_1$ and the homomorphism core of $q$ is no larger
than the edit distance between
the homomorphism core of $q_2$ and the homomorphism core of $q$, modulo
variable renaming.


\begin{example}[Generalization]
\label{ex:running-generalization} 
Consider the CQ
$q(x) \colondash R(x,y), R(y,z), R(z,u), R(u,x)$ which returns all values that lie on a directed $R$-cycle of length 4, and the instance $I$ that consists of the 
facts $R(a,b), R(b,c), R(c,a)$, i.e., $I$ is the directed $R$-cycle of length 3. Clearly, 
$a\not\in q(I)$. Let $E$ be the singleton set of examples consisting  of  $(I,a)$ labeled as a positive example.
Which CQs qualify as repairs for $(q,E)$ or as generalizations for $(q,E)$? Note that since we only given positive examples, specializations to not seem to be a natural choice here.

It will turn out that there are two $\preceq^{\dom}$-repairs for
$(q,E)$: the CQ which expresses that $x$ lies on a directed $R$-cycle of length 12 and ghe CQ which expresses that $x$ lies on a directed 
$R$-cycle of length 3. Both of these are reasonable options. If
we ask for  $\preceq^{\dom}$-generalizations for $(q,E)$, then only
the first repair remains.

In contrast, there are precisely three $\preceq^{\editdist}$-
repairs of $(q,E)$, each obtained from $q$ by dropping a different atom from the body. Also these are reasonable options. The
same CQs are also the $\preceq^{\editdist}$-generalizations for
$(q,E)$.
\end{example}

\begin{example}[Specialization]
\label{ex:running-specialization}
    Consider the CQ
    $q(x) \colondash R(x,y), R(y,z)$, which
    returns all values that have an outgoing $R$-path of length $2$.
    Let $E$ be the set consisting of 
    \begin{itemize}
        \item a 
    negative example $(I,a)$ with $I=\{R(a,b), R(b,c)\}$, and
    \item
    a positive example $(J,a)$ with $J=\{R(a,b), R(b,c), R(c,d)\}$.
    \end{itemize}
  Which CQs qualify as specializations for $(q,E)$? In the same
  way in which generalizations are linked closely to the positive
  examples, specializations are linked closely to the negative
  examples. Note, however, that  by itself the negative example in $E$ does not provide much guidance as to what would be a ``good repair'' as there are many possible options. The positive example gives 
    (in this case, quite specific) additional guidance regarding the ``direction'' towards which we should look to find the repair.

    It will turn out that there is precisely
    one $\preceq^{\editdist}$-specialization, namely the very natural CQ $q'(x) \colondash R(x,y), R(y,z), R(z,u)$. 
   However,  $q'$ does not qualify as a $\preceq^{\dom}$-specialization, since the CQ
    $q''(x) \colondash R(x,y), R(y,z), R(u,z), R(u,v), R(v,w)$ also fits
    and is ``closer'' to $q$ in terms of query containment.
    In fact, as we will see,  no $\preceq^{\dom}$-specialization for $(q,E)$ exists. 
\end{example}

\looseness=-1



The problem of constructing a query that fits a given 
set of labeled data examples has been studied extensively and is known under different names such
as \emph{reverse engineering}, \emph{query learning}, or \emph{fitting}; see for instance~\cite{SigmodRecordColumn} which
offers a comparison of several  fitting algorithms for CQs. 
Recently, in~\cite{pods2023:extremal}, \emph{extremal} variants of the fitting problem for CQs were studied, including 
 (weakly/strongly) \emph{most-general fitting} 
 and  \emph{most-specific fitting}. There, the input consists of a set of positive and negative examples and the task is to find a most specific CQ, or a most general CQ, that fits them. 
 We can think of such extremal fitting problems as 
 constrained
 versions of the fitting problem for CQs where an additional requirement is put on the output query. In the same spirit, the \emph{query repair} problem
 can also be viewed as a constrained version of fitting where the input now includes, in addition, a CQ $q$,
 and the output is required to be a fitting CQ that differs minimally from $q$.\footnote{For the trivial proximity pre-order $\preceq_q$ relating every CQ to every  CQ, the query repair problem coincides with the fitting problem.} As a part of our contributions, we will establish close relationships between query repair problems and extremal fitting problems. \looseness=-1

\mypara{Overview of contributions}
In Sect.~\ref{sec:repairs}, we formally define $\preceq$-query repairs, as well as $\preceq$-generalizations and $\preceq$-specializations, based on a given proximity pre-order $\preceq$. We also propose, for each of these, three algorithmic problems: \emph{verification}, \emph{existence} and \emph{construction}.
The remaining sections focus specifically on CQs.

In Sect.~\ref{sec:containment-based}, we study the containment-based proximity pre-order $\preceq^{\dom}$. Besides examples of the resulting notions of generalization, specialization, and repair, our results, here, include: 
\begin{enumerate}
    \item[(a)] structural characterizations that relate $\preceq^{\dom}$-generalizations and 
$\preceq^{\dom}$-specializations to  most-specific fittings  and most-general fittings, respectively (Theorems~\ref{thm:dom-generalization-most-specific},~\ref{thm:reduction-specialization},~\ref{thm:wmg-as-dom-specialization}). These characterizations imply that there is always a unique $\preceq^{\dom}$-generalization (unless no suitable fitting CQ exists) while $\preceq^{\dom}$-specializations do not always exist.
    \item[(b)] based on this, results that identify the computational complexity of the verification, existence, and construction of $\preceq^{\dom}$-generalizations and
    $\preceq^{\dom}$-specializations. 
    \item[(c)] results that relate $\preceq^{\dom}$-repairs to 
    $\preceq^{\dom}$-specializations and $\preceq^{\dom}$-generalizations,  allowing us to apply some of the above algorithmic results to the more general case of 
    $\preceq^{\dom}$-repairs. However, we also illustrate that the 
    behaviour of $\preceq^{\dom}$-repairs is often counterintuitive.
    For instance, $\preceq^{\dom}$-repairs need not exist and also there can be infinitely
    many $\preceq^{\dom}$-repairs.
    In contrast to $\preceq^{\dom}$-generalizations and
    $\preceq^{\dom}$-specializations, $\preceq^{\dom}$-repairs thus do
    not seem to be very natural.
\end{enumerate}

In Sect.~\ref{sec:distance-based}, we study 
proximity pre-orders based on distance metrics. In particular, we 
propose a
 distance metric for CQs based on edit distance that gives
rise to a proximity pre-order $\preceq^{\editdist}$. We show
that there is always a non-empty and finite set of $\preceq^{\editdist}$-repairs (respectively,  $\preceq^{\editdist}$-generalizations, and $\preceq^{\editdist}$-specializations),
unless the given examples do not admit a fitting CQ. Moreover, we 
shed light on the complexity of the construction and verification problems (Thm.~\ref{thm:editdistrepaircomplexity}). We also
show that other, seemingly natural distance metric lead to repair
notions that behave worse.


\mypara{Outline}
Sect.~\ref{sec:prelim} contains technical preliminaries. In Sect.~\ref{sec:repairs},
we define query repairs.
In Sect.~\ref{sec:containment-based}, we explore containment-based query repairs. 
In Sect.~\ref{sec:distance-based}, we explore  query repairs based on distance metrics.
We conclude in Sect.~\ref{sec:discussion} with a discussion of future directions.

Due to lack of space, most proofs are omitted. They can be found in the full version.

\mypara{Related work}
Our notion of \emph{query repairs} is in part inspired by the literature on \emph{database repairs} introduced in \cite{DBLP:conf/pods/ArenasBC99}.
There, one is given a database $D$ that is inconsistent in the sense that it 
violates one or more integrity constraints and the aim is to answer a given
query over all possible repairs of $D$, that, is, all  databases consistent
with the integrity constraints that ``differ from $D$ in a minimal way''. Different notions of repairs, including set-based repairs and cardinality repairs,  arise by 
formalizing in different ways what it means for two databases to ``differ in a minimal way''. 
Research in this area has been rather active and fruitful~\cite{DBLP:series/synthesis/2011Bertossi}.

There is extensive literature on approximating a query $q$ by some other query $q'$ such that $q$ is contained in $q'$ or $q'$ is contained in $q$. In the former case $q'$ is often called an \emph{upper approximation} or an \emph{upper envelope} of $q$, while in the latter case it is called a \emph{lower approximation}, a \emph{lower envelope}, or a \emph{relaxation} of $q$. 
For instance, \cite{DBLP:conf/aaai/MusleaL05} proposes an algorithm for relaxing the 
where clause of an over-constrained database query that returns an empty result. 
Naturally, one is interested in optimal (with respect to containment) such approximations, which are known as \emph{tight} upper or lower envelopes. Lipski \cite{DBLP:journals/tods/Lipski79} studied upper and lower approximations in the context of databases with incomplete information, while Libkin \cite{DBLP:journals/tcs/Libkin98} carried out a study of formal models of approximation in databases. A related body of work focused on the problem of using approximation to achieve more efficient query evaluation. In particular, approximations of Datalog queries by CQs or unions of CQs were investigated in \cite{DBLP:conf/pods/Chaudhuri93,DBLP:journals/jcss/ChaudhuriK97}. More recently, approximations of CQs by CQs  of tractable combined complexity (such as acyclic CQs or CQs of bounded treewidth) were studied in \cite{DBLP:journals/siamcomp/BarceloL014,Barcelo2020:static}. In a different, yet related direction, tight  lower envelopes were used in the area of answering queries using views \cite{DBLP:conf/pods/DuschkaG97,DBLP:conf/cikm/KantereOKS15}, where such envelopes approximate a perfect rewriting.
Upper and lower envelopes were also 
used as tractable approximations of the answers to ontology-mediated queries, both over consistent databases \cite{DBLP:conf/kr/HagaLSW21} and over 
inconsistent ones \cite{DBLP:conf/ijcai/BienvenuR13}.

The literature on approximations summarized above is based on the notion of containment of one query to another. Notions of  ``closeness'' or ``similarity'' of queries that are not based on containment   have also been investigated. For example, a notion of closeness based on  suitable combinations of precision and recall was used to study the problem of translating a query over some schema to a semantically similar query over a different schema \cite{DBLP:journals/vldb/ChangG01}.  Furthermore, 
a notion of semantic similarity of queries based on available query logs was explored in \cite{DBLP:conf/sigir/BordinoCDG10}.


In the area of \emph{belief revision}, 
a number of proposals have been made for \emph{model-based} revision and update operators, in which a knowledge base is viewed semantically as a set of \emph{possible worlds} (where a world is a propositional truth assignment), 
and update/revision is performed on sets of possible worlds. Various concrete update and revision operators have been proposed
based on different notions of relative proximity for possible worlds, including using Hamming distance \cite{Dalal1988:investigations,Forbus1989:introducing} and containment-of-difference \cite{Satoh1988:nonmonotonic,Winslett1990:updating}.

In software engineering,
\emph{automated program
repair} techniques seek to aid
developers by suggesting likely correct
patches for software bugs. They take as input a
program and a specification of 
correctness criteria that the fixed program should meet. Most techniques assume 
that the correctness criteria are given by means of a test suite:
one or more failing tests indicate a bug
to be fixed, while passing tests indicate
behavior that should not change. The desired output
is a set of program changes
that leads all
tests to pass. See~\cite{LeGoues2019:automated} for an overview.
\looseness=-1


\section{Preliminaries}
\label{sec:prelim}

As usual, a schema $\mathcal{S}$ is a set of relation symbols, each with associated arity. A \emph{database instance} over $\mathcal{S}$ is a finite set $I$  of \emph{facts} of the form $R(a_1,\dots,a_n)$ where $R \in \mathcal{S}$ is a relation symbol of arity $n$ and $a_1,\dots,a_n$ are \emph{values}. We use $\mathit{adom}(I)$ to denote the set of all values used in~$I$.
We can  then view a \emph{query} over a schema~$\mathcal{S}$, semantically, as
a function $q$ that maps each 
database instance $I$ over $\mathcal{S}$ to a 
set of $k$-tuples $q(I)\subseteq \mathit{adom}(I)^k$, where $k\geq 0$ is the \emph{arity} of the query.
A query of arity zero is called a \emph{Boolean} query. 
We write $q_1\subseteq q_2$ and say that $q_1$ is \emph{contained} in $q_2$ if 
$q_1(I)\subseteq q_2(I)$ for all database instances $I$.
Two queries $q_1$ and $q_2$ are \emph{equivalent}, written $q_1 \equiv q_2$, if $q_1\subseteq q_2$ and $q_2\subseteq q_1$.

A \emph{data example} for a $k$-ary query $q$ consists
of a database instance $I$ together with a
$k$-tuple of values. 
We denote by $\extension{q}$ the set of all
data examples $(I,\textbf{a})$ for which it 
holds that $\textbf{a}\in q(I)$.
A \emph{labeled example}
is a data example that is labeled as positive or as negative. By a \emph{collection of labeled examples} we mean a pair 
$E=(E^+,E^-)$, where $E^+$ and
$E^-$ are sets of examples. Here, the data examples in  $E^+$ are considered as
positive examples, and the data examples in $E^-$ are considered as negative examples.
A query $q$ \emph{fits} $E=(E^+,E^-)$ if $\textbf{a}\in q(I)$ for each $(I,\textbf{a})\in E^+$
 and
$\textbf{a}\not\in q(I)$  for each 
$(I,\textbf{a})\in E^-$. In other words, 
$q$ fits $E=(E^+,E^-)$ if 
$E^+\subseteq\extension{q}$ and 
$E^-\cap\extension{q}=\emptyset$.
Here, we assume that $q$ has the same
arity as the data examples in $E^+$ and $E^-$.
We will often abuse notation and
write that $q$ fits $E^+$ (or that $q$ fits $E^-$), 
meaning that $q$ fits $(E^+,\emptyset)$ (respectively, 
$q$ fits $(\emptyset,E^-)$).


We will be focusing specifically on 
conjunctive queries. By a $k$-ary \emph{conjunctive query (CQ)} over a schema $\mathcal{S}$, we  mean an expression of the form 
$q(\textbf{x}) \colondash \alpha_1,\ldots,\alpha_n$ 
where $\textbf{x}=x_1, \ldots, x_k$ is a sequence of variables and each $\alpha_i$ is a relational atom that uses a relation symbol from $\mathcal{S}$ and no constants.
Note: the restriction to queries without constants is not
essential for our results (cf.~\cite[Remark 2.3]{SigmodRecordColumn}) but simplifies the presentation.

The variables in $\textbf{x}$ are called \emph{answer variables} and
the other variables used in the atoms~$\alpha_i$ are the \emph{existential variables}.
Each answer variable is required to occur in at least one
atom $\alpha_i$, a  requirement  known as the \emph{safety condition}.
For CQs $q,q'$ of the same arity, we denote their \emph{conjunction} by  $q\land q'$ (where, for instance,
the conjunction of $q_1(x,y)\colondash R(x,y,z)$
and $q_2(x,x)\colondash S(x,z)$
is $q(x,x)\colondash R(x,x,z), S(x,z')$ --- cf.~Def.~\ref{def:conjunction} in the appendix).
With the \emph{size} of a CQ, denoted $|q|$, we mean
the number of atoms in it. 
The query output $q(I)$ is defined as usual, cf.~any standard database textbook.

Every CQ $q(x_1, \ldots, x_k)$ has a \emph{canonical example}
$e_q$, namely the data example $(I_q,\langle x_1, \ldots, x_k\rangle)$, where $I_q$ is the database instance (over the same schema 
as $q$) whose active domain
consists of the variables  in $q$ and whose 
facts are the atomic formulas
in $q$.

Given data examples $e=(I,\textbf{a})$ and $e'=(J,\textbf{b})$ over the same schema and with the same number of distinguished elements, a 
\emph{homomorphism} $h: e\to e'$ is a map from $\adom(I)$ to $\adom(J)$ that  preserves all facts and such that $h(\textbf{a})=\textbf{b}$. When such a homomorphism exists, we   say that $e$ ``homomorphically maps to'' $e'$ and write
$e\to e'$. 
We say that $e$ and $e'$ are \emph{homomorphically equivalent} if
$e\to e'$ and $e'\to e$. 
It then holds that $e\in \extension{q}$ iff $e_q\to e$.
Furthermore, 
the well-known Chandra-Merlin theorem states that 
$q\subseteq q'$ holds iff
$e_{q'}\to e_q$.

A data example $e$ is said to be a \emph{core} 
if every homomorphism $h:e\to e$ is surjective.
It is well known that for every data example $e=(I,\textbf{a})$ there is a subinstance $I'\subseteq I$ such that $(I',\textbf{a})$ is a core and 
such that $(I,\textbf{a})$ and $(I',\textbf{a})$
are homomorphically equivalent. Moreover,
such  $(I',\textbf{a})$ is unique up to isomorphism, and may be referred to as 
\emph{the core of $e$}, denoted $\core(e)$. We  say that a 
CQ $q$ \emph{is a core} if its canonical example $e_q$ is a core.

The \emph{direct product} of two database instances, denoted $I\times J$, is the database instance
containing all facts $R(\langle a_1,b_1\rangle, \ldots, \langle a_n,b_n\rangle)$ over the 
domain $adom(I)\times adom(J)$ such that $R(a_1, \ldots, a_n)$ is a fact of $I$ and $R(b_1, \ldots, b_n)$ is a fact of $J$. 
This naturally  extends to data examples: $(I,\langle a_1, \ldots, a_k\rangle)\times (J,\langle b_1, \ldots, b_k\rangle)=(I\times J, \langle \langle a_1, b_1\rangle, \ldots, \langle a_k,b_k\rangle\rangle)$. 

\section{Query Repairs}
\label{sec:repairs}

Fix a query language $\mathcal{L}$.
A \emph{proximity pre-order} $\preceq$ for $\mathcal{L}$ is a family of
pre-orders $\preceq_q$, one for every  $q\in\mathcal{L}$,  
satisfying the following conditions:
\begin{description}
    \item[\em Conservativeness] For all $q,q'\in\mathcal{L}$, $q\preceq_q q'$.
    \item[\em Syntax independence] Whenever $q'_1\equiv q_1$, $q'_2\equiv q_2$ and $q'_3\equiv q_3$, then 
    $q_1\preceq_{q_2} q_3$ iff $q'_1\preceq_{q'_2} q'_3$.
\end{description}

Let $q\in \mathcal{L}$, and let $E$ a collection of labeled examples (of the same arity as $q$).
We call the pair $(q,E)$ an \emph{annotated $\mathcal{L}$-query}. The following
are the 3 main notions studied in this paper.
\begin{itemize}
    \item A \emph{$\preceq$-repair} for $(q,E)$
is a query $q'\in\mathcal{L}$ such that 
(i)~$q'$ fits $E$, and (ii)~there is no $q''\in\mathcal{L}$ with $q'' \prec_q q'$ that satisfies~(i).
\item A \emph{$\preceq$-generalization} for $(q,E)$
is a query $q'\in\mathcal{L}$ such that 
(i)~$q'$ fits $E$ and $q\subseteq q'$ and (ii)~there is no $q''\in\mathcal{L}$ with $q'' \prec_q q'$ that satisfies~(i).
\item A \emph{$\preceq$-specialization} for $(q,E)$
is a query $q'\in\mathcal{L}$ such that 
(i)~$q'$ fits $E$ and $q'\subseteq q$, and (ii)~there is no $q''\in\mathcal{L}$ with $q'' \prec_q q'$ that satisfies~(i).
\end{itemize}
(where $q_1\prec_q q_2$ is short for 
$q_1\preceq_q q_2$ and $q_2\not\preceq_q q_1$).

Note: under this definition, $\preceq$-specializations and $\preceq$-generalizations need not be $\preceq$-repairs.

Conservativeness and syntax-independence are minimal conditions on $\preceq$ needed 
to yield intuitive behavior for query repairs:
conservativeness ensures that if the input query $q$ already fits the given examples, then $q$ is its own $\preceq$-repair, while syntax independence ensures that 
equivalent queries have equivalent $\preceq$-repairs. 

In the rest of this paper, 
we will restrict attention to CQs. 
That is,
$\mathcal{L}$ is the class of CQs.

\begin{remark}
We will restrict our attention to repairs, generalizations,
and specializations that use only the relation symbols which occur
in $(q,E)$. Note that when $E$ consists only of negative examples, a query repair or specialization could in principle contain relation symbols that do not occur in $(q,E)$. We 
will disregard such repairs. It is not difficult, however, to
adapt our results to the case where such additional symbols
would be admitted. 
\end{remark}

Several algorithmic problems  arise, all parameterized with a proximity pre-order $\preceq$.
 
\medskip
\par\noindent\fbox{\parbox{\columnwidth}{
\problem{$\preceq$-Repair-Verification} 

\smallskip

\textbf{Input}: An annotated CQ $(q,E)$ and a CQ $q'$ 

\textbf{Output}: \emph{Yes} if $q'$ is a $\preceq$-repair for $(q,E)$, \emph{No} otherwise
}}

\medskip
\par\noindent\fbox{\parbox{\columnwidth}{
\problem{$\preceq$-Repair-Existence} 

\smallskip

\textbf{Input}: an annotated CQ $(q,E)$

\textbf{Output}: \emph{Yes} if there is a $\preceq$-repair for $(q,E)$, \emph{No} otherwise
}}

\medskip
\par\noindent\fbox{\parbox{\columnwidth}{
\problem{$\preceq$-Repair-Construction} 

\smallskip

\textbf{Input}: an annotated CQ $(q,E)$ for which  
  a $\preceq$-repair exists
    
\textbf{Output}: a $\preceq$-repair for $(q,E)$
}}

\medskip

We will also study the analogous algorithmic problems for  $\preceq$-generalization and
 $\preceq$-specialization, 
 defined in the expected way. 

In all of the above problems, the input queries and examples are assumed to be
compatible in terms of their  arity.  Moreover, in our complexity analyses, we will
assume a fixed (constant) query arity $k\geq 0$. This
is in fact only necessary for some of the upper bounds in 
Sect.~\ref{sec:specializations}.

\section{Containment-Based Approach}
\label{sec:containment-based}

In this section, we study notions of query generalization, query specialization and query repair defined based on 
query containment. For generalization and specialization, 
it seems intuitively clear what the definition 
should be. Let $(q,E)$ be an annotated CQ. Then
\begin{itemize}
    \item a \emph{containment-based generalization} for $(q,E)$ is a CQ 
    $q'$ that fits $E$ and such that $q\subseteq q'$ and
    there is no CQ $q''$ that fits $E$ with  $q \subseteq q'' \subsetneq q'$.
    \item a \emph{containment-based specialization} for $(q,E)$ is  a CQ 
    $q'$ that fits $E$ and such that $q'\subseteq q$ and
     there is no CQ $q''$ that fits $E$ with  $q' \subsetneq q'' \subseteq q$.
\end{itemize}
It is less immediately clear what the right query containment-based definition of \emph{query repairs} should be. 
As it turns out, the above notions of query generalization and query specialization can be viewed as query generalizations and query specializations with respect to the following natural proximity pre-order (cf.~also~\cite{Satoh1988:nonmonotonic,Winslett1990:updating,Barcelo2020:static}).

\begin{definition}[Containment of Difference] 
For queries $q, q_1, q_2$, we write $q_1\preceq^{\dom}_{q} q_2$
if 
$\extension{q}\oplus \extension{q_1}\subseteq \extension{q}\oplus \extension{q_2}$, 
where $\oplus$ denotes  symmetric difference. 



\end{definition}

Recall that $\extension{q}$ denotes the set of all positive examples of a query $q$. Therefore, $\extension{q}\oplus \extension{q_i}$ denotes the set of all 
examples on which $q_i$ disagrees with $q$. Thus, 
$q_1\preceq^{\dom}_q q_2$ means that the set of examples 
on which $q$ and $q_1$ disagree is a subset of the 
set of examples on which $q$ and  $q_2$ disagree (cod
stands for ``containment of difference'').
It is easy to see that $\preceq^{\dom}$ is indeed a proximity pre-order. Moreover,
it gives rise to the intended containment-based notions of query generalization and specialization:



\removespace
\begin{restatable}{proposition}{propdomgenspecasexpected}
    For all annotated CQs $(q,E)$ and CQs $q'$,
    \begin{enumerate}
        \item $q'$ is a $\preceq^{\dom}$-generalization for $(q,E)$ iff $q'$ is a containment-based  generalization for~$(q,E)$ 
        \item $q'$ is a $\preceq^{\dom}$-specialization for $(q,E)$ iff $q'$ is a containment-based  specialization for $(q,E)$. 
    \end{enumerate}
\end{restatable}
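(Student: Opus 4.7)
The plan is to reduce the pre-order $\preceq^{\dom}_q$, when restricted to CQs that lie above or below $q$ in containment, to ordinary query containment. The key observation is that if $q\subseteq q'$, then $\extension{q}\subseteq \extension{q'}$, so the symmetric difference $\extension{q}\oplus \extension{q'}$ collapses to $\extension{q'}\setminus\extension{q}$; dually, if $q'\subseteq q$, then $\extension{q}\oplus\extension{q'} = \extension{q}\setminus\extension{q'}$. Recall also that, directly from the definition of $\extension{\cdot}$, query containment $q_1\subseteq q_2$ is equivalent to $\extension{q_1}\subseteq\extension{q_2}$.

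For part (1), consider $q'$ and a candidate $q''$, both fitting $E$ and both containing $q$. Using the above, $\extension{q}\oplus\extension{q''} = \extension{q''}\setminus\extension{q}$ and similarly for $q'$. Hence $q''\preceq^{\dom}_q q'$ amounts to $\extension{q''}\setminus\extension{q}\subseteq \extension{q'}\setminus\extension{q}$. Since $\extension{q}\subseteq \extension{q''}\cap\extension{q'}$, this is in turn equivalent to $\extension{q''}\subseteq\extension{q'}$, i.e., $q''\subseteq q'$. Therefore $q''\prec_q q'$ iff $q''\subsetneq q'$, and the $\preceq^{\dom}$-generalization minimality condition becomes exactly ``no fitting $q''$ with $q\subseteq q''\subsetneq q'$'', which is the containment-based generalization definition.

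For part (2), the computation is symmetric. For $q',q''$ both contained in $q$ and both fitting $E$, we have $\extension{q}\oplus\extension{q'} = \extension{q}\setminus\extension{q'}$ and likewise for $q''$. Thus $q''\preceq^{\dom}_q q'$ iff $\extension{q}\setminus\extension{q''}\subseteq \extension{q}\setminus\extension{q'}$, which, since both $\extension{q''}$ and $\extension{q'}$ are subsets of $\extension{q}$, is equivalent to $\extension{q'}\subseteq\extension{q''}$, i.e., $q'\subseteq q''$. Note the direction of containment flips compared to the generalization case. Hence $q''\prec_q q'$ iff $q'\subsetneq q''$, and the $\preceq^{\dom}$-specialization minimality condition becomes precisely ``no fitting $q''$ with $q'\subsetneq q''\subseteq q$''.

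I do not expect a real obstacle; the argument is essentially routine set manipulation combined with the Chandra--Merlin characterization of containment. The one point to watch is that the collapse of $\preceq^{\dom}_q$ to containment uses the hypothesis $q\subseteq q''$ (respectively $q''\subseteq q$) in a crucial way, and fails without it. That is exactly why $\preceq^{\dom}$-repairs, which are not restricted by any containment relation with $q$, behave less intuitively than $\preceq^{\dom}$-generalizations and $\preceq^{\dom}$-specializations, as flagged later in the paper.
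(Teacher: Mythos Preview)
Your proof is correct and follows essentially the same approach as the paper: both rely on the key observation that, under the hypothesis $q\subseteq q'$ (respectively $q'\subseteq q$), the symmetric difference $\extension{q}\oplus\extension{q'}$ collapses to a set difference, so that $\preceq^{\dom}_q$ restricted to queries above (respectively below) $q$ coincides with ordinary containment. Your presentation is in fact slightly more streamlined than the paper's, which splits the two directions and invokes an auxiliary lemma (that $\preceq^{\dom}$-generalizations are exactly the $\preceq^{\dom}$-repairs $q'$ with $q\subseteq q'$) for the left-to-right direction, whereas you handle both directions uniformly via the equivalence $q''\preceq^{\dom}_q q'\Leftrightarrow q''\subseteq q'$ for competitors $q''$ already satisfying $q\subseteq q''$.
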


\medskip

It also follows that  $\preceq^{\dom}$-generalizations and  $\preceq^{\dom}$-specializations 
 are $\preceq^{\dom}$-repairs.
This furthermore suggests  $\preceq^{\dom}$-repairs as 
a (seemingly) natural containment-based  notion of query repair.
Next,
we will study 
$\preceq^{\dom}$-generalizations, 
$\preceq^{\dom}$-specializations, 
and $\preceq^{\dom}$-repairs.
Our main findings can be summarized as follows: \emph{$\preceq^{\dom}$-generalizations} and 
\emph{$\preceq^{\dom}$-specializations} are well-behaved notions, although the latter do not always exist (Example~\ref{ex:running-specialization}) and can be too plentiful (Example~\ref{ex:dominance-specializations-infinite}).
The associated existence, verification, and construction problems admit effective algorithms, although often of super-polynomial complexity. The more general  \emph{$\preceq^{\dom}$-repairs} exhibits  counter-intuitive behavior.


\subsection{Containment-Based Query Generalizations}

The following example illustrates
$\preceq^{\dom}$-generalizations.

\begin{example}
     Consider the schema consisting of unary relations $P,Q$. Let
    $q(x) \colondash  P(x), Q(x)$ and 
    let $I$ be the instance that consists of the facts $P(a), Q(b)$.
    There is exactly one $\preceq^{\dom}$-generalization for $(q,E^+)$, where $E^+$ is the set of positive examples
    $\{(I,a)\}$, namely  $q'(x)\colondash P(x), Q(y)$. This is, in fact, also the only $\preceq^{\dom}$-repair.
\end{example}

The next  result show that there is a precise, two-way correspondence between $\preceq^{\dom}$-generalizations and most-specific fitting CQs. A  \emph{most-specific fitting CQ} for a collection of labeled examples $E$ is a fitting CQ $q$ such that 
    for every fitting CQ $q'$, $q\subseteq q'$
    \cite{pods2023:extremal}.

\begin{restatable}{theorem}{thmdomgeneralizationmostspecific}
\label{thm:dom-generalization-most-specific}
    For all CQs $q,q'$ and collections of labeled examples $E=(E^+,E^-)$,
    \begin{enumerate}
        \item $q'$ is a $\preceq^{\dom}$-generalization for $(q,E)$ iff
         $q'$ is a most-specific fitting CQ for 
        $(E^+\cup\{e_q\},E^-)$.
        \item $q$ is a most-specific fitting CQ for $E$ iff $q$ is a $\preceq^{\dom}$-generalization for $(q_\bot, E)$,
    \end{enumerate} 
where $q_\bot$ denotes the maximally-constrained CQ over the relevant schema $\calS=\{R_1, \ldots, R_n\}$ and of the relevant arity, i.e., the CQ
$q_\bot(x,\ldots,x) \colondash R_1(x,\ldots,x), \ldots, R_n(x,\ldots, x)$.
\end{restatable}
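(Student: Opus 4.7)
The plan is to prove Part~1 directly from the definitions, using Proposition~\ref{propdomgenspecasexpected} to work with the containment-based reformulation of $\preceq^{\dom}$-generalization, and then to obtain Part~2 by instantiating Part~1 at $q:=q_\bot$ and observing that every CQ fits $e_{q_\bot}$ as a positive example. The pivotal tools in both directions of Part~1 are the Chandra--Merlin characterization ($q\subseteq q'$ iff $e_q\in\extension{q'}$) and the CQ conjunction $q_1\land q_2$, which is the $\subseteq$-greatest lower bound of $q_1$ and $q_2$ and satisfies $(q_1\land q_2)(I)=q_1(I)\cap q_2(I)$.

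\textbf{Part 1, forward direction.} Assume $q'$ is a $\preceq^{\dom}$-generalization for $(q,E)$, i.e.\ $q'$ fits $E$, $q\subseteq q'$, and no fitting CQ $q''$ satisfies $q\subseteq q''\subsetneq q'$. Since $q\subseteq q'$, Chandra--Merlin yields $e_q\in\extension{q'}$, so $q'$ fits $(E^+\cup\{e_q\},E^-)$. For most-specificity, take any CQ $q''$ that fits $(E^+\cup\{e_q\},E^-)$. Then $q''$ fits $e_q$, whence $q\subseteq q''$; combined with $q\subseteq q'$, this gives $q\subseteq q'\land q''$. The conjunction $q'\land q''$ fits $E$ because both conjuncts do and $(q'\land q'')(I)=q'(I)\cap q''(I)$. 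Since $q'\land q''\subseteq q'$, the minimality of $q'$ forbids strict containment, so $q'\land q''\equiv q'$, hence $q'\subseteq q''$, as required.

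\textbf{Part 1, backward direction.} Assume $q'$ is a most-specific fitting CQ for $(E^+\cup\{e_q\},E^-)$. Then $q'$ trivially fits $E$, and $e_q\in\extension{q'}$ gives $q\subseteq q'$ by Chandra--Merlin. For minimality, suppose some CQ $q''$ fits $E$ with $q\subseteq q''\subsetneq q'$. Since $q\subseteq q''$ implies $e_q\in\extension{q''}$, the query $q''$ fits $(E^+\cup\{e_q\},E^-)$, so most-specificity of $q'$ would force $q'\subseteq q''$, contradicting $q''\subsetneq q'$.

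\textbf{Part 2.} I would reduce to Part~1 via the observation that every CQ~$q^\ast$ over $\calS$ of the relevant arity satisfies $e_{q^\ast}\to e_{q_\bot}$: the instance $I_{q_\bot}$ has a single element and contains $R(x,\ldots,x)$ for every $R\in\calS$, so the constant map witnesses the homomorphism. Hence $e_{q_\bot}\in\extension{q^\ast}$ for every CQ $q^\ast$, meaning that $(E^+\cup\{e_{q_\bot}\},E^-)$ has exactly the same fitting CQs as $(E^+,E^-)$. Applying Part~1 with $q:=q_\bot$ then yields Part~2 immediately.

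\textbf{Main obstacle.} No step is genuinely deep; the only subtlety is the use of $q'\land q''$ in the forward direction of Part~1, which relies on the syntactic construction of Def.~\ref{def:conjunction} being a well-defined CQ with the intersection semantics and on it being the greatest lower bound under $\subseteq$, so that the minimality condition on $q'$ can be invoked to collapse $q'\land q''$ to $q'$.
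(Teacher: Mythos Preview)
Your proof is correct. The core idea---translating the condition $q\subseteq q'$ into ``$q'$ fits $e_q$ as a positive example'' via Chandra--Merlin---matches the paper's approach exactly. You differ in two places. For the forward direction of Part~1, you make explicit the passage from $\subseteq$-\emph{minimality} (which is what the definition of containment-based generalization gives) to $\subseteq$-\emph{leastness} (which is what ``most-specific fitting'' requires) by forming the conjunction $q'\land q''$ and invoking minimality to collapse it to $q'$; the paper's proof is terser and essentially asserts this equivalence after noting that the two fitting sets coincide, leaving the meet argument implicit. Your version is more complete on this point. For Part~2, the paper argues directly that $\preceq^{\dom}_{q_\bot}$ coincides with $\subseteq$ on all CQs (since $q_\bot\subseteq q$ always holds, so symmetric differences become set differences), whereas you reduce to Part~1 by observing that $e_{q_\bot}$ is a redundant positive example fit by every CQ. Both routes are short and valid; yours has the minor advantage of reusing Part~1 rather than re-running its reasoning, and of automatically inheriting the minimal-vs.-least argument you already supplied there.
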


As a consequence of this, we can leverage known 
results about most-specific fitting CQs. For instance,
it is known that, for every collection of labeled examples $E$, there is at most one
most-specific fitting CQ up to equivalence,
namely the CQ whose canonical example is the 
direct product of the positive examples in $E$. This implies:

\begin{restatable}{corollary}{coruniquedomgeneralization}
    Let $(q,E)$ be an annotated CQ for which a fitting CQ $q'$ with $q\subseteq q'$ exists. Then there is, up to equivalence, 
    exactly one $\preceq^{\dom}$-generalization for $(q,E)$.
\end{restatable}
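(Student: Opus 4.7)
The plan is to derive the corollary as an essentially direct consequence of Theorem~\ref{thm:dom-generalization-most-specific}(1) together with the known uniqueness (up to equivalence) of most-specific fitting CQs.

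First, I would verify that the hypothesis translates into the existence of a fitting CQ for the augmented example set. Suppose $q'$ is a CQ that fits $E=(E^+,E^-)$ and satisfies $q\subseteq q'$. By the Chandra--Merlin theorem, $q\subseteq q'$ is equivalent to $e_{q'}\to e_q$, which in turn means $e_q\in\extension{q'}$. Hence $q'$ fits every positive example in $E^+\cup\{e_q\}$ and, since its behavior on $E^-$ is unchanged, $q'$ fits the collection $(E^+\cup\{e_q\},E^-)$.

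Second, I would invoke the structural fact recalled in the paragraph preceding the corollary: whenever some CQ fits a collection of labeled examples, there exists, up to equivalence, a unique most-specific fitting CQ, namely the CQ whose canonical example is (the core of) the direct product of the positive examples. Applied to the collection $(E^+\cup\{e_q\},E^-)$, this yields a unique most-specific fitting CQ $q^\star$ up to equivalence.

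Third, I would apply Theorem~\ref{thm:dom-generalization-most-specific}(1), which gives the equivalence: $q''$ is a $\preceq^{\dom}$-generalization for $(q,E)$ iff $q''$ is a most-specific fitting CQ for $(E^+\cup\{e_q\},E^-)$. Combined with the uniqueness in the previous step, this shows that $q^\star$ is, up to equivalence, the unique $\preceq^{\dom}$-generalization for $(q,E)$.

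I do not foresee any serious obstacle: the corollary is essentially a packaging result. The only thing to be careful about is the direction of the Chandra--Merlin translation in the first step, and the mild syntactic point that the direct product construction of the most-specific fitting CQ needs to respect the arity and schema conventions fixed earlier in the paper; both are routine.
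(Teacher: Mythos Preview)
Your proposal is correct and follows essentially the same approach as the paper: invoke the known uniqueness (up to equivalence) of most-specific fitting CQs from \cite{pods2023:extremal}, then apply Theorem~\ref{thm:dom-generalization-most-specific}(1). Your first step, explicitly checking via Chandra--Merlin that the hypothesis yields a fitting CQ for $(E^+\cup\{e_q\},E^-)$, is a helpful elaboration that the paper leaves implicit.
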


\begin{example}[Example~\ref{ex:running-generalization} revisited]
    Using Thm.~\ref{thm:dom-generalization-most-specific}, 
    we can verify the claim, in Example~\ref{ex:running-generalization}, that
    the CQ $q'$ expressing ``$x$ lies on a directed $R$-cycle of length 12'' is 
    the unique (up to equivalence) $\preceq^{\dom}$-generalization for $(q,E)$.
    This is true because the canonical example of $q'$ is homomorphically equivalent to the 
    direct product of the positive example $(I,a)$ and the canonical example
    $e_q$ of the input CQ.
\end{example}

The complexity of 
various algorithmic problems pertaining to most-specific fitting CQs, as well as size bounds, were studied in~\cite{pods2023:extremal}. From these, we obtain complexity results and size bounds for $\preceq^{\dom}$-generalizations. We include here also
an analysis for the case where the input 
consists of positive examples only, which is 
particularly natural in the case of 
query-generalizations: it follows from Thm.~\ref{thm:dom-generalization-most-specific} and what is said below it
that negative examples have no
effect on generalizations except that they may render
them non-existent.  

\begin{restatable}{corollary}{corcomplexitydomgeneralizations}
\label{cor:complexity-dom-generalization} 
    \begin{enumerate}      
        \item \problem{$\preceq^{\dom}$-generalization-existence} is coNExpTime-complete.
        For inputs consisting of a bounded number of examples, it is coNP-complete.
        For inputs consisting of positive examples only, it is in PTime. 
        \item \problem{$\preceq^{\dom}$-generalization-verification} is NExpTime-complete, even for inputs
        consisting of positive examples only.
        It is DP-complete for a bounded number of input examples.
        \item \problem{$\preceq^{\dom}$-generalization-construction} is in ExpTime (and in PTime if the number of examples is bounded). 
        \item Let $q() \colondash R(x,x)$. There is a sequence of examples $(e_n)_{n\in\mathbb{N}}$ of size polynomial in $n$, such that (i) there is a  $\preceq^{\dom}$-generalization for $(q,E^+_n)$, where $E^+_n=\{e_1, \ldots, e_n\}$, and (ii)
        every $\preceq^{\dom}$-generalization for $(q,E^+_n)$ has size at least $2^n$. 
    \end{enumerate}
\end{restatable}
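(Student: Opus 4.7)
The plan is to derive every part of this corollary from Theorem~\ref{thm:dom-generalization-most-specific}, which yields polynomial-time reductions in both directions between $\preceq^{\dom}$-generalization problems for an annotated CQ $(q,E)$ and the corresponding problems about most-specific fitting CQs for $(E^+\cup\{e_q\}, E^-)$. Since $e_q$ has size polynomial in $q$, the claims in parts 1--3 and the size bound in part 4 can be transferred from known statements about most-specific fitting CQs established in~\cite{pods2023:extremal}. The forward direction of Thm.~\ref{thm:dom-generalization-most-specific} gives the upper bounds; the second item, instantiated with $q=q_\bot$, gives the hardness reductions.

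For parts 1--3, I would cite the known complexity classifications for most-specific fitting CQs and observe that adding one extra positive example (namely $e_q$) changes neither the complexity class nor whether the number of examples is considered bounded. The only subcase requiring a separate argument is the positive-only case of part~1: when $E^-=\emptyset$, the (core of the) direct product of all examples in $E^+\cup\{e_q\}$ is always a most-specific fitting CQ, hence by Thm.~\ref{thm:dom-generalization-most-specific} a $\preceq^{\dom}$-generalization always exists, making the existence problem trivially solvable in \textsc{PTime}. For verification in part~2 the NExpTime-hardness transfers from the positive-only hardness result for most-specific fitting in~\cite{pods2023:extremal}, giving exactly the ``even for inputs consisting of positive examples only'' clause. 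For construction in part~3, one explicitly computes the direct product (exponential, or polynomial if the number of examples is bounded) and extracts its core.

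For part~4, I would use a coprime-cycle construction tailored to $q()\colondash R(x,x)$. Let $p_1<p_2<\cdots$ enumerate the primes; by the prime number theorem each $p_n$ is bounded by a polynomial in $n$. Let $e_i$ be the Boolean data example consisting of a directed $R$-cycle of length $p_i$. The canonical example $e_q$ is a single $R$-loop, and its direct product with each $e_i$ is homomorphically equivalent to $e_i$ itself. Since cycles of pairwise coprime lengths have as their direct product a single directed cycle of length equal to the product, and since directed cycles are cores, the most-specific fitting of $E^+_n\cup\{e_q\}$ is (equivalent to) the CQ whose body traces a directed $R$-cycle of length $p_1\cdots p_n\geq 2^n$. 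By Thm.~\ref{thm:dom-generalization-most-specific}, every $\preceq^{\dom}$-generalization for $(q,E^+_n)$ is equivalent to this CQ, and because cores are minimal representatives of their equivalence class, any such generalization has at least $2^n$ atoms.

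I expect the main obstacle to be bookkeeping rather than a new idea: one must verify that each individual complexity result from~\cite{pods2023:extremal} survives the reduction in the precise parameter regime claimed (bounded vs.\ unbounded number of examples, with or without negative examples), and that the hardness reductions remain valid after augmenting the positive examples with $e_q$ (which is harmless, but must be checked to not, e.g., accidentally render some instance fitting that was previously unfitting).
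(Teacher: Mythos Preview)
Your proposal is correct and follows essentially the same route as the paper: parts~1--3 are obtained by transferring, via Theorem~\ref{thm:dom-generalization-most-specific}, the known complexity results for most-specific fitting CQs from~\cite{pods2023:extremal}, and part~4 uses the identical coprime prime-length cycle construction (with $e_q$ being an $R$-loop, so the product is just the product of the cycles). One small caveat: in the positive-only subcase of part~1 you assert that a generalization \emph{always} exists, but for non-Boolean queries the direct product can fail the safety condition; this does not affect the PTime claim, since safety can be checked in polynomial time, but the answer is not literally always ``yes''.
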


\medskip

\begin{remark}
\label{remark:no-size-bound-dom-repairs}
Corollary~\ref{cor:complexity-dom-generalization}(4) implies that 
the size of $\preceq^{\dom}$-repairs is in general exponential 
in the number of positive examples, and also that 
the size of $\preceq^{\dom}$-repairs for $(q,E)$ cannot be bounded by \emph{any} 
function in the size of $q$ and the size of the
smallest fitting CQ for $E$.
\end{remark}



\subsection{Containment-Based Query Specializations}
\label{sec:specializations}

The following example illustrate
$\preceq^{\dom}$-specializations.

\begin{example}
    Consider the CQ $q(x) \colondash  P(x)$ and let $E$ consist of
    \begin{itemize}
        \item a negative example $(I,a)$ where     $I$ 
consists only of the fact $P(a)$, and
        \item a positive example $(J,a)$ where 
        $J$ extends $I$ with the additional facts $Q(a)$ and $R(a,a)$.
    \end{itemize}
    There are two $\preceq^{\dom}$-repairs
    for $(q,E)$, 
    namely $q'_1(x) \colondash P(x), Q(y)$ and
    $q'_2(x) \colondash P(x), R(y,z)$. It can be shown with the help of  Thm.~\ref{thm:reduction-specialization} below that these are the only two $\preceq^{\dom}$-repairs.
\end{example}

An annotated CQ $(q,E)$ may lack  a $\preceq^{\dom}$-specialization even when a fitting CQ exists:

\begin{example}\label{ex:dom-neg-notexists}
    Consider the Boolean CQ $q() \colondash R(x,y)$, and let $E$ consist of
    \begin{itemize}
        \item a negative example $I$ consisting
    of facts $R(b,c), R(c,b)$, and
    \item a positive example $J$  consisting of the fact $R(a,a)$.
    \end{itemize}
    The positive example, here, is strictly speaking redundant: every CQ over the relevant schema fits it. It is included only for intuition.
    There are CQs $q'$ with $q'\subseteq q$ that fit $E$ (for instance,
    $q'() \colondash R(x,x)$ is such a query), but there does not exist
    a $\preceq^{\dom}$-specialization for $(q,E)$. This can be seen as follows:
    for every CQ $q'$ that fits $E$, by construction,  the canonical example $e_{q'}$ is a non-2-colorable graph. By a well-known result in graph theory,  $e_{q'}$ must then contain a cycle of odd length. By blowing up the length of this cycle (e.g.~using the sparse incomparability lemma~\cite{kun2013constraints}), one can construct a fitting CQ $q''$ such that $q'\subsetneq q''\subseteq q$.
\end{example}

    In the previous subsection, we saw that $\preceq^{\dom}$-generalizations are closely related to 
    most-specific fitting CQs. Similarly, 
    $\preceq^{\dom}$-specializations are
    closely related to weakly
    most-general fitting CQs, where
    a \emph{weakly most-general fitting CQ} for a collection of labeled examples $E$ is
    a fitting $q$ such that 
    for every fitting CQ $q'$, $q\subseteq q'$ implies $q \equiv q'$
    \cite{pods2023:extremal}.

\begin{restatable}{theorem}{thmreductionspecialization}
\label{thm:reduction-specialization}
    Let $(q,E)$ be any 
    annotated CQ with $E=(E^+,E^-)$, such that $q$ has no repeated answer variables.
    Then, for all CQs $q'$, the following are equivalent: 
    \begin{enumerate}
        \item $q'$ is a $\preceq^{\dom}$-specialization for $(q,E)$,
        \item $q$ fits $E^+$ and $q'$ is equivalent to $q\land q''$ for some $q''$ that is a weakly most-general fitting CQ for
        $(E^+,E'^-)$, where
        $E'^-=\{e\in E^-\mid e\in\extension{q}\}$. \end{enumerate}
    Moreover, in the direction from 1 to 2, $q''$ can  be chosen so that
    $|q''|\leq |q'|$.
\end{restatable}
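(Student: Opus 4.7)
The plan is to prove both directions of the equivalence, exploiting the assumption that $q$ has no repeated answer variables to get a clean decomposition of the canonical instance of $q \wedge q''$.

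For direction $(1) \Rightarrow (2)$, I would first observe that $q$ fits $E^+$: since $q' \subseteq q$ and $q'$ fits $E^+$, so does $q$. To construct $q''$, I consider the family
$\mathcal{W} = \{q^* \mid q' \subseteq q^* \text{ and } q^* \text{ fits } (E^+, E'^-)\}$,
which is non-empty since $q' \in \mathcal{W}$ (as $q'$ fits $E \supseteq (E^+,E'^-)$). I then select $q''$ as a maximal element of $\mathcal{W}$ under containment. Weak most-generality of $q''$ for $(E^+, E'^-)$ follows from maximality: any strictly more general fitting CQ would also contain $q'$ and hence lie in $\mathcal{W}$, contradicting maximality. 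The equivalence $q \wedge q'' \equiv q'$ holds because $q \wedge q''$ fits $E$ (positive examples are handled by $q$ and $q''$ jointly; a negative example $e \in E^-$ is excluded by $q''$ when $e \in E'^-$ and by $q$ when $e \notin \extension{q}$), lies between $q'$ and $q$, and thus the specialization property of $q'$ forces $q' \equiv q \wedge q''$. For the size bound $|q''| \leq |q'|$, I would leverage that $q$ has no repeated answer variables to define $q''$ explicitly: using a homomorphism $h: e_q \to e_{q'}$ witnessing $q' \subseteq q$ (which is determined on answer variables), one removes the atoms in $h(e_q)$ from $e_{q'}$, producing a CQ of size at most $|q'|$ that can be argued to satisfy the requirements.

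For direction $(2) \Rightarrow (1)$, I would assume $q''$ is weakly most-general for $(E^+, E'^-)$ and $q' \equiv q \wedge q''$. Then $q' \subseteq q$ is immediate, and $q'$ fits $E$ by the same argument used above. For the minimality condition, I argue by contradiction: suppose a CQ $q^*$ fits $E$ with $q' \subsetneq q^* \subseteq q$. Then $q^*$ fits $(E^+, E'^-)$ and $q \wedge q^* \equiv q^*$. The goal is to produce a CQ that strictly generalizes $q''$ while still fitting $(E^+, E'^-)$, thereby contradicting weak most-generality.

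The main obstacle lies in this last step of direction $(2) \Rightarrow (1)$: the derived inclusion $q \wedge q'' \subseteq q^*$ does \emph{not} in general entail $q'' \subseteq q^*$ (the operator $q \wedge (-)$ is monotone but not injective), so the definition of weak most-generality cannot be invoked directly on $q^*$. Resolving this requires exploiting that $q$ has no repeated answer variables, which makes the canonical instance of $q \wedge q''$ decompose cleanly as $e_q \sqcup_{\mathrm{ans}} e_{q''}$. I would then either (i) factor the homomorphism witnessing $q' \subseteq q^*$ through this decomposition to isolate its $q''$-component, or (ii) construct a CQ $q^{\sharp}$ from $q^*$ by ``stripping the $q$-part'' (analogously to the construction of $q''$ from $q'$ in the forward direction) and show that $q^{\sharp}$ strictly generalizes $q''$ while still fitting $(E^+, E'^-)$, yielding the desired contradiction.
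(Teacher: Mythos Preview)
The central gap is in direction (1) $\Rightarrow$ (2). You propose to ``select $q''$ as a maximal element of $\mathcal{W}$ under containment,'' but containment among CQs does not satisfy the ascending chain condition, so $\mathcal{W}$ need not have a maximal element a priori, and you give no argument that assumption~(1) forces one to exist. (Zorn's lemma does not help: an ascending chain of CQs in $\mathcal{W}$ need not have an upper bound that is itself a CQ.) The existence of the weakly most-general $q''$ is precisely the substantive content of this direction and cannot simply be assumed. Your later explicit construction (``remove the atoms in $h(e_q)$ from $e_{q'}$'') is invoked only for the size bound; if it is meant to supply the missing $q''$, then weak most-generality must be proved for \emph{that} construction directly, which you do not do. As stated it also fails to handle safety and is coarser than what is needed.

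The paper proceeds constructively from the outset. For (1) $\Rightarrow$ (2) it defines $q''$ (after padding for safety) as the conjunction of those \emph{connected components} of $q'$ that do not homomorphically map into $q$, and then proves three claims: $q''$ fits $(E^+, E'^-)$; $q \wedge q'' \equiv q'$ (the specialization hypothesis rules out strict containment); and $q''$ is weakly most-general, via a separate component-based contradiction argument. Working with components rather than a single homomorphic image $h(e_q)$ is what makes the weak most-generality argument go through. For (2) $\Rightarrow$ (1), your option~(ii) is indeed the paper's route: from the hypothetical $q^*$ one forms $q^{\sharp}$ as the components of $q^*$ not mapping into $q$ and derives $q'' \subsetneq q^{\sharp}$. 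The step you flag as open---why $q'' \subseteq q^{\sharp}$---is where the no-repeated-answer-variables assumption does real work: it lets $e_{q \wedge q''}$ decompose as a union of (a quotient of) $e_{q}$ and $e_{q''}$ glued along the answer variables, so any component of $q^*$ that maps into $e_{q \wedge q''}$ but not into the $q$-part must land in $e_{q''}$.
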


Again, there is also a converse reduction,  but it is more cumbersome to state because for $k>0$, there does not exist a $k$-ary
``most-general'' CQ $q_\top$ that is contained in all $k$-ary CQs (due to the safety condition in the definition of CQs). Instead,
we need to consider all CQs $q(x_1, \ldots, x_k)$ whose body only contains, 
for each $i\leq k$ one atom of the form 
$R(\textbf{y},x_i,\textbf{z})$, where $R$ is any relation symbol and 
$\textbf{y},\textbf{z}$ are tuples of distinct fresh existential variables. We  call  any such CQ a \emph{minimally-constrained CQ}.  Note that, for any schema and arity,
there are finitely many  minimally-constrained CQs 
(up to renaming of variables). 

\begin{restatable}{theorem}{thmwmgasdomspecialization}
    \label{thm:wmg-as-dom-specialization}
    For every CQ $q$ and collection $E$ of labeled examples, the following are equivalent:
    \begin{enumerate}
        \item $q$ is a weakly most-general fitting CQ for $E$,
        \item $q$ is a $\preceq^{\dom}$-specialization for $(q_\top,E)$ for all minimally-constrained CQs $q_\top$  with $q\subseteq q_\top$.
    \end{enumerate}
\end{restatable}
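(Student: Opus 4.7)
The plan is to derive both directions directly from the definitions, using the characterization of $\preceq^{\dom}$-specializations as containment-based specializations.

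For $(1) \Rightarrow (2)$, I would argue as follows. Suppose $q$ is a weakly most-general fitting CQ for $E$ and let $q_\top$ be any minimally-constrained CQ with $q \subseteq q_\top$. Then $q$ fits $E$ and $q \subseteq q_\top$, so the first condition of being a containment-based specialization for $(q_\top,E)$ holds. For the second condition, suppose for contradiction that there is a fitting CQ $q''$ with $q \subsetneq q'' \subseteq q_\top$. Then $q''$ is a fitting CQ with $q \subsetneq q''$, contradicting the weak most-generality of $q$. So $q$ is a $\preceq^{\dom}$-specialization for $(q_\top, E)$.

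For $(2) \Rightarrow (1)$, first note that $q$ fits $E$ since it is a $\preceq^{\dom}$-specialization for at least one $(q_\top,E)$ (and at least one minimally-constrained $q_\top$ with $q\subseteq q_\top$ exists, as we verify below). To show weak most-generality, let $q'$ be a fitting CQ with $q \subseteq q'$; I want to show $q \equiv q'$. The key step is to construct a minimally-constrained CQ $q_\top$ with $q' \subseteq q_\top$: by the safety condition, each answer variable $x_i$ of $q'$ occurs in some atom $R_i(\mathbf{u}_i)$ of $q'$; form $q_\top$ by taking, for each $i$, an atom $R_i(\mathbf{y}_i, x_i, \mathbf{z}_i)$ where $x_i$ occupies the same position(s) as in $R_i(\mathbf{u}_i)$ and $\mathbf{y}_i, \mathbf{z}_i$ are fresh distinct existential variables. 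The identity map on answer variables extends to a homomorphism $e_{q_\top} \to e_{q'}$, so $q' \subseteq q_\top$ by Chandra--Merlin, and hence $q \subseteq q_\top$ by transitivity. By hypothesis $(2)$, $q$ is a $\preceq^{\dom}$-specialization for $(q_\top, E)$, so there is no fitting CQ $q''$ with $q \subsetneq q'' \subseteq q_\top$. Since $q'$ fits $E$ and $q \subseteq q' \subseteq q_\top$, we must have $q \equiv q'$, as desired.

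The only mildly subtle ingredient is the construction of a minimally-constrained CQ $q_\top$ containing a given CQ $q'$. The construction itself is essentially forced by the definition of minimally-constrained CQ, but one has to verify that the required homomorphism exists and respects the answer variables; this follows immediately from the choice of the atoms $R_i(\mathbf{y}_i, x_i, \mathbf{z}_i)$ and the fact that the $\mathbf{y}_i,\mathbf{z}_i$ are fresh and distinct. Everything else is a direct application of the definitions and the correspondence between $\preceq^{\dom}$-specializations and containment-based specializations established earlier.
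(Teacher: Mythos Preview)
Your proof is correct and follows essentially the same approach as the paper's own proof: both directions are argued by contradiction via the containment-based characterization of $\preceq^{\dom}$-specializations, and in the $(2)\Rightarrow(1)$ direction the key step is to read off from the given $q'$ a minimally-constrained $q_\top$ with $q'\subseteq q_\top$. Your version is slightly more explicit than the paper's (you spell out the construction of $q_\top$ and the Chandra--Merlin argument, and you note why $q$ fits $E$), but the overall strategy is identical.
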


\medskip

For a given CQ $q$,
the set of all minimally-constrained CQs $q_\top$ with $q\subseteq q_\top$ can easily be constructed in 
in polynomial time (for fixed query arity).
This implies that the above proposition
can be viewed as a polynomial-time (Turing) reduction. 

\begin{example}[Example~\ref{ex:running-specialization} revisited]
    Let us revisit Example~\ref{ex:running-specialization} from the introduction.
    There, we claimed that there is no 
    $\preceq^{\dom}$-specialization for $(q,E)$.
    In light of Thm.~\ref{thm:reduction-specialization}, it suffices to argue 
    that there is no weakly most-general fitting CQ for $E$. We will not give a formal proof here, it suffices to consider CQs that
    describe an oriented $R$-paths of the form
    $\longrightarrow\cdot (\longrightarrow\cdot\longleftarrow)^n\cdot\longrightarrow$
    for increasing values of $n$. The resulting  infinite sequence $q_0\subseteq q_1 \subseteq q_2\subseteq\cdots$ of CQs (each of which fits $E$) can be used to disprove the existence of a weakly most-general fitting CQ, and hence of a $\preceq^{\dom}$-specialization for $(q,E)$.
\end{example}

Weakly most-general fitting CQs were studied in depth in \cite{pods2023:extremal}. Based on  results from~\cite{pods2023:extremal} and the above reductions, we obtain a number of results.
In particular, 
the following example shows that an annotated CQ may have infinitely many non-equivalent $\preceq^{\dom}$-specializations.

\begin{example} 
   \label{ex:dominance-specializations-infinite}
    Consider the CQ $q()\colondash R(x,y)$ and let $E$ consist of
    \begin{itemize}
        \item a positive example consisting of the facts $R(a,a), P_1(a), P_2(a)$, and 
        \item a negative example consisting of the facts $R(a,a), R(b,b), R(a,b), P_1(a), P_2(b)$. 
    \end{itemize}
    Note that the positive example is strictly speaking redundant as it belongs to $\extension{q'}$ for all CQs $q'$. It is added only for intuition.
    It follows from results in~\cite{pods2023:extremal} that 
    there are infinitely many  non-equivalent weakly most-general fitting CQs for $E$. 
    Indeed, for every $n\geq 1$, the  CQ
    $q_n() \colondash R(x_1, x_2), \ldots, R(x_{n-1}, x_n), P_2(x_1), P_1(x_n)$ is a weakly most-general fitting CQ for $E$. It follows
    by Thm.~\ref{thm:reduction-specialization}
    that there are infinitely many 
    $\preceq^{\dom}$-specializations for $(q,E)$. 
\end{example}

We also obtain a number of complexity
results. 

\begin{restatable}{corollary}{corspecializationcomplexity}
    \begin{enumerate}
        \item $\preceq^{\dom}$-\problem{specialization-existence} is ExpTime-complete. 
        \item $\preceq^{\dom}$-\problem{specialization-verification} is in $P^{NP}_{||}$ and is DP-hard already for  a bounded number of input examples. 
        \item $\preceq^{\dom}$-\problem{specialization-construction} is in 2ExpTime.
        \item There is a sequence $(q_n,E_n)_{n\in\mathbb{N}}$, where $q_n$ and $E_n$ are of total size polynomial in $n$, such that
        (i) there is a $\preceq^{\dom}$-specialization for $(q_n,E_n)$, and (ii) 
        every $\preceq^{\dom}$-specialization for $(q_n,E_n)$ is of size at least $2^n$.
    \end{enumerate}
\end{restatable}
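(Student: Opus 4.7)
The plan is to transfer each claim from the corresponding result about weakly most-general (wmg) fitting CQs proved in~\cite{pods2023:extremal}, using the two polynomial-time reductions provided by Theorems~\ref{thm:reduction-specialization} and~\ref{thm:wmg-as-dom-specialization}. The first reduction turns a specialization-question about $(q,E)$ into a wmg-question about $(E^+,E'^-)$ with $E'^-=\{e\in E^-\mid e\in\extension{q}\}$; the second turns a wmg-question about $E$ into specialization-questions over the polynomially many (for fixed arity) minimally-constrained CQs $q_\top$ with $q\subseteq q_\top$.

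For item~(1), the upper bound is immediate from Theorem~\ref{thm:reduction-specialization}: a $\preceq^{\dom}$-specialization exists iff $q$ fits $E^+$ and a wmg fitting CQ exists for $(E^+,E'^-)$. Building $E'^-$ uses $|E^-|$ containment tests, each in NP and thus subsumed by the ExpTime algorithm for wmg-existence from~\cite{pods2023:extremal}. For ExpTime-hardness, observe that, combining both theorems, a wmg fitting CQ for $E$ exists iff for some minimally-constrained $q_\top$ a $\preceq^{\dom}$-specialization exists for $(q_\top,E)$ (forward: any wmg $q$ is contained in some such $q_\top$ and is then a specialization; backward: any specialization yields a wmg fitting CQ by Theorem~\ref{thm:reduction-specialization}). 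Since the $q_\top$'s can be enumerated in polynomial time, this reduces wmg-existence to specialization-existence.

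For items~(2) and~(3), analogous translations yield the bounds. Verification of $q'$ as a specialization for $(q,E)$ reduces to wmg-verification: using that $q'\subseteq q$ implies $q\wedge q'\equiv q'$, one checks directly whether $q'$ is a wmg fitting CQ for $(E^+,E'^-)$, placing the problem in $P^{NP}_{||}$ by the corresponding result in~\cite{pods2023:extremal}. DP-hardness under bounded input transfers from the analogous wmg result via Theorem~\ref{thm:wmg-as-dom-specialization}, with the care that the theorem quantifies universally over $q_\top$; one circumvents this by choosing the schema/arity of the hard wmg instance so that there is a unique applicable $q_\top$, making the equivalence collapse to a single specialization-verification instance. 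Construction (3) is routine: compute $E'^-$, run the 2ExpTime wmg-construction algorithm of~\cite{pods2023:extremal} to obtain $q''$, and output $q\wedge q''$.

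For item~(4), take $(q_n,E_n)$ to be $(q_\top,E_n^\ast)$ where $E_n^\ast$ is a family of hard instances from~\cite{pods2023:extremal} witnessing the $2^n$ size lower bound for wmg fitting CQs and $q_\top$ is a minimally-constrained CQ that trivially fits $E_n^\ast$. By Theorem~\ref{thm:reduction-specialization}, every $\preceq^{\dom}$-specialization has the form $q_\top\wedge q''$ with $q''$ a wmg fitting CQ, which inherits the $2^n$ lower bound; adding the polynomially many atoms of $q_\top$ preserves the bound. The main obstacle throughout will be ensuring that the side condition of Theorem~\ref{thm:reduction-specialization} (no repeated answer variables in $q$) either holds or can be enforced by preprocessing, and that the universal quantifier over $q_\top$ in Theorem~\ref{thm:wmg-as-dom-specialization} does not obstruct the hardness reductions; both can be arranged but require careful choice of arity, schema, and minimally-constrained witnesses.
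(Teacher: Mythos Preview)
Your treatment of items~(1), (3), and (4) is essentially correct and matches the paper's approach. However, item~(2) has genuine gaps in both the upper and the lower bound.

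For the upper bound, you claim that verifying whether $q'$ is a $\preceq^{\dom}$-specialization for $(q,E)$ reduces to checking whether $q'$ itself is a weakly most-general fitting CQ for $(E^+,E'^-)$, via the identity $q\land q'\equiv q'$ when $q'\subseteq q$. This is false. Theorem~\ref{thm:reduction-specialization} only guarantees that $q'\equiv q\land q''$ for \emph{some} wmg $q''$; it does not follow that $q'$ itself is wmg. Concretely, with unary relations $P,Q,R$, let $q(x)\colondash P(x),R(x)$, $E^+=\{(\{P(a),Q(a),R(a)\},a)\}$, and $E^-=\{(\{P(a),R(a)\},a)\}$. Then $q'(x)\colondash P(x),R(x),Q(y)$ is the unique $\preceq^{\dom}$-specialization, but it is not wmg for $(E^+,E'^-)$ since $q^*(x)\colondash P(x),Q(y)$ fits and strictly contains $q'$. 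The paper instead argues in $P^{NP}_{||}$ by first computing $E'^-$ with a parallel round of NP calls, and then \emph{guessing} a polynomial-size wmg witness $q''$ together with homomorphisms certifying $q\land q''\equiv q'$ (using the size bound in Theorem~\ref{thm:reduction-specialization}), invoking the NP-completeness of wmg verification from~\cite{pods2023:extremal}.

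For the lower bound, you propose to transfer DP-hardness from the ``analogous wmg result'' via Theorem~\ref{thm:wmg-as-dom-specialization}. But wmg verification is only NP-complete in~\cite{pods2023:extremal}, not DP-hard, so this transfer cannot yield DP-hardness. The paper obtains DP-hardness by a much simpler observation that does not go through wmg at all: for any proximity pre-order, $q$ is a $\preceq$-specialization for $(q,E)$ iff $q$ fits $E$, and deciding whether a CQ fits a single positive and a single negative example is already DP-hard.
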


\medskip

We do not know what happens in case 1 and 3 if the number of input examples is bounded. 

\subsection{Containment-Based Query Repairs}

We now move to the general setting of
\emph{$\preceq^{\dom}$-repairs} $q'$ for an annotated CQ $(q,E)$
where we no longer require
that $q'\subseteq q$ or $q\subseteq q'$.
Our first result on $\preceq^{\dom}$-repairs provides a reduction to the case
with only positive examples or only 
negative examples.

\begin{restatable}{proposition}{propreducetoposneg}
Let $(q,E)$ be an annotated CQ with $E=(E^+,E^-)$. Then a CQ $q'$ is a $\preceq^{\dom}$-repair for $(q,E)$ if and only if one of following holds:
\begin{enumerate}
    \item[(a)] $q'$ is
a $\preceq^{\dom}$-repair for $(q,E^+)$ and $q'$ fits $E^-$, or 
\item[(b)] $q'$ is 
 a $\preceq^{\dom}$-repair for $(q,\widehat{E^-})$ and $q'$ fits $E^+$.
\end{enumerate}
 where $\widehat{E^-} = \{e\times\Pi_{e'\in E^+}(e')\mid e\in E^-\}$ if
 $E^+\neq\emptyset$ and  $\widehat{E^-} = E^-$ otherwise.
\end{restatable}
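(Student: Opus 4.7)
The plan is to prove the ``if'' and ``only if'' directions separately. The crux of the argument is a clean characterization of what it means for a CQ to fit $\widehat{E^-}$, arising from the universal property of direct products for data examples: for any CQ $r$ and any $e_-\in E^-$, one has $e_r \to e_-\times \Pi_{e_+\in E^+} e_+$ if and only if $e_r \to e_-$ and $e_r \to e_+$ for every $e_+ \in E^+$. Writing $F_X$ for the set of CQs fitting $X$, my first step is the lemma that $r$ fits $\widehat{E^-}$ (as negative examples) if and only if $r$ fails to fit $E^+$ or $r$ fits $E^-$, that is, $F_{\widehat{E^-}} = F_{E^+}^{c} \cup F_{E^-}$, whence $F_E = F_{E^+} \cap F_{\widehat{E^-}}$.

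The backward direction then reduces to a short verification. If (a) holds, $q'$ fits $E^+$ (because it is a repair for $(q,E^+)$) and fits $E^-$ by hypothesis, so $q' \in F_E$; any hypothetical $q^* \prec^{\dom}_q q'$ with $q^* \in F_E$ would lie in $F_{E^+}$, contradicting the minimality of $q'$ in $F_{E^+}$. Case (b) is handled symmetrically: $q'$ fits $E^+$ (hypothesis) and $\widehat{E^-}$ (repair), so by the lemma $q'$ fits $E^-$ and hence $E$; any $q^* \in F_E$ with $q^* \prec^{\dom}_q q'$ lies in $F_{\widehat{E^-}}$, contradicting minimality there.

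For the forward direction, assume $q'$ is a $\preceq^{\dom}$-repair for $(q,E)$; so $q'\in F_E\subseteq F_{E^+}\cap F_{\widehat{E^-}}$. If $q'$ is $\preceq^{\dom}_q$-minimal in $F_{E^+}$, (a) holds. Otherwise, fix a witness $q^0 \prec^{\dom}_q q'$ with $q^0 \in F_{E^+}$; by minimality of $q'$ in $F_E$, $q^0 \notin F_{E^-}$, and a short argument from $\extension{q}\oplus\extension{q^0}\subseteq \extension{q}\oplus\extension{q'}$ forces any offending $e_-^0 \in E^- \cap \extension{q^0}$ to lie in $\extension{q}$ (otherwise $e_-^0 \in \extension{q'}$ would follow, contradicting $q' \in F_{E^-}$). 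I then aim to show $q'$ is minimal in $F_{\widehat{E^-}}$, yielding (b). Suppose instead $q^* \prec^{\dom}_q q'$ lies in $F_{\widehat{E^-}}$. The lemma rules out $q^* \in F_{E^+}$ (else $q^* \in F_E$, a direct contradiction), so $q^*$ misses some $e_+^* \in E^+\setminus \extension{q^*}$, and a symmetric disagreement argument forces $e_+^* \notin \extension{q}$.

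The main obstacle, and hardest step, is now to combine $q^0$ and $q^*$ into a single CQ $q^{**} \in F_E$ with $q^{**} \prec^{\dom}_q q'$, thereby contradicting $q'$ being a repair for $(q,E)$. My approach is to build $q^{**}$ at the level of canonical examples, using the inclusion $\extension{q}\oplus\extension{r_1\land r_2} \subseteq (\extension{q}\oplus\extension{r_1}) \cup (\extension{q}\oplus\extension{r_2})$ (proved by case-splitting on $e\in\extension{q}$ versus $e\notin\extension{q}$) to ensure that any conjunction of witnesses remains below $q'$ in $\preceq^{\dom}_q$. The conjunction must then be arranged to simultaneously cover every $e_+\in E^+$ (borrowing coverage from $q^0$, possibly reinforced by the most-specific fitting CQ for $E^+$ whose canonical example is $\Pi_{e_+\in E^+} e_+$) and exclude every $e_-\in E^-$ (borrowing blocking behavior from $q^*$). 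The most delicate final step is strictness: exhibiting a concrete example in $\extension{q}\oplus\extension{q'}$ that is not in $\extension{q}\oplus\extension{q^{**}}$, using the strict descents supplied by $q^0$ and $q^*$ together with the fact that $e_+^*\notin\extension{q}$ and $e_-^0\in\extension{q}$.
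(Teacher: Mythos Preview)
Your backward direction is correct and complete; indeed the paper's proof only treats the forward direction, so you are more thorough there.

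For the forward direction, however, there is a genuine gap. The construction of $q^{**}$ by conjunction cannot work as sketched: conjunction only \emph{shrinks} the extension, so any CQ of the form $r\land q^*$ inherits $e_+^*\notin\extension{r\land q^*}$ from $q^*$ and hence fails $E^+$. ``Reinforcing'' with the product CQ $q_\Pi$ does not help for the same reason. Dually, $q^*$ only fits $\widehat{E^-}$, not $E^-$, so you cannot borrow $E^-$-blocking from it via conjunction either. The strictness step you flag as ``most delicate'' is thus not the real obstacle; the obstacle is that you never obtain a CQ in $F_E$ at all.

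The fix is much simpler than your combinatorial plan, and you have already derived the key fact. From the failure of (b) you correctly deduced $e_+^*\notin\extension{q}$, i.e.\ $q\notin F_{E^+}$. By your own lemma this gives $q\in F_{\widehat{E^-}}$; and $q'\in F_E\subseteq F_{\widehat{E^-}}$. Hence $q$ and $q'$ agree on every example in $\widehat{E^-}$, and since $\extension{q}\oplus\extension{q^0}\subseteq\extension{q}\oplus\extension{q'}$, so does $q^0$. Thus $q^0\in F_{\widehat{E^-}}$, and together with $q^0\in F_{E^+}$ your lemma yields $q^0\in F_{E^-}$, so $q^0\in F_E$ with $q^0\prec^{\dom}_q q'$ --- the desired contradiction. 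No new $q^{**}$ is needed: $q^0$ already does the job.

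This is essentially how the paper proceeds, except that it organizes the argument as an upfront case split on whether $q$ fits $E^+$: if $q\in F_{E^+}$ one proves (b) directly, and if $q\notin F_{E^+}$ one proves (a) directly, in each case using the agreement-transfer argument above. Your route (assume not-(a), aim for (b)) is logically equivalent but obscures the role of $q$'s behaviour on $E^+$, which is what makes the proof go through.
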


This is promising, as
one might hope that case (a) above could
be reduced to a statement about $\preceq^{\dom}$-generalizations, since we are repairing w.r.t.~a set of positive examples, and
likewise for case (b) and $\preceq^{\dom}$-specializations. For case (b) this approach indeed works:

\begin{restatable}{proposition}{propnegativespecialization}
\label{prop:negativespecialization}
  For all annotated CQs $(q,E)$, if $E$ consists of  negative examples only, then every $\preceq^{\dom}$-repair for $(q,E)$ is a $\preceq^{\dom}$-specialization for $(q,E)$.
\end{restatable}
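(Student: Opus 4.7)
The plan is to show the stronger statement that any $\preceq^{\dom}$-repair $q'$ for $(q,E)$ with $E=(\emptyset,E^-)$ must actually satisfy $q'\subseteq q$; the minimality condition in the definition of $\preceq^{\dom}$-specialization then follows for free, since the class of competitor queries is only restricted further.

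The core move is to compare $q'$ with the conjunction $q\land q'$ (which is well-defined since $q$ and $q'$ have the same arity). First I would verify that $q\land q'$ fits $E$: because $E$ contains only negative examples and $q\land q'\subseteq q'$, any negative example rejected by $q'$ is also rejected by $q\land q'$. Next, using that $\extension{q\land q'}=\extension{q}\cap\extension{q'}$, I would carry out a short set-theoretic calculation:
\[
\extension{q}\oplus\extension{q\land q'}
\;=\;\extension{q}\setminus\extension{q\land q'}
\;=\;\extension{q}\setminus\extension{q'}
\;\subseteq\;\extension{q}\oplus\extension{q'},
\]
which gives $q\land q'\preceq^{\dom}_q q'$ in general.

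The key step is strictness. Suppose toward a contradiction that $q'\not\subseteq q$. Then there is some data example $e\in\extension{q'}\setminus\extension{q}$, and such an $e$ lies in $\extension{q}\oplus\extension{q'}$ but not in $\extension{q}\oplus\extension{q\land q'}$ (since the latter is contained in $\extension{q}$). Hence $q\land q'\prec^{\dom}_q q'$, while $q\land q'$ still fits $E$, contradicting the assumption that $q'$ is a $\preceq^{\dom}$-repair. This forces $q'\subseteq q$.

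Finally, to conclude that $q'$ is a $\preceq^{\dom}$-specialization, I would observe that conditions (i) of the two definitions differ only in the extra clause $q'\subseteq q$, and this we have just established; condition (ii) for specialization quantifies over a subclass of the competitors used in the repair definition, so the repair-minimality of $q'$ immediately implies specialization-minimality. I do not foresee a real obstacle here: the only semantic fact used is the standard identity $\extension{q\land q'}=\extension{q}\cap\extension{q'}$, together with the hypothesis that $E$ contains no positive examples (which is what lets us freely replace $q'$ by the smaller query $q\land q'$ while preserving fit).
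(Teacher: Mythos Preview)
Your proposal is correct and follows essentially the same approach as the paper: both argue by contradiction, comparing a putative repair $q'$ with $q\land q'$ and using the identity $\extension{q\land q'}=\extension{q}\cap\extension{q'}$ to show that $q\land q'$ is a strictly closer fitting query whenever $q'\not\subseteq q$. Your write-up is in fact a bit more explicit than the paper's, spelling out the strictness step and the passage from repair-minimality to specialization-minimality (the paper silently defers the latter to Lemma~\ref{lem:genspec}).
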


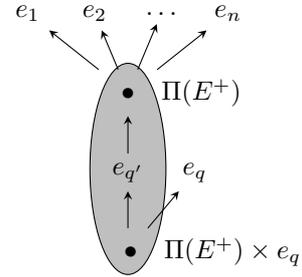
\begin{wrapfigure}{r}{.27\linewidth}
\vspace{-6mm}
\hspace{-11mm}
\begin{tikzpicture}[level distance=30pt, sibling distance=25pt]
  \filldraw[fill=lightgray] (0,1.1) ellipse (.5cm and 1.4cm);
  \node {$~~~~~~~~~~~~~~~~~~~ \bullet ~~~\Pi(E^+)\times e_q$} [grow'=up]
    child[opacity=0] {node {~}}
    child[-stealth] {
        node {$e_{q'}$}
        child {
            node {$~~~~~~~~~~~~ \bullet ~~~ \Pi(E^+)$}
            child[thin] {node {$e_1$}}
            child[thin] {node {$e_2$}}
            child[thin] {node {$\ldots$}}
            child[thin] {node {$e_n$}}
        }
    }
    child[-stealth,thin] {node {$e_q$}};
\end{tikzpicture}
\caption{Picture of the condition in Thm.~\ref{thm:dom-repairs-positive-characterization}.}
\vspace{-5mm}
\label{fig:dom-repairs-positive}
\end{wrapfigure}

The same, however, does \emph{not} hold for 
case (a), as $\preceq^{\dom}$-repairs w.r.t.~positive examples are
not necessarily $\preceq^{\dom}$-generalizations: 

\begin{example}\label{ex:dominance-problematic}
    Consider a schema consisting of  unary relations $P,Q,R$. Let 
    $q(x) \colondash P(x), Q(y)$, and
    let $I$ be the instance  consisting of the facts $P(a), R(b)$.
    There are, up to equivalence,  two $\preceq^{\dom}$-repairs
    for $(q,E^+)$ where $E^+=\{(I,a)\}$, namely the queries
$q'_1(x) \colondash  P(x)$ and $q'_2(x) \colondash  P(x), R(y)$.
    Of these, only the first is a $\preceq^{\dom}$-generalization. 
    It seems counter-intuitive that $q'_2$ is a $\preceq^{\dom}$-repair,
    since $q'_2$ is, intuitively, closer to $q$. However, there are 
    instances on which $q'_2$ agrees with $q$ but $q'_1$ does not. 
    An example is the instance
    consisting only of the fact $P(a)$. \looseness=-1

\end{example}

The following
result characterizes the
$\preceq^{\dom}$-repairs for a CQ and 
set of positive examples.

\begin{restatable}{theorem}{thmdomrepairspositivecharacterization}
\label{thm:dom-repairs-positive-characterization}
Let $(q,E^+)$ be an annotated CQ where $E^+$ consists only of positive examples. Then a CQ $q'$  is a $\preceq^{\dom}$-repair for $(q,E^+)$ if and only if 
one of the following holds: 
    \begin{enumerate}
        \item $q$ fits $E^+$ and $q'$ is equivalent to $q$, or
        \item
     $q$ does not fit $E^+$, $q'$ fits $E^+$, and $(\Pi(E^+) \times e_{q\land q'}) \to e_{q'}$.
    \end{enumerate}
    In case of Boolean CQs, item 2 can be replaced by
    \begin{enumerate}
        \item[{\bf 2'.}]
     $q$ does not fit $E^+$, and $(\Pi(E^+) \times e_{q}) \to e_{q'} \to \Pi(E^+)$
     (cf.~Figure~\ref{fig:dom-repairs-positive}).
    \end{enumerate}
\end{restatable}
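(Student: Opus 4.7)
The plan is to separate the trivial Case~1 from the substantive Case~2, handle Case~2 through a single auxiliary CQ that drives both directions, and then derive the Boolean simplification.

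Case~1 ($q$ fits $E^+$) will be immediate from conservativeness: $q$ fits and minimises $\preceq^{\dom}_q$, while cancellativity of symmetric difference ($A\oplus X=A\oplus Y\Leftrightarrow X=Y$) together with syntax independence force every other repair to have the same extension as $q$, hence to be equivalent. For Case~2 (where $q$ does not fit $E^+$), I first record the key observation that any fitting CQ $q'$ satisfies $q'\not\subseteq q$, because otherwise composing $e_q\to e_{q'}$ with $e_{q'}\to e$ (for $e\in E^+$) would yield $e_q\to e$, contradicting that $q$ does not fit $e$.

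The central device will be the auxiliary CQ $q^*:=q_{\Pi(E^+)\times e_{q\land q'}}$, whose canonical example is the given product, and I plan to establish three facts. (a)~$q^*$ fits $E^+$, via the composed projection $\Pi(E^+)\times e_{q\land q'}\to\Pi(E^+)\to e$ for $e\in E^+$. (b)~$q^*\subseteq q'$: since $q'$ fits $E^+$ we have $e_{q'}\to\Pi(E^+)$, and trivially $e_{q'}\to e_{q\land q'}$, so the universal property of the direct product yields $e_{q'}\to e_{q^*}$ and Chandra--Merlin gives the containment. (c)~$q^*\preceq^{\dom}_q q'$: at any example $e$ where $q,q'$ both hold, $e_{q\land q'}\to e$, and composing with the projection $e_{q^*}\to e_{q\land q'}$ gives $e_{q^*}\to e$; at any $e$ where $q,q'$ both fail, $e\notin\extension{q'}$ combined with (b) forces $e\notin\extension{q^*}$.

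For the forward direction, if $q'$ is a repair but $\Pi(E^+)\times e_{q\land q'}\not\to e_{q'}$ then $q'\not\subseteq q^*$, and the witness $e=e_{q'}$ shows $q'\not\preceq^{\dom}_q q^*$: trivially $e_{q'}\in\extension{q'}$, while $e_{q'}\notin\extension{q}$ (by $q'\not\subseteq q$) and $e_{q'}\notin\extension{q^*}$ (by assumption), so $e_{q'}\in\extension{q}\oplus\extension{q'}$ but $e_{q'}\notin\extension{q}\oplus\extension{q^*}$; combined with (a) and (c) this yields $q^*\prec^{\dom}_q q'$, contradicting the repair status of $q'$. For the reverse direction, the condition together with (b) gives $q'\equiv q^*$; if some fitting $q^{**}$ satisfied $q^{**}\prec^{\dom}_q q'$, then applying the agreement relation at $e=e_{q\land q'}$ (where $q,q'$ both hold) would force $e_{q^{**}}\to e_{q\land q'}$, which with $e_{q^{**}}\to\Pi(E^+)$ and the universal product property gives $e_{q^{**}}\to e_{q^*}\equiv e_{q'}$, hence $q'\subseteq q^{**}$; evaluating the agreement relation at $e=e_{q^{**}}$ then rules out $q'$ failing there (otherwise $q,q'$ both failing would force $q^{**}$ to fail at its own canonical example), so $q^{**}\subseteq q'$, giving $q^{**}\equiv q'$ and contradicting strictness.

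The Boolean simplification of item~2 into item~2' will use the decomposition $e_{q\land q'}\cong e_q\sqcup e_{q'}$ and distributivity of direct product over disjoint union: the homomorphism condition splits into $(\Pi(E^+)\times e_q)\to e_{q'}$ and $(\Pi(E^+)\times e_{q'})\to e_{q'}$, the latter trivial by projection, while in the Boolean setting ``$q'$ fits $E^+$'' coincides with $e_{q'}\to\Pi(E^+)$. The main obstacle I anticipate is the ``both fail'' case of~(c): a direct structural argument about homomorphisms out of $\Pi(E^+)\times e_{q\land q'}$ does not go through, because the product can map into structures into which neither factor maps; the workaround is to deduce non-satisfaction indirectly via the containment $q^*\subseteq q'$ from~(b).
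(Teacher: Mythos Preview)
Your proposal is correct and follows essentially the same route as the paper: your auxiliary $q^*$ is the paper's $q'''$, facts (a)--(c) match the paper's claims about it, and both directions of Case~2 as well as the Boolean simplification via $e_{q\land q'}\cong e_q\sqcup e_{q'}$ proceed by the same arguments (the paper appeals to Lemma~\ref{lem:dom-repair-equiv} where you exhibit the explicit witness $e_{q'}$, but these are equivalent). The one technicality the paper makes explicit that you omit is verifying that $q^*$ satisfies the safety condition for CQs; this is immediate from your fact~(b), since $e_{q'}\to e_{q^*}$ guarantees that each answer coordinate of $e_{q^*}$ occurs in a fact.
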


\begin{example}[Example~\ref{ex:running-generalization} revisited]
    Using Thm.~\ref{thm:dom-repairs-positive-characterization}, 
    one can easily verify the claim that
    the CQ expressing ``$x$ lies on a directed $R$-cycle of length 3'' is 
    a $\preceq^{\dom}$-repair for $(q,E)$. The same  holds  (it can be shown, with some more work) for the 
    CQ expressing ``$x$ lies on a directed $R$-cycle of length 6'', whose canonical example
    lies homomorphically in-between $(I,a)$ (the cycle of length 3) and $(I,a)\times e_q$ (the cycle of length 12). 
\end{example}

Using Thm.~\ref{thm:dom-repairs-positive-characterization}, we can show that there can be infinitely many $\preceq^{\dom}$-repairs for a CQ and set of positive examples.

\begin{example}
\label{ex:dom-pos-inf-repairs}
    Over a schema consisting of a
    unary relation $P$ and a binary relation $R$, consider the Boolean CQ 
    $q() \colondash P(x)$
    and the set of positive examples $E^+=\{I\}$, where $I$ consists of the 
    fact $R(b,b)$. It follows
    from Thm.~\ref{thm:dom-repairs-positive-characterization} that
    \emph{every} CQ $q'$ that fits $E^+$
    is a $\preceq^{\dom}$-repair. There are infinitely
many pairwise non-equivalent such CQs. 
\end{example}

The next result shows another connection between $\preceq^{\dom}$-repairs
and $\preceq^{\dom}$-generalizations:

\begin{restatable}{proposition}{propgenasrepair}
The $\preceq^{\dom}$-generalizations for 
an annotated CQ $(q,E)$ are precisely the
$\preceq^{\dom}$-repairs for $(q,E')$, 
where $E'$ extends $E$ with the positive example $e_q$.
\end{restatable}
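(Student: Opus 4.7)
The plan is to prove this by directly unfolding the definitions of $\preceq^{\dom}$-generalization and $\preceq^{\dom}$-repair and observing that, under the addition of $e_q$ as a positive example, the set of candidate CQs (i.e., those satisfying condition (i) in each definition) coincides. Since both notions then use the same proximity pre-order $\preceq^{\dom}_q$, the minimal elements must coincide as well.

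In more detail, the key ingredient is the following bridge between fitting and containment. For any CQ $q'$, the data example $e_q$ lies in $\extension{q'}$ iff $e_{q'}\to e_q$, by the characterization of $\extension{\cdot}$ in terms of homomorphisms from canonical examples, which in turn holds iff $q\subseteq q'$ by the Chandra-Merlin theorem. Consequently, if $E'$ is obtained from $E=(E^+,E^-)$ by adding $e_q$ to $E^+$, then for every CQ $q'$ (of the appropriate arity), $q'$ fits $E'$ if and only if $q'$ fits $E$ and $q\subseteq q'$.

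Armed with this equivalence, the proof is immediate. The definition of $\preceq^{\dom}$-generalization for $(q,E)$ requires that (i)~$q'$ fits $E$ and $q\subseteq q'$, and (ii)~no $q''$ satisfying (i) lies strictly below $q'$ in $\preceq^{\dom}_q$. The definition of $\preceq^{\dom}$-repair for $(q,E')$ requires that (i$'$)~$q'$ fits $E'$, and (ii$'$)~no $q''$ satisfying (i$'$) lies strictly below $q'$ in $\preceq^{\dom}_q$. By the bridge above, conditions (i) and (i$'$) pick out exactly the same set of CQs, so (ii) and (ii$'$) quantify over the same family. Therefore $q'$ is a $\preceq^{\dom}$-generalization for $(q,E)$ iff $q'$ is a $\preceq^{\dom}$-repair for $(q,E')$.

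There is no real obstacle here; the argument is a definition chase, and the only subtlety worth flagging is that $e_q$ is treated as a \emph{positive} example in $E'$, so that the constraint ``$q\subseteq q'$'' is absorbed into ``$q'$ fits $E'$'' rather than appearing as a separate clause.
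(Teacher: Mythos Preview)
Your proof is correct and takes essentially the same approach as the paper, which also rests entirely on the observation that a CQ $q'$ fits $e_q$ as a positive example iff $q\subseteq q'$. The paper's proof is a single sentence stating this fact; you have simply unfolded the consequent definition-chase in full.
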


Example~\ref{ex:dominance-problematic} 
and Example~\ref{ex:dom-pos-inf-repairs} show that
$\prec^{\dom}$-repairs can behave counterintuitively. 
Various algorithmic results regarding the verification, existence and construction of 
$\preceq^{\dom}$-repairs can be derived from the above characterizations and reductions, but we  refrain from stating them here as they seem of little value given the problematic behaviour of $\preceq^{\dom}$-repairs.

\section{Query Repairs Based on Distance Metrics}
\label{sec:distance-based}

In this section, we study proximity pre-orders 
based on distance metrics. In particular, we 
 propose and study a variant of \emph{edit distance} for CQs. We also study proximity
pre-orders based on several
other natural distance metrics. Our main findings can be summarized as follows:
edit distance, suitably defined, yields a proximity pre-order that avoids some of the problems of $\preceq^{\dom}$. We also show that other natural distance metrics induce proximity pre-orders that are less well behaved.

\begin{definition}[Semantic distance metric]
A \emph{semantic distance metric for CQs} is a function $dist(\cdot,\cdot)$
from pairs of CQs (of the same arity) to non-negative real numbers, satisfying:
\begin{enumerate}
\item $dist(q_1,q_2) = dist(q_2,q_1)$,
\item $dist(q_1,q_2) = 0$ iff $q_1$ and $q_2$ are equivalent,
\item $dist(q_1,q_2) \leq dist(q_1,q_3)+dist(q_3,q_2)$ (the triangle inequality).
\end{enumerate}
If all the conditions are met except for the 
 \emph{only-if} direction of 2,  we say that
 $dist$ is a \emph{weak semantic distance metric for CQs}.
\end{definition}

One can think of a semantic distance metric for CQs as a distance metric (in the standard sense) on the equivalence classes of CQs. Every semantic distance metric, and in fact every weak semantic distance metric, induces a pre-order.



\begin{definition}[Pre-order induced by a semantic distance metric]
Let $dist$ be a weak semantic distance metric for CQs.
We define  $\preceq^{dist}$ as follows: $q'\preceq^{dist}_q q''$ iff
$dist(q,q')\leq dist(q,q'')$. 
%
\end{definition}

%

\removespace\begin{restatable}{proposition}{propdistorderproperties}
\label{prop:dist-order-properties}
    Let $dist$ be a weak semantic distance metric for CQs. Then $\preceq^{dist}$ is a proximity pre-order. 
\end{restatable}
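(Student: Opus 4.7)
The plan is to verify directly the three defining properties: that each $\preceq^{dist}_q$ is a pre-order, conservativeness, and syntax independence.

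First, I would observe that reflexivity and transitivity of $\preceq^{dist}_q$ transfer immediately from the order $\leq$ on the non-negative reals: $dist(q,q') \leq dist(q,q')$ gives reflexivity, and transitivity of $\leq$ on $\mathbb{R}_{\geq 0}$ gives transitivity of $\preceq^{dist}_q$. So each $\preceq^{dist}_q$ is a pre-order.

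Next, for conservativeness, I would use the \emph{if}-direction of property~2 of a weak semantic distance metric: applied to $q_1 = q_2 = q$ (which are trivially equivalent), it yields $dist(q,q) = 0$. Since $dist$ takes values in $\mathbb{R}_{\geq 0}$, it follows that $dist(q,q) \leq dist(q,q')$ for every $q' \in \mathcal{L}$, which is exactly $q \preceq^{dist}_q q'$.

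The main step, and the only one that is not wholly routine, is syntax independence. I would prove the auxiliary lemma that $dist(q_1,q_2)$ depends only on the equivalence classes of $q_1$ and $q_2$: if $q_1 \equiv q_1'$, then the \emph{if}-direction of property~2 gives $dist(q_1,q_1') = 0$, and the triangle inequality yields
\[
  dist(q_1,q_2) \;\leq\; dist(q_1,q_1') + dist(q_1',q_2) \;=\; dist(q_1',q_2),
\]
and symmetrically the reverse inequality, so $dist(q_1,q_2) = dist(q_1',q_2)$. Combining this with symmetry (property~1) shows that replacing any of $q_1, q_2, q_3$ by an equivalent query leaves the values $dist(q_2,q_1)$ and $dist(q_2,q_3)$ unchanged, so the condition $dist(q_2,q_1) \leq dist(q_2,q_3)$ is preserved, establishing syntax independence. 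Note that the argument never invokes the \emph{only-if} direction of property~2, which is why the result holds already for \emph{weak} semantic distance metrics; this is the subtle point worth flagging, but it is not really an obstacle.
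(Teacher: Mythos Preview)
Your proof is correct and follows essentially the same approach as the paper's own proof: conservativeness from the if-direction of property~2, and syntax independence from property~2 together with the triangle inequality (and symmetry). The paper's proof is terser and omits the routine check that each $\preceq^{dist}_q$ is a pre-order, but your explicit spelling-out of the invariance-of-$dist$-under-equivalence lemma is exactly what the paper's sketch is pointing at.
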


%
%
%

We  study $\preceq^{dist}$-repairs
for several distance metrics.
Besides the algorithmic problems of \emph{repair existence, verification and construction}, 
we also consider the following natural fitting problem that is closely related to query repairs based on distance metrics:

\medskip
\par\noindent\fbox{\parbox{\columnwidth}{
\problem{$dist$-bounded fitting existence}

\smallskip

\textbf{Input}:  an annotated CQ $(q,E)$ and  a distance bound $d \geq 0$.

\textbf{Output}: \emph{Yes} if there is a CQ that fits~$E$ such that $dist(q,q') \leq d$, \emph{No} otherwise
}}

\subsection{Edit Distance}

A naive definition of the \emph{edit distance}  of two CQs $q,q'$ would be the number of 
atoms that appear in one of the two CQs but not in the other, that is, $|I_{q} \oplus I_{q'}|$.
It is easy to see that 
this is not a semantic distance metric: it is not invariant under logical equivalence, because simple syntactic changes such as  renaming a quantified variable, which do not affect the semantics of the query, affect the edit distance. This can be fixed, however, by (i) taking homomorphism cores 
(i.e., minimizing the CQs), 
and (ii) working modulo bijective variable renamings. 
\looseness=-1

This leads to the 
following definition. For 
simplicity, in this section we  restrict attention  to CQs whose
sequence of answer variables
is repetition-free (a restriction that could be lifted at the expense of a more intricate definition of edit distance, cf.~Remark~\ref{rem:finer-edit-distance}).

\begin{definition}[Edit distance for CQs]
\label{def:editdist}
Given CQs $q_1(x_1, \ldots, x_k)$ and $q_2(y_1, \ldots, y_k)$, 
\[\editdist(q_1,q_2) = \!\!\!\!
\min_{\text{\begin{tabular}{c}$\rho$ a bijective variable renaming  \\ with $\rho(y_i)=x_i$ for $i=1\ldots k$\end{tabular}}} \!\!\!\!|\core(e_{q_1})\oplus \core(e_{\rho(q_2)})|
\]
where $e_1\oplus e_2$ denotes the set of facts occurring in example $e_1$ and not in $e_2$ or vice versa.
\end{definition}


\begin{example}
 Consider the Boolean CQs
   $$
   \begin{array}{rcl}
      q_1() &\colondash& R(x_1,x_2),R(x_1,x_3),R(x_2,x_4),R(x_3,x_4) \\[1mm]
      q_2() &\colondash& R(x_1,x_2),R(x_1,x_3),R(x_2,x_4),R(x_3,x_4), A(x_2), B(x_3)
   \end{array}
$$
Note that $q_1$ is not a core, but $q_2$ is. The core of $q_1$ is obtained by
dropping the second and fourth atom. 
    Thus $\editdist(q_1,q_2)=4$. 
    Thus, perhaps surprisingly, $\editdist(q_1,q_2)$
    can be larger than the naive edit distance of $q_1$ and $q_2$ (which, in this case, is 2).
\end{example}

\removespace\begin{restatable}{proposition}{propeditdistancemetric}
    \label{prop:edit-distance-metric}
    $\editdist$ is a semantic distance metric.
\end{restatable}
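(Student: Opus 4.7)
The plan is to verify symmetry, zero-distance-iff-equivalent, and the triangle inequality in turn. Since cores are unique up to isomorphism and the definition minimises over all bijective renamings, I may assume throughout that $q_1,q_2,q_3$ are themselves cores, and write $I_q$ for the set of facts of the canonical example of $q$; the minimum then ranges over bijective renamings $\rho$ with $\rho(y_i)=x_i$, and the quantity minimised is $|I_{q_1}\oplus\rho(I_{q_2})|$.

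For \emph{symmetry}, the map $\rho\mapsto\rho^{-1}$ is a bijection between legal renamings for $\editdist(q_1,q_2)$ and those for $\editdist(q_2,q_1)$: if $\rho:\mathit{Vars}(q_2)\to W$ sends $y_i\mapsto x_i$, then $\rho^{-1}:W\to\mathit{Vars}(q_2)$ sends $x_i\mapsto y_i$. Since $\oplus$ is symmetric and invariant under bijections, the two minima agree. For the \emph{zero-iff-equivalent} property, if $\editdist(q_1,q_2)=0$ then some legal $\rho$ satisfies $I_{q_1}=\rho(I_{q_2})$, so $\rho$ is an answer-variable-preserving isomorphism between the (core) CQs, hence $q_1\equiv q_2$. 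Conversely, if $q_1\equiv q_2$ then the Chandra--Merlin theorem gives homomorphisms between the canonical examples in both directions mapping the answer tuples to each other; composing them yields endomorphisms of the cores which, since cores admit no proper retracts, must be automorphisms. Thus the two cores are isomorphic via a bijection fixing the answer tuple, and that bijection witnesses $\editdist(q_1,q_2)=0$.

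For the \emph{triangle inequality}, let $\rho_{13}$ and $\rho_{32}$ attain $\editdist(q_1,q_3)$ and $\editdist(q_3,q_2)$ respectively, and let $W$ be the image of $\rho_{32}$. Extend $\rho_{13}$ to a bijection $\tilde\rho_{13}$ defined on $\mathit{Vars}(q_3)\cup W$ by mapping the variables in $W\setminus\mathit{Vars}(q_3)$ injectively to fresh variables. This extension still sends $z_i$ to $x_i$ (where $z_1,\ldots,z_k$ are the answer variables of $q_3$) and satisfies $\tilde\rho_{13}(I_{q_3})=\rho_{13}(I_{q_3})$, since $I_{q_3}$ uses only variables in $\mathit{Vars}(q_3)$. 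Then $\rho:=\tilde\rho_{13}\circ\rho_{32}$ is a legal bijective renaming witnessing some value of $\editdist(q_1,q_2)$. Applying the triangle inequality for symmetric difference of sets to $A=I_{q_1}$, $B=\rho_{13}(I_{q_3})$, $C=\rho(I_{q_2})$, and using that $|B\oplus C|=|I_{q_3}\oplus\rho_{32}(I_{q_2})|=\editdist(q_3,q_2)$ by bijectivity of $\tilde\rho_{13}$, one obtains
\[
\editdist(q_1,q_2)\leq|A\oplus C|\leq|A\oplus B|+|B\oplus C|=\editdist(q_1,q_3)+\editdist(q_3,q_2).
\]

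The main obstacle is the variable-bookkeeping in the triangle-inequality step: one must compose the two optimal renamings into a single legal renaming, which requires extending the domain of $\rho_{13}$ to accommodate the image of $\rho_{32}$ while preserving its effect on $I_{q_3}$. Once this is done, the rest follows from the standard fact that symmetric difference is a metric on sets.
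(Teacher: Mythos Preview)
Your proof is correct and follows essentially the same approach as the paper's proof, which is also a sketch built on the same three verifications. The main difference is that you are more explicit about the domain issue when composing renamings in the triangle-inequality step: the paper simply says ``let $\rho''$ be the composition of $\rho'$ followed by $\rho$'' without discussing why this composition is well-defined, whereas you spell out the extension $\tilde\rho_{13}$ that makes the composition legal and still preserves the value on $I_{q_3}$.
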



We next take a look at the complexity of computing edit distance.
\begin{restatable}{proposition}{proptestingeditdist}
\label{prop:testingeditdist}
    Testing whether $\editdist(q_1,q_2)\leq n$
    (on input $q_1,q_2,n$) is DP-hard and in $\Sigma^p_2$. When restricted to input queries that are cores, it is NP-complete. 
\end{restatable}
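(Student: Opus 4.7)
The plan is to establish the $\Sigma^p_2$ upper bound by an $\exists\forall$ algorithm, the NP upper bound for core inputs by a simpler guess-and-check, and the two hardness claims by separate reductions. For the general $\Sigma^p_2$ upper bound I would guess existentially (i) a bijective variable renaming $\rho$ from the variables of $q_2$ to those of $q_1$ fixing the answer variables, (ii) two candidate subinstances $I_1\subseteq I_{q_1}$ and $I_2\subseteq I_{q_2}$ meant to be the cores, and (iii) witnessing homomorphisms $h_i: I_{q_i}\to I_i$; polynomial-time checks then verify the $h_i$ and that $|I_1\oplus \rho(I_2)|\le n$. The universal phase verifies that each $I_i$ is a core, i.e., that every endomorphism is surjective. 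Since the core of a data example is unique up to isomorphism, the combined acceptance condition is equivalent to $\editdist(q_1,q_2)\le n$. When $q_1,q_2$ are already cores, steps (ii)--(iii) and the universal phase are unnecessary: only $\rho$ needs to be guessed, yielding the NP upper bound.

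For NP-hardness on core inputs, I would reduce from a variant of \textsc{Subgraph Isomorphism} (equivalently, Maximum Common Edge Subgraph) on rigid graphs. For two structures whose domains are put in bijection,
\[
|I_{q_1}\oplus \rho(I_{q_2})|=|I_{q_1}|+|I_{q_2}|-2\,|I_{q_1}\cap \rho(I_{q_2})|,
\]
so bounding $\editdist$ is equivalent to asking for a bijection preserving many atoms. A small ``marker'' gadget (for example, attaching a unique pattern of fresh unary atoms to a single distinguished variable) guarantees core-ness of the constructed CQs without interfering with the combinatorial choice of $\rho$, yielding the NP lower bound.

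For DP-hardness in the general case, I would reduce from \textsc{SAT--UNSAT}. Given $(\phi,\psi)$, construct $q_1,q_2,n$ as disjoint unions of a ``$\phi$-part'' $q_i^\phi$ and a ``$\psi$-part'' $q_i^\psi$, built on disjoint sets of variables and disjoint relation symbols, so that any bijective renaming decomposes accordingly, cores decompose as $\core(e_{q_i})=\core(e_{q_i^\phi})\sqcup\core(e_{q_i^\psi})$, and
\[
\editdist(q_1,q_2)=\editdist(q_1^\phi,q_2^\phi)+\editdist(q_1^\psi,q_2^\psi).
\]
For the $\phi$-part I reuse the core-input NP-hardness construction of the previous step, calibrated so that $\editdist(q_1^\phi,q_2^\phi)\le n_\phi$ iff $\phi\in\textsc{Sat}$. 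For the $\psi$-part I leverage the $\mathrm{coNP}$-completeness of core recognition: starting from a standard reduction from \textsc{Unsat}, I build $q_1^\psi$ whose core loses a \emph{known} number of atoms precisely when $\psi$ is satisfiable, and I let $q_2^\psi$ be the ``full'' target structure. Setting $n=n_\phi+n_\psi$ then gives $\editdist(q_1,q_2)\le n$ iff $\phi\in\textsc{Sat}$ and $\psi\in\textsc{Unsat}$.

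The principal obstacle is in the last step: the $\psi$-gadget must be designed so that (a) the shrinkage of the core has a \emph{precisely predictable} magnitude, and (b) the two halves of $q_i$ cannot ``exchange'' atoms when minimizing the symmetric difference. Point (b) is handled by the disjoint-signature construction, which forces any bijective renaming to respect the split and makes the edit distance genuinely additive. Point (a) requires a careful choice of the coNP-hardness reduction for core recognition, so that the non-surjective endomorphism (when it exists) collapses a fixed, combinatorially-accountable number of atoms rather than an input-dependent amount; standard reductions based on encoding truth assignments as partial endomorphisms of a rigid skeleton meet this requirement.
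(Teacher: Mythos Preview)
Your upper-bound arguments are correct and essentially match the paper's. The overall plan for both lower bounds---subgraph isomorphism for NP-hardness on core inputs, and an additive split into an NP-component and a coNP-component (via core recognition) for DP-hardness---is also the paper's plan. Two points, however, need repair.

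\medskip
\textbf{The core-ness gadget in the NP-hardness step is wrong.} Attaching a unique pattern of fresh unary atoms to \emph{one} distinguished variable does not make the CQ a core: a non-surjective endomorphism can fix that variable and still collapse others (e.g.\ a path $a\text{--}b\text{--}c$ with a unique marker on $a$ retracts onto $a\text{--}b$). The paper instead overlays the input graph with a full clique in a fresh binary relation $R$ (with no $R$-self-loops); any endomorphism is then forced to be injective, hence surjective, so the CQ is a core. With that gadget your symmetric-difference arithmetic goes through verbatim, yielding $\editdist(q_1,q_2)\le |e_{q_2}|-|e_{q_1}|$ iff $G_1$ embeds in $G_2$.

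\medskip
\textbf{The DP-hardness $\psi$-gadget is where the work is, and ``standard reductions meet this requirement'' is not yet a proof.} The usual coNP-hardness proofs for core recognition (e.g.\ via 3-colourability: $G$ is 3-colourable iff $G\uplus K_3$ is not a core) do \emph{not} give a predictable shrinkage when the instance fails to be a core, since $G$ itself may already fail to be a core by an input-dependent amount. The paper sidesteps the calibration problem entirely: it reduces directly from the DP-complete problem ``$q_1\equiv q_2$ \textsc{and} $q_3$ is a core'' (with $q_3$ connected), marks all variables of $q_1,q_2$ with a fresh unary $A$, marks all variables of $q_3$ with a fresh unary $B$, and forms a second copy of $q_3$ in which every variable $x$ additionally carries a private unary $B_x$ (forcing that copy to be a core). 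The threshold is $n=$ number of variables of $q_3$. Equivalence of $q_1,q_2$ makes the $A$-side contribute $0$; on the $B$-side, adding the $n$ atoms $B_x(x)$ suffices iff $\core(q_3)$ still has all $n$ variables, i.e.\ iff $q_3$ is a core---otherwise connectedness of $q_3$ forces at least one extra edit. Note that only the inequality ``$>n$'' is needed in the negative case, not an exact value, which is precisely what lets one avoid your obstacle~(a). Your disjoint-signature additivity argument is the same device the paper uses; it is the $\psi$-side construction that you still owe.
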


We now move on to studying the proximity
pre-order $\preceq^{\editdist}$.
This pre-order
has some useful structural properties.
Let us say that a proximity pre-order $\preceq$
is \emph{well-founded} if for each
CQ $q$, every non-empty set of CQs 
has a $\preceq_q$-minimal element 
(i.e., there are no infinite  descending chains $\cdots \prec_q q_{2} \prec_q q_{1} \prec_q q_0$); and  $\preceq$
has the \emph{finite-basis property}
if for each
CQ~$q$, every set of CQs has only finitely many $\preceq_q$-minimal elements, up to equivalence.

\begin{restatable}{proposition}{propeditdistfinitebasis}
\label{prop:editdist-finite-basis}
$\preceq^{\editdist}$ is well-founded and has the finite-basis property.
\end{restatable}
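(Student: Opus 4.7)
The plan is to derive both properties from the integer-valuedness of $\editdist$, combined with a finiteness count of CQs at a fixed edit distance from $q$.

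Well-foundedness is immediate: $q_{i+1} \prec^{\editdist}_q q_i$ forces $\editdist(q,q_{i+1}) < \editdist(q,q_i)$ by definition of $\prec^{\editdist}_q$, so an infinite descending chain would produce an infinite strictly decreasing sequence of non-negative integers, which is impossible.

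For the finite-basis property, fix $q$ and a non-empty set $S$ of CQs, and set $d^{\ast} := \min\{\editdist(q,q') : q' \in S\}$; this minimum exists as it is taken over a non-empty subset of $\mathbb{N}$. A CQ $q'\in S$ is $\preceq^{\editdist}_q$-minimal precisely when $\editdist(q,q') = d^{\ast}$, so it suffices to show that for each $d\in\mathbb{N}$, there are only finitely many equivalence classes of CQs at edit distance $d$ from $q$.

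Suppose $\editdist(q,q')=d$. By Definition~\ref{def:editdist}, there is a bijective renaming $\rho$ fixing the answer variables such that $|\core(e_q) \oplus C| = d$, where $C := \core(e_{\rho(q')})$. Then $|C| \leq |\core(e_q)| + d$; atoms of $C \cap \core(e_q)$ use only variables of $\core(e_q)$; and the at-most-$d$ atoms of $C \setminus \core(e_q)$ introduce at most $d\cdot r$ additional variables, where $r$ is the maximum arity in the schema. Drawing those additional variables from a fixed finite pool of fresh names, there are only finitely many candidate sets $C$. Since $q' \equiv \rho(q')$ and two CQs are equivalent iff their cores are isomorphic via a bijection fixing the answer variables, each candidate $C$ determines at most one equivalence class of $q'$. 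The only delicate point is the "fresh pool" step, which is justified by the standard fact that CQ equivalence is insensitive to renaming of existentially quantified variables.
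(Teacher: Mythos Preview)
Your proof is correct and follows essentially the same approach as the paper: well-foundedness from the non-negative-integer valuedness of $\editdist$, and the finite-basis property from the fact that there are only finitely many CQs, up to equivalence, within a given edit-distance bound of $q$. The paper's version is terser (it simply appeals to there being finitely many CQs of bounded size over a finite schema, up to equivalence), whereas you spell out the size bound $|C|\leq|\core(e_q)|+d$ and the variable count explicitly; but the underlying argument is the same.
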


As an immediate consequence, we obtain:

\begin{restatable}{theorem}{theditdistfinitebasis}
\label{thm:editdist-finite-basis}
  Let $(q,E)$ be an annotated CQ.
  \begin{enumerate}
    \item If there is any CQ that fits $E$,  then there is a $\preceq^{\editdist}$-repair for $(q,E)$. 
  \item If there is any CQ $q'$ that fits $E$ with $q\subseteq q'$,  there is a  $\preceq^{\editdist}$-generalization for $(q,E)$. 
  \item
If there is any CQ $q'$ that fits $E$ with $q'\subseteq q$,  there is a $\preceq^{\editdist}$-specialization for $(q,E)$. \end{enumerate}
Moreover, there are at most finitely many
$\preceq^{\editdist}$-repairs, $\preceq^{\editdist}$-generalizations,
and $\preceq^{\editdist}$-specializations
for $(q,E)$, up to equivalence.
  \end{restatable}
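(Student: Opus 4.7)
The plan is to read off the theorem directly from Proposition~\ref{prop:editdist-finite-basis}, exploiting the correspondence between repairs/generalizations/specializations and minimal elements of appropriate sets of CQs. Recall from Section~\ref{sec:repairs} that a $\preceq^{\editdist}$-repair for $(q,E)$ is precisely a $\preceq^{\editdist}_q$-minimal element of the set $S_{\mathrm{rep}}$ of CQs that fit $E$; similarly, a $\preceq^{\editdist}$-generalization is a $\preceq^{\editdist}_q$-minimal element of $S_{\mathrm{gen}} = \{q' : q' \text{ fits } E \text{ and } q \subseteq q'\}$, and a $\preceq^{\editdist}$-specialization is a $\preceq^{\editdist}_q$-minimal element of $S_{\mathrm{spec}} = \{q' : q' \text{ fits } E \text{ and } q' \subseteq q\}$.

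For the three existence statements (items 1--3), I will apply the well-foundedness half of Proposition~\ref{prop:editdist-finite-basis}. The hypothesis in item~$i$ is exactly that the corresponding set $S_i$ is non-empty, so well-foundedness (which, as the paper states, is the property that every non-empty set of CQs has a $\preceq^{\editdist}_q$-minimal element) immediately yields such a minimal element, which is the desired repair, generalization, or specialization. No choice-theoretic subtleties intervene because well-foundedness is formulated in the paper directly as the existence of minimal elements, not merely as the absence of infinite descending chains.

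For the finiteness claim, I will invoke the finite-basis property of $\preceq^{\editdist}$ from the same proposition, applied separately to each of $S_{\mathrm{rep}}$, $S_{\mathrm{gen}}$, $S_{\mathrm{spec}}$. This directly yields that the collection of $\preceq^{\editdist}_q$-minimal elements of each $S_i$ is finite up to equivalence, which is precisely the conclusion. Syntax independence of $\preceq^{\editdist}$ (part of the definition of proximity pre-order) ensures that the notion of minimality and hence the notions of repair/generalization/specialization are themselves invariant under equivalence, so counting ``up to equivalence'' is well-defined.

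There is essentially no real obstacle here: the substantive combinatorial content lies entirely in Proposition~\ref{prop:editdist-finite-basis}, and the theorem is an immediate corollary. The only minor bookkeeping step is to verify that the definition of $\preceq^{\editdist}$-repair (resp.\ generalization, specialization) from Section~\ref{sec:repairs} truly coincides with ``$\preceq^{\editdist}_q$-minimal element of $S_{\mathrm{rep}}$'' (resp.\ $S_{\mathrm{gen}}$, $S_{\mathrm{spec}}$), which is immediate by unfolding definitions.
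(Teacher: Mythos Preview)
Your proposal is correct and matches the paper's approach exactly: the paper states the theorem as ``an immediate consequence'' of Proposition~\ref{prop:editdist-finite-basis} and provides no further proof, so your unpacking of well-foundedness for existence and the finite-basis property for finiteness is precisely what is intended.
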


%
%
%
 %
%

%


We now compare $\preceq^{\editdist}$-repairs with $\preceq^{\dom}$-repairs.


\begin{example}
  This example serves to compare 
  $\preceq^{\editdist}$-generalizations
  with $\preceq^{\dom}$-generalizations.
  Consider the following Boolean CQ and positive example:
  $$
  \begin{array}{rcl}
     q() &\colondash& R(x,y), R(x,z), P_1(y), P_2(y), Q_1(z), Q_2(z) \\[1mm]
     e &=& \!\!\{ R(a,b), R(a,c), P_1(b), Q_1(b), P_2(c), Q_2(c)\}
  \end{array}
  $$
  Let $E^+=\{e\}$. There are four $\preceq^{\editdist}$-generalizations for $(q,E^+)$,
  namely
  $$q'' \colondash
  R(x,y), R(x,z), P_i(y), Q_j(y) \text{ with } i,j\in\{1,2\}.
  $$
  In contrast, there is a unique (up to equivalence) $\preceq^{\dom}$-generalization for $(q,E)$, namely
  $$
      q'() \colondash R(x,y), R(x,z), R(x,u), R(x,v), P_1(y), P_2(z), Q_1(u), Q_2(v).
  $$
  A variation of this example shows that (i) there can be exponentially more $\preceq^{\editdist}$-repairs than $\preceq^{\dom}$-repairs, and 
  (ii) $\preceq^{\dom}$-repairs can be exponentially longer than $\preceq^{\editdist}$-repairs.
  \end{example}

  \begin{example}
      Consider again Example~\ref{ex:dom-neg-notexists} which shows
  that  $\preceq^{\dom}$-specializations are not guaranteed to
  exist. There is a unique $\preceq^{\editdist}$-specialization,
  namely 
     $q() \colondash R(x,x)$.
  \end{example}  

  \begin{example}
  In Example~\ref{ex:dom-pos-inf-repairs}, where $\preceq^{\dom}$-repairs showed degenerative behavior, there exists a unique $\preceq^{\editdist}$-repair,
  namely the (intuitively expected) query $q() \colondash P(x)$.
\end{example}


%

%

A 
  $\preceq^{\editdist}$-repair w.r.t.~positive examples is not necessarily a 
  $\preceq^{\editdist}$-generalization:
  
\begin{example}\label{ex:editdist-positive-not-a-generalization}
  Consider the following Boolean CQs and example:
  $$
  \begin{array}{rcl}
      q() &:-& R(x,y), R(x,z), R(y,u), R(z,u), P(y), Q(z) 
  \\[1mm]
        q'_1() &:-& R(x,y), R(x,z), R(y,u), R(z,u), P(y), W(z) 
  \\[1mm]
        q'_2() &:-& R(x,y), R(x,z), R(y,u), R(z,u), P(y)
  \\[1mm]
      e &=& \{  R(a,b), R(a,c), R(b,d), R(c,d), P(b), W(c)\}
  \end{array}
  $$
  Let $E=(E^+,E^-)$ with $E^+=\{e\}$ and $E^-=\emptyset$.
  Then $q'_1$ is the unique $\preceq^{\editdist}$-repair  
  for $(q,E)$ (having edit distance 2), but it is not a $\preceq^{\editdist}$-generalization. On the other hand,
  $q'_2$ is a $\preceq^{\editdist}$-generalization for $(q,E)$ but not a $\preceq^{\editdist}$-repair as it has edit distance 3 (due to the fact that it is not a core). 
  Similarly, it can be shown that a 
  $\preceq^{\editdist}$-repair with respect to 
  \emph{negative} examples is not necessarily a 
  $\preceq^{\editdist}$-specialization (cf.~Example~\ref{ex:editdist-negative-not-a-specialization} in the full version of this paper).
\end{example}


The following upper bound on the size of
$\preceq^{\editdist}$-repairs is implied by the definitions.
\begin{restatable}{proposition}{propsizeofeditdistrepairs}
\label{prop:size-of-editdist-repairs}
Let $(q,E)$ be an annotated CQ and 
$q'$ a core CQ.
If $q'$ is an $\preceq^{\editdist}$-repair for $(q,E)$,
then $|q'|\leq |q|+n$, where $n$ is the size of the smallest fitting CQ for $E$. The same holds for $\preceq^{\editdist}$-generalizations and for  $\preceq^{\editdist}$-specializations, where $n$ is then the
size of the smallest fitting CQ $q''$ that satisfies $q\subseteq q''$, respectively $q''\subseteq q$.
\end{restatable}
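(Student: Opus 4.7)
The plan is to use a smallest fitting CQ $q^*$ (of size $n$) as a yardstick: since $q^*$ fits $E$ and $q'$ is a $\preceq^{\editdist}$-repair, the repair condition yields $\editdist(q,q') \le \editdist(q,q^*)$, which can then be converted into a bound on $|q'|$ exploiting that $q'$ is a core.

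First, I would bound $\editdist(q,q^*)$. Since $q^*$ is a smallest fitting CQ, it may be assumed to be a core (otherwise a proper subquery of it would also fit and be smaller). I would then choose a bijective variable renaming $\rho$ that fixes the answer variables and sends the existential variables of $q^*$ to fresh variables not occurring in $\core(q)$. Under this $\rho$, the atom sets $\core(q)$ and $\rho(q^*)$ share no atoms beyond those forced by the answer variables, so $|\core(q) \oplus \rho(q^*)| \le |\core(q)| + n$; hence $\editdist(q,q^*) \le |\core(q)| + n \le |q| + n$. Combined with the repair condition, this gives $\editdist(q,q') \le |q| + n$.

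Second, I would convert this into a size bound for $q'$. Let $\rho'$ be an optimal bijective renaming attaining $\editdist(q,q')$; since $q'$ is a core, $|\rho'(q')| = |q'|$. Decompose $\rho'(q')$ into the atoms kept from $\core(q)$, namely $\rho'(q') \cap \core(q)$, and the newly added atoms, namely $\rho'(q') \setminus \core(q)$. The added atoms contribute at most $\editdist(q,q')$ to $|q'|$, since each of them is counted by the symmetric difference $\core(q) \oplus \rho'(q')$; the kept atoms contribute at most $|\core(q)| \le |q|$. A careful accounting of the interplay between kept atoms, deleted atoms of $\core(q)$, and newly added atoms (all tallied by the symmetric difference) then yields $|q'| \le |q| + n$.

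For $\preceq^{\editdist}$-generalizations and $\preceq^{\editdist}$-specializations, the same argument goes through provided $q^*$ is replaced by a smallest fitting CQ that generalizes (respectively, specializes) $q$; the extremality property of $q'$ is defined over the same restricted class, so the key inequality $\editdist(q,q') \le \editdist(q,q^*)$ carries over. The main obstacle in all three cases is the tight counting in the second step: one must extract precisely $|q| + n$ rather than a looser bound like $2|q| + n$ from the interplay between the symmetric-difference formulation of $\editdist$ and the coreness of $q'$, ensuring that atoms counted toward $|\rho'(q') \cap \core(q)|$ and toward $|\rho'(q') \setminus \core(q)|$ are not over-represented.
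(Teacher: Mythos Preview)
Your overall strategy is the natural one and is presumably what the paper has in mind (the paper gives no detailed proof, only the remark that the bound ``is implied by the definitions''). The first step is fine: for a smallest fitting core CQ $q^*$ one indeed gets $\editdist(q,q^*)\le |\core(q)|+n\le |q|+n$, and the repair condition then gives $\editdist(q,q')\le |q|+n$.

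The gap is in your second step. With your own notation, set
\[
a=|\rho'(q')\cap\core(q)|,\qquad b=|\rho'(q')\setminus\core(q)|,\qquad c=|\core(q)\setminus\rho'(q')|,
\]
so that $|q'|=a+b$, $\editdist(q,q')=b+c$, and $|\core(q)|=a+c$. Plugging in the only inequality you have available, namely $b+c\le |\core(q)|+n=a+c+n$, gives $b\le a+n$ and hence
\[
|q'|=a+b\le 2a+n\le 2|\core(q)|+n\le 2|q|+n.
\]
That is exactly the ``looser bound'' you yourself flag as the obstacle, and nothing in your decomposition lets you do better: the quantities $a,b,c$ and the single inequality $b+c\le a+c+n$ simply do not force $a+b\le |q|+n$. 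Your sentence ``A careful accounting \ldots\ then yields $|q'|\le |q|+n$'' is therefore not a proof step but a restatement of what remains to be shown. To close the gap you would need an additional ingredient---for instance, an argument that the optimal edit distance $d$ to a fitting CQ satisfies $d\le n$ (which is false in general: take $q$ using relations disjoint from those occurring in $E^+$), or a direct size comparison between $q'$ and a modified fitting CQ that bypasses the symmetric-difference bookkeeping. As written, the proposal establishes only $|q'|\le 2|q|+n$; this weaker bound already suffices for the contrast the paper draws with Remark~\ref{remark:no-size-bound-dom-repairs}, but it is not the stated proposition.
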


Prop.~\ref{prop:size-of-editdist-repairs} stands in stark contrast with
Remark~\ref{remark:no-size-bound-dom-repairs} for $\preceq^{\dom}$-repairs. We
remark that, while the smallest fitting CQ may be exponential in the size of the input examples~\cite{Willard10},
one may expect it to be typically much smaller in practice.

We now consider algorithmic problems for $\preceq^{\editdist}$-repairs.
By Thm.~\ref{thm:editdist-finite-basis}, 
the $\preceq^{\editdist}$-\problem{repair-existence} problem coincides with the fitting existence problem. In particular, the existence of a $\preceq^{\editdist}$-repair for 
$(q,E)$ does not depend on $q$. The verification and construction problems for
$\preceq^{\editdist}$-repairs are more interesting and \emph{do} depend on $q$.
Of course, the existence of
$\preceq^{\editdist}$-generalizations
and $\preceq^{\editdist}$-specializations depends on $q$ as well. \looseness=-1

\begin{restatable}{theorem}{thmeditdistrepaircomplexity}
\label{thm:editdistrepaircomplexity}
~
    \begin{enumerate}
    \item $\editdist$-\problem{bounded-fitting-existence} is $\Sigma^p_2$-complete (provided the distance bound  is given in unary). It is in NP if the input CQ is core and only positive examples are given. 
    
    \item $\preceq^{\editdist}$-\problem{repair-existence}
    is coNExpTime-complete. It is coNP-complete
    for inputs that consist of a bounded number of examples.  If the input contains only positive examples, or only negative examples, it is in PTime. 
   
    \item $\preceq^{\editdist}$-\problem{repair-verification}  is $\Pi^p_2$-hard and in $\Sigma^p_3$. 


    \end{enumerate}
    Items~2 and~3 also hold if ``repair'' are replaced by ``generalization'' or ``specialization'',
    except for the case of $\preceq^{\editdist}$-\problem{generalization-existence} with a bounded number of examples, where we only have a DP upper bound.
\end{restatable}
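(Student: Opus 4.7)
The plan is to handle the three items of Theorem~\ref{thm:editdistrepaircomplexity} separately, exploiting Prop.~\ref{prop:size-of-editdist-repairs} to bound the size of candidate CQs and Prop.~\ref{prop:testingeditdist} to absorb the complexity of edit-distance testing within $\Sigma^p_2$.

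For item~1, my $\Sigma^p_2$ upper bound for $\editdist$-\problem{bounded-fitting-existence} nondeterministically guesses a CQ $q'$ of size at most $|q|+d$ (polynomial since $d$ is given in unary) and then verifies in $\Pi^p_1$ that $q'$ fits $E$, together with $\editdist(q,q')\leq d$ in $\Sigma^p_2$ via Prop.~\ref{prop:testingeditdist}. The matching hardness comes from a reduction from $\Sigma^p_2$-SAT $\exists X\forall Y.\phi$, letting the existential layer pick edit modifications of $q$ within the budget~$d$ and the universal layer verify fitting against the falsifying valuations, mirroring standard fitting-hardness gadgets from the literature. For the NP refinement when $q$ is a core and only positive examples are given, fitting collapses to a homomorphism guess (NP), and the edit-distance test between two cores is NP-verifiable by Prop.~\ref{prop:testingeditdist}; guessing a small core $q'$ together with the witness homomorphisms and a bijective renaming keeps the whole problem in NP.

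For item~2, by Thm.~\ref{thm:editdist-finite-basis} a $\preceq^{\editdist}$-repair for $(q,E)$ exists iff some CQ fits $E$, so $\preceq^{\editdist}$-\problem{repair-existence} is polynomial-time equivalent to classical CQ fitting-existence, which is coNExpTime-complete in general, coNP-complete for a bounded number of examples (via the well-known direct-product characterization), and in PTime when the input consists of only positive or only negative examples. For generalizations (resp., specializations), existence reduces to fitting-existence for a modified instance that encodes the containment constraint: for generalizations, add $e_q$ as a positive example; for specializations, use the minimally-constrained CQs as in Thm.~\ref{thm:wmg-as-dom-specialization}. The complexity transfers unchanged, except for the bounded-example generalization case, where combining the coNP fitting test with an NP containment witness yields only a DP upper bound.

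For item~3, $\preceq^{\editdist}$-\problem{repair-verification} requires checking that $q'$ fits $E$ and that no fitting $q''$ has strictly smaller edit distance to $q$. For the $\Sigma^p_3$ upper bound, I express the non-repair condition as either a failure of $q'$ to fit (witnessable in $\Sigma^p_1$) or the existence of a polynomial-size competitor $q''$ (bounded via Prop.~\ref{prop:size-of-editdist-repairs}) that fits and has strictly smaller edit distance; both ingredients are decidable with a $\Sigma^p_2$ oracle by Prop.~\ref{prop:testingeditdist}, placing non-repair and hence the verification problem in $\Sigma^p_3$. The $\Pi^p_2$-hardness comes from reducing $\Pi^p_2$-SAT $\forall X\exists Y.\phi$, engineering an instance where the universal layer corresponds to ``no better fitting exists'' and the existential layer is absorbed inside the edit-distance comparison. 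The analogous bounds for generalizations and specializations follow by composing with a polynomial-time containment check. The hardest step will be the tight quantifier accounting in item~3, since the edit-distance comparison nests a minimization over renamings inside a core computation; obtaining exactly $\Sigma^p_3$ requires invoking Prop.~\ref{prop:testingeditdist} as a black box rather than unfolding its internal quantifiers. A secondary subtlety is the NP refinement in item~1, where an explicit coNP core-check on the guessed $q'$ is to be avoided by restricting attention to small core fitting CQs, whose existence is guaranteed by Prop.~\ref{prop:size-of-editdist-repairs} whenever any fitting CQ with edit distance $\leq d$ exists.
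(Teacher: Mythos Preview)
Your proposal has the right overall architecture, but there are several concrete gaps.

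\textbf{Item 3 upper bound.} You argue that the \emph{non-repair} condition is in $\Sigma^p_3$ and conclude ``hence the verification problem [is] in $\Sigma^p_3$''. That inference is backwards: placing non-repair in $\Sigma^p_3$ only yields verification in $\Pi^p_3$, which does not match the claimed bound. The paper obtains $\Sigma^p_3$ by a \emph{direct} algorithm: it first guesses cores of $q$ and $q'$ (an NP step), verifies these guesses and the fitting of $q'$ in DP, computes the exact value $\editdist(q,q')$ via polynomially many NP oracle calls (using the core-case of Prop.~\ref{prop:testingeditdist}), and finally performs a single $\Pi^p_2$ check that no fitting CQ achieves a strictly smaller distance (this last check is the complement of item~1). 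The outer NP guess with $\Sigma^p_2$-oracle access yields $\Sigma^p_3$. Your approach, by contrast, leaves the edit distance of $q'$ itself inside the universal quantifier, which costs an extra alternation.

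\textbf{Item 2, specializations.} You invoke Thm.~\ref{thm:wmg-as-dom-specialization} and minimally-constrained CQs, but that theorem concerns $\preceq^{\dom}$-specializations, not $\preceq^{\editdist}$. The relevant reduction here is different: a $\preceq^{\editdist}$-specialization for $(q,E)$ exists iff (i)~some CQ fits $E$ and (ii)~$e_q\to e$ for every positive example $e\in E^+$ (equivalently, $e_q\to\Pi_{e\in E^+}e$). Condition~(ii) is what guarantees that the most-specific fitting CQ is contained in $q$, and it is what drives the complexity bounds; minimally-constrained CQs play no role.

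\textbf{Lower bounds in items 1 and 3.} ``Mirroring standard fitting-hardness gadgets'' and ``engineering an instance where the universal layer corresponds to\ldots'' are not proofs. The paper's route is much more direct: for item~1 it reduces from \textsc{size-bounded fitting existence} (known $\Sigma^p_2$-hard) via the empty Boolean CQ $q_\emptyset$ with $d=n$; for item~3 it reduces from the \emph{complement} of item~1 by adjoining fresh $R$-paths of length $d{+}1$ (resp.\ $d$) to the positive (resp.\ negative) examples and taking the $R$-path CQ of length $d{+}1$ as the candidate repair. You should either supply a comparable concrete construction or route through size-bounded fitting as the paper does; the generalization lower bound in item~3 requires an additional twist (working from a ``maximal'' CQ $q_{\max}$ rather than $q_\emptyset$) that your sketch does not address.
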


Finally, let us discuss $\preceq^{\editdist}$-\problem{repair-construction}. 
It follows from known results~\cite{CateD15} that 
whenever a fitting CQ exists, 
there is one of size at most $n_{max}=\Pi_{e\in E^+}|e|$. A brute-force algorithm for $\preceq^{\editdist}$-\problem{repair-construction}  simply
enumerates CQs in the order of increasing size and, for each, checks if it is a $\preceq^{\editdist}$-repair (cf.~Thm.~\ref{thm:editdistrepaircomplexity}(2)).
Since we are promised that an $\preceq^{\editdist}$-repair exists, 
 this process
terminates and, by Prop.~\ref{prop:size-of-editdist-repairs}, yields a CQ of size at most $|q|+n_{max}$.
We do not know of an algorithm for constructing $\preceq^{\editdist}$-repairs with
asymptotically better running time,
see Sect.~\ref{sec:discussion} for further discussion.



\begin{remark} 
\label{rem:finer-edit-distance}
One can further refine edit distance by requiring that
all equalities between variables in the
body of the query are represented explicitly by means of equality atoms, and by  counting these when computing the symmetric difference.
For example, consider the CQs
\begin{center}
    $q(x) \colondash P(x)$
    ~~~~~~~~~
    $q'_1(x) \colondash P(x), R(y,z)$
    ~~~~~~~~~
    $q'_2(x) \colondash P(x), R(y,y)$
\end{center}
Under Def.~\ref{def:editdist}, 
$\editdist(q,q'_1)= \editdist(q,q'_2)$,
while under the more refined definition of edit distance
(where we treat $q'_2$  as shorthand for $q'_2(x) \colondash P(x), R(y,z), y=z$), 
$\editdist(q,q'_1)< \editdist(q,q'_2)$. 
We omit the details, but we believe that the above results continue to hold under such a more refined notion of edit distance.
\end{remark}


\subsection{Other Distance Metrics}

\mypara{Distance as size of the smallest distinguishing instance}
The next distance metric we consider
is based on the smallest instance
on which the two queries produce different answers.

\begin{definition}
    For CQs $q_1$ and $q_2$,
    $\sdidist(q_1, q_2)=1/n$, where $n$ is the size of the smallest instance $I$
    (measured by the number of facts) such that
    $q_1(I)\neq q_2(I)$, or 0 if no such instance $I$ exists.
\end{definition}

\removespace 

\begin{restatable}{proposition}{propsdidistisametric}
\label{prop:sdidist-is-a-metric}
$\sdidist$ is a semantic distance metric, and in fact an ultrametric (i.e., it satisfies 
$dist(q_1,q_3)\leq \max(dist(q_1,q_2), dist(q_2,q_3))$).
\end{restatable}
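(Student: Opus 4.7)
\medskip

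The plan is to verify the three semantic distance metric axioms in turn and then establish the stronger ultrametric inequality, from which the triangle inequality follows for free (since $\max(a,b)\leq a+b$ for non-negatives).

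First, symmetry is immediate from the definition: the condition ``$q_1(I)\neq q_2(I)$'' is symmetric in $q_1,q_2$, so the smallest $n$ witnessing it is the same either way, hence $\sdidist(q_1,q_2)=\sdidist(q_2,q_1)$.

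Second, for condition 2, I will unpack both directions. If $q_1\equiv q_2$, then $q_1(I)=q_2(I)$ for all instances $I$, so no distinguishing instance exists and the definition gives $\sdidist(q_1,q_2)=0$. Conversely, if $\sdidist(q_1,q_2)=0$, then by definition there is no instance $I$ with $q_1(I)\neq q_2(I)$ (note that $1/n>0$ for any finite $n\geq 1$, so the value $0$ can only arise from the second clause of the definition), so $q_1\equiv q_2$.

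The main step is the ultrametric inequality $\sdidist(q_1,q_3)\leq \max(\sdidist(q_1,q_2),\sdidist(q_2,q_3))$. If $q_1\equiv q_3$, the left-hand side is $0$ and the inequality is trivial, so assume $q_1\not\equiv q_3$ and let $I$ be a smallest distinguishing instance, with $|I|=n$, so $\sdidist(q_1,q_3)=1/n$. Since $q_1(I)\neq q_3(I)$, it cannot be the case that $q_2(I)=q_1(I)$ \emph{and} $q_2(I)=q_3(I)$; hence either $q_1(I)\neq q_2(I)$ or $q_2(I)\neq q_3(I)$. In the first case, there is a distinguishing instance for $q_1,q_2$ of size at most $n$, so the smallest such instance has size $m\leq n$, giving $\sdidist(q_1,q_2)=1/m\geq 1/n$. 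The second case is symmetric. Either way, the maximum of the two right-hand distances is at least $1/n=\sdidist(q_1,q_3)$.

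I do not anticipate a real obstacle here; the only point that needs a bit of care is handling the edge cases where some of the distances are $0$ (equivalent queries), which is why I separate out the $q_1\equiv q_3$ case before invoking the ``smallest distinguishing instance'' $I$. Finally, the triangle inequality is an immediate consequence of the ultrametric inequality, completing the proof that $\sdidist$ is both a semantic distance metric and an ultrametric.
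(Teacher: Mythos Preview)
Your proof is correct and uses essentially the same idea as the paper's: both arguments rest on the transitivity of equality $q_1(I)=q_2(I)=q_3(I)\Rightarrow q_1(I)=q_3(I)$. The paper phrases it in the forward direction (any instance smaller than $\min(n,m)$ cannot distinguish $q_1$ from $q_3$), whereas you use the contrapositive (a smallest $q_1/q_3$-distinguishing instance must distinguish one of the other two pairs); these are two sides of the same coin.
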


\begin{example}
Consider the CQs 
$q_1() \colondash R(x,x)$ and 
$q_2() \colondash \bigwedge_{i, j\in\{1, \ldots, N\}, i\neq j} R(x_i,x_j)$.
An instance that satisfies $q_2$ either isomorphically embeds a clique of size $N$ or else contains a ``reflexive'' fact of the form $R(a,a)$. Therefore, 
every example distinguishing $q_1$ from $q_2$
must contain at least $N(N-1)$ facts.
It follows that $\sdidist(q_1,q_2)=1/(N(N-1))$.
\end{example}


Two further relevant basic facts about $\sdidist$ are the following:

\begin{restatable}{proposition}{propsdidistbound}
    \label{prop:sdidist-bound}
    For all CQs $q,q'$,  $\sdidist(q,q')= 0$ or $\sdidist(q,q')\geq 1/\max(|q|,|q'|)$.
\end{restatable}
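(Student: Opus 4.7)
The plan is to argue that if $q$ and $q'$ are not equivalent, then one of their canonical instances already serves as a distinguishing instance, with size bounded by the corresponding query.

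First, I would unfold the definition: $\sdidist(q,q')\neq 0$ means there exists some instance on which $q$ and $q'$ disagree, which by definition of equivalence is the case precisely when $q\not\equiv q'$. So suppose $q\not\equiv q'$. Then without loss of generality $q\not\subseteq q'$ (the argument for $q'\not\subseteq q$ is symmetric and produces the other half of the $\max$).

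Next, I would invoke the Chandra--Merlin theorem, as recalled in Section~\ref{sec:prelim}: $q\not\subseteq q'$ is equivalent to $e_{q'}\not\to e_q$. Now consider the canonical instance $I_q$ of $q$, and recall that $e_q=(I_q,\langle x_1,\dots,x_k\rangle)$. The identity map witnesses $e_q\to e_q$, so $\langle x_1,\dots,x_k\rangle\in q(I_q)$. On the other hand, $\langle x_1,\dots,x_k\rangle\in q'(I_q)$ would require a homomorphism $e_{q'}\to e_q$, which does not exist. Hence $q(I_q)\neq q'(I_q)$, so $I_q$ is a distinguishing instance. Since $I_q$ has exactly $|q|$ facts, this yields $\sdidist(q,q')\geq 1/|q|\geq 1/\max(|q|,|q'|)$.

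The proof is essentially a direct appeal to Chandra--Merlin applied to the canonical instance, so I do not expect any real obstacle; the only minor point to check is that the answer tuple $\langle x_1,\dots,x_k\rangle$ indeed lies in $\adom(I_q)$, which follows from the safety condition built into the definition of CQs. The symmetric case (where $q'\not\subseteq q$) contributes the $|q'|$ side of the bound and is identical modulo swapping $q$ and $q'$.
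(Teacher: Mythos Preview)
Your proof is correct and takes essentially the same approach as the paper: the paper's proof simply states that non-equivalent CQs must disagree on one of their canonical instances, and you unfold exactly why via Chandra--Merlin. Your version is more detailed but structurally identical.
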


\removespace\removespace\begin{restatable}{proposition}{propsdidistcomplexity}
  \label{prop:sdidist-complexity}
  Computing $\sdidist$ is NP-hard. More precisely,
testing $\sdidist(q,q')\leq 1/k$ (on input CQs $q,q'$ and natural number $k\geq 0$ in unary) is NP-hard and is in $\Pi^p_2$. 
\end{restatable}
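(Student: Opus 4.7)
The plan is to establish the $\Pi^p_2$ upper bound and the NP-hardness lower bound separately.

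For the $\Pi^p_2$ upper bound, I would show that the complement, namely $\sdidist(q,q') > 1/k$, lies in $\Sigma^p_2$. The key observation is that this complement holds precisely when there exist an instance $I$ with $|I| < k$ and a tuple $\textbf{a}$ of the appropriate arity such that $\textbf{a}$ lies in exactly one of $q(I)$ and $q'(I)$. Since $k$ is given in unary, such $I$ and $\textbf{a}$ have polynomially bounded size and can be guessed at the outer existential level. The inner condition ``$\textbf{a}$ distinguishes $q$ from $q'$ at $I$'' is a $\mathrm{DP}$-predicate, since $\textbf{a} \in q(I)$ is an NP-condition (witnessed by a homomorphism from $I_q$ to $I$ sending the answer tuple of $q$ to $\textbf{a}$) and $\textbf{a} \notin q'(I)$ is coNP. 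Guessing $I, \textbf{a}$ and then checking the $\mathrm{DP}$-predicate yields a $\Sigma^p_2$ procedure for the complement, placing the original problem in $\Pi^p_2$.

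For NP-hardness, the plan is to reduce from CQ equivalence, which is NP-complete by the Chandra--Merlin theorem. Given CQs $q_1, q_2$ of the same arity, recall that if $q_1 \not\subseteq q_2$, then by Chandra--Merlin the canonical example $e_{q_1} = (I_{q_1}, \textbf{x})$ is itself a distinguishing example, using the instance $I_{q_1}$ of size $|q_1|$; the symmetric argument handles $q_2 \not\subseteq q_1$. Hence whenever $q_1 \not\equiv q_2$, a distinguishing instance of size at most $\max(|q_1|, |q_2|)$ already exists. Setting $k := \max(|q_1|, |q_2|) + 1$, we obtain that $\sdidist(q_1, q_2) \leq 1/k$ iff no distinguishing instance of size strictly less than $k$ exists, iff no distinguishing instance exists at all, iff $q_1 \equiv q_2$. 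Since $k$ admits a unary encoding of polynomial size, this is a polynomial-time reduction, transferring NP-hardness.

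The main obstacle I anticipate is handling the quantifier alternation carefully in the upper bound. The biconditional ``$\textbf{a} \in q(I) \Leftrightarrow \textbf{a} \in q'(I)$'' is not obviously in NP or coNP, but moving to the complement and using the asymmetric ``distinguishes'' condition collapses the picture cleanly into $\Sigma^p_2$. It is worth noting that the resulting gap between the NP-hardness lower bound and the $\Pi^p_2$ upper bound is substantial, and pinning down the exact complexity is not asserted by the proposition.
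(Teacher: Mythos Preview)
Your proof is correct and follows essentially the same approach as the paper: the $\Pi^p_2$ upper bound is obtained by co-guessing a small distinguishing instance and checking disagreement (a $\mathrm{DP}$-type condition), and NP-hardness is obtained by reducing from CQ equivalence via the observation (Prop.~\ref{prop:sdidist-bound}) that for $k>\max(|q|,|q'|)$, $\sdidist(q,q')\leq 1/k$ holds iff $q\equiv q'$. Your write-up is more detailed than the paper's terse proof, but the ideas coincide.
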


We will now move on to study the pre-order $\preceq^\sdidist$. 
The next example shows that $\preceq^\sdidist$ is not  a 
good pre-order for query repairs, 
since it is not sufficiently discriminative.

\begin{example}\label{ex:sdidist-trivializes}
    Let $q_1(x) = R(x,y)$, and let $E_1$ consist of 
    \begin{itemize}
        \item the negative example $(I,a)$ where $I=\{R(a,b)\}$, and
        \item the positive example $(J,a)$ where $J=\{ R(a,a)\}$.
    \end{itemize}
    The positive example is strictly speaking redundant: every CQ over the relevant schema fits it. It is added only for intuition.   
    There are infinitely many pairwise non-equivalent CQs $q'$ that fit $E$ with $q'\subseteq q_1$. Furthermore, 
    by construction, every fitting CQ $q'$ disagrees with  $q_1$ on $I$, and hence has $\sdidist(q',q_1)=1$. It follows that all infinitely-many fitting CQs are 
    $\preceq^{\sdidist}$-repairs for $(q_1,E)$, and 
    there are infinitely many
    $\preceq^{\sdidist}$-specializations for $(q_1,E)$ as well. \looseness=-1

    A similar situation arises for $\preceq^{\sdidist}$-generalizations: let $q_2(x) = R(x,x,x)$ and 
    let $E_2$ consist of the positive example $(I',a)$ where
    $I'=\{R(a,a,b)\}$. There are infinitely many CQs  (up to equivalence)  that fit $E_2$ and they all 
    disagree with $q_2$ on the single-fact instance $I'$.
    \looseness=-1
    \end{example}

This shows that there are annotated CQs with infinitely many $\preceq^{\sdidist}$-repairs (as well as
$\preceq^{\sdidist}$-specializations and 
$\preceq^{\sdidist}$-generalizations). 
On the flip side, we have:

\begin{restatable}{proposition}{propsdidistfinite}
    Let $(q,E)$ be an annotated CQ.
    \begin{enumerate}
        \item If there is any CQ that fits $E$, then 
    there is a $\preceq^{\sdidist}$-repair for     $(q,E)$.
    \item If there is any CQ $q'$ that fits $E$  with $q\subseteq q'$,  
    there is a $\preceq^{\sdidist}$-generalization for     $(q,E)$.\item If there is any CQ $q'$ that fits $E$  with $q'\subseteq q$,  
    there is a $\preceq^{\sdidist}$-specialization for     $(q,E)$.
    \end{enumerate}
\end{restatable}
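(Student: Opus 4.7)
The plan is to exploit the discreteness of the values taken by $\sdidist$: combined with the lower bound in Prop.~\ref{prop:sdidist-bound}, it forces the set of distance values that can appear between $q$ and a fitting CQ to be \emph{finite}. Any minimizer of such a finite set is then, by construction, a $\preceq^{\sdidist}$-repair (respectively, generalization or specialization).

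First I would dispose of the trivial case. If $q$ itself fits $E$, then by conservativeness of $\preceq^{\sdidist}$, $q$ is its own repair; in the generalization/specialization cases, $q \subseteq q$ also makes $q$ its own generalization and specialization. So for the remainder of the argument I may assume that $q$ does not fit $E$, and hence that there is some mislabeled example $e_0 = (I_0,\mathbf{a}_0) \in E^+ \cup E^-$.

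The key step is then the observation that for every fitting CQ $q'$, the instance $I_0$ is a distinguishing instance for $q$ and $q'$: since $q'$ classifies $\mathbf{a}_0$ in agreement with its label while $q$ does not, we have $q(I_0) \neq q'(I_0)$. Therefore $\sdidist(q,q') \geq 1/|I_0|$, and so the set $\{\sdidist(q,q') : q' \text{ fits } E\}$ is contained in the finite set $\{0\} \cup \{1/n : 1 \leq n \leq |I_0|\}$. Such a finite set has a minimum, which is attained by some fitting CQ $q^\ast$; this $q^\ast$ is a $\preceq^{\sdidist}$-repair for $(q,E)$. For parts 2 and 3, the same argument applies once we restrict attention to fitting CQs $q'$ satisfying $q \subseteq q'$ (resp.\ $q' \subseteq q$). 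The only detail to check is that a mislabeled example still exists in the appropriate class: if $q \subseteq q'$ and $q'$ fits $E^-$, then $q$ automatically fits $E^-$, so $e_0$ must be a positive example; symmetrically, in the specialization case $e_0$ must be negative. The hypothesis of each part guarantees that at least one fitting CQ with the required containment exists, so the minimum over the (nonempty, finite) set of achievable distances is attained within the restricted class.

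I do not anticipate a genuine obstacle; the argument is essentially a pigeonhole-style observation, turning on the discreteness of the image of $\sdidist$ plus the fact that any example mislabeled by $q$ caps the denominator of achievable distances from above. The one thing to get right is the direction of containment in parts 2 and 3, which determines whether the witness to $q$'s failure on $E$ must be sought among positive or negative examples.
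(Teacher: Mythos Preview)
Your proposal is correct and follows essentially the same approach as the paper: both argue that a mislabeled example $e_0$ forces $\sdidist(q,q')\geq 1/|I_0|$ for every fitting $q'$, so the achievable distances lie in the finite set $\{1/n : 1\leq n\leq |I_0|\}$ and a minimum is attained. Your additional observation in parts~2 and~3 (that the mislabeled example must be positive, resp.\ negative) is true but not needed---any mislabeled example already bounds the distance for \emph{all} fitting CQs, including those satisfying the containment constraint---so that paragraph can be dropped without loss.
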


\medskip
We omit a complexity-theoretic analysis of $\preceq^\sdidist$-repairs in light of Example~\ref{ex:sdidist-trivializes}.


\begin{remark}
Dually to $\sdidist$ one can also consider
the distance metric $\sdqdist$ defined as
the size of the smallest distinguishing query.
More precisely,  $\sdqdist(q,q')$ is
the size of the smallest CQ (as measured by the number of atoms) that maps homomorphically to 
precisely one of $q,q'$ .
Unfortunately, $\sdqdist$ fares no better than $\sdidist$. For instance, for the annotated CQ $(q_2,E_2)$ from Example~\ref{ex:sdidist-trivializes}, all CQs $q'$ that fit $E_2$ again have $\sdqdist(q,q_2)=1$.
\end{remark}




\mypara{Distance as probability of disagreement}
Let $\mu$ be a discrete probability distribution over the space of all examples (an \emph{example distribution}, for short). For instance, $\mu$ may be a uniform distribution over values in some pre-existing (unlabeled) database instance. 
We define $dist_\mu(q,q')=\mu(\extension{q}\oplus \extension{q'})$. That is, $dist_\mu(q,q')$
is the probability of drawing an example on which $q$ and $q'$ disagree.
The same works for non-discrete probability distributions, as long as $\extension{q}$ is measurable for each CQ $q$. We  restrict to discrete  distributions for simplicity.
It is easy to see that, for all example distributions $\mu$,
     $dist_\mu$ is a weak semantic distance metric.


\begin{restatable}{proposition}{propcomputingmudist}
    \label{prop:computing-mu-dist}
    $dist_\mu(q,q')$ can be 
    computed in $P^{NP}_{||}$
    (for example distributions $\mu$ with finite support, specified as part of the input).
    Testing $dist_\mu(q,q')\leq r$  is  $P^{NP}_{||}$-complete.
\end{restatable}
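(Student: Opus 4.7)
Suppose $\mu$ has finite support $\{e_1,\ldots,e_n\}$, with each weight $\mu(e_i)$ supplied as a rational in the input. Since $e\in\extension{q}$ iff there is a homomorphism $e_q\to e$, each membership question for $\extension{q}$ or $\extension{q'}$ is an NP check. The plan is to fire $2n$ non-adaptive NP-oracle queries in parallel, one for each pair $(e_i,q)$ and $(e_i,q')$, and then to determine in polynomial time the set $S=\{i:e_i\in\extension{q}\oplus\extension{q'}\}$, sum $\sum_{i\in S}\mu(e_i)$, and (for the decision version) compare the sum to $r$. This places both the computation of $dist_\mu(q,q')$ and the decision ``$dist_\mu(q,q')\leq r$'' in $P^{NP}_{||}$.

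\textbf{Lower bound.} For $P^{NP}_{||}$-hardness of the decision version, I plan to reduce from the standard Wagner-style cardinality problem: given graphs $G_1,\ldots,G_n$ and a threshold $k\leq n$, is $|\{i:G_i\text{ is $3$-colorable}\}|\geq k$? Using the equivalence of $3$-colorability of $G_i$ with the existence of a homomorphism $G_i\to K_3$, the goal is to fabricate a single pair $(q,q')$ of CQs and a uniform distribution $\mu$ on $n$ examples $e_1,\ldots,e_n$ such that $e_i\in\extension{q}\oplus\extension{q'}$ iff $G_i$ is $3$-colorable. With $\mu(e_i)=1/n$ and $r=(k-1)/n$, the decision $dist_\mu(q,q')\leq r$ then becomes the complement of the input problem, and hardness follows by closure of $P^{NP}_{||}$ under complement.

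\textbf{Construction and main obstacle.} I will work over a schema with binary relations $E_1,\ldots,E_n$ and a fresh unary $P$. For each $i$, let $q_i()$ be the Boolean CQ with one variable $v^{(i)}_u$ per vertex $u$ of $G_i$ and one atom $E_i(v^{(i)}_u,v^{(i)}_w)$ per edge of $G_i$. Define $q=q_1\land\cdots\land q_n$, realized as a single Boolean CQ whose atom set is the disjoint union (variables of distinct $q_i$ are disjoint by construction), and define $q'()\colondash P(x)$. For each $i$, let $e_i$ consist of the six directed edges of $K_3$ placed in $E_i$, together with a single self-loop $E_j(a,a)$ for every $j\neq i$, and no $P$-facts. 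For $j\neq i$, $q_j$ maps homomorphically into $e_i$ by sending all its variables to $a$; for $j=i$, $q_j$ maps into $e_i$ iff $G_i\to K_3$, i.e.\ iff $G_i$ is $3$-colorable. Hence $e_i\in\extension{q}$ iff $G_i$ is $3$-colorable, while $e_i\notin\extension{q'}$ always because $P$ never appears in $e_i$, yielding the intended symmetric-difference characterization. The hard part will be precisely this construction step: since CQs cannot natively ``select'' which sub-test to apply on a given example, the independent homomorphism tests must be isolated by the disjoint-schema trick (one relation $E_i$ per input graph) combined with the trivializing self-loops $E_j(a,a)$ for $j\neq i$, which make all other sub-tests automatically succeed on $e_i$.
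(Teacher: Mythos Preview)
Your upper bound is correct and essentially identical to the paper's argument.

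Your lower bound, however, has a genuine gap: the source problem you chose is not $P^{NP}_{||}$-hard. Given graphs $G_1,\ldots,G_n$ and a threshold $k$, the question ``are at least $k$ of the $G_i$ $3$-colorable?'' lies in NP --- one simply guesses $k$ indices together with a proper $3$-coloring for each. Consequently your reduction only establishes coNP-hardness of ``$dist_\mu(q,q')\leq r$'' (via the complement), not $P^{NP}_{||}$-hardness. The threshold-counting of independent NP events never leaves NP (or coNP, on the complementary side); to reach $P^{NP}_{||}$ you need a problem that genuinely requires pinning down an exact value, such as parity of the count, or equality of two optima.

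The paper takes a different route for the lower bound: it reduces from the problem of deciding whether two graphs $G_1,G_2$ have the \emph{same} chromatic number, a standard $P^{NP}_{||}$-complete problem from Wagner's work. The reduction sets $q=q_{G_1}$, $q'=q_{G_2}$ (the canonical Boolean CQs of the graphs), takes $\mu$ uniform over all cliques $K_m$ with $m\leq\max(|G_1|,|G_2|)$, and sets $r=0$. Since $q_{G_i}$ holds on $K_m$ iff $G_i$ is $m$-colorable iff $m\geq\chi(G_i)$, the two queries agree on every clique precisely when $\chi(G_1)=\chi(G_2)$, so $dist_\mu(q,q')=0$ iff the chromatic numbers coincide. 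Your gadgetry with disjoint relations and self-loops is perfectly sound as a way of isolating per-instance homomorphism tests; the issue is solely the choice of source problem.
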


For the sake of readability, we will denote the proximity pre-order
$\preceq^{dist_\mu}$ by $\preceq^\mu$. 

\begin{restatable}{proposition}{propmubasis}
    Let $\mu$ be any example distribution.
    If $\mu$ has finite support, $\preceq^{\mu}$ is well-founded. The same does not necessarily hold when  $\mu$ has infinite support.
\end{restatable}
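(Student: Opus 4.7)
The plan is to establish the two claims separately. For the finite-support direction, the key observation is that $dist_\mu(q,q')$ depends only on the intersection $(\extension{q}\oplus\extension{q'})\cap\mathrm{supp}(\mu)$, because examples outside the support contribute zero. If $\mathrm{supp}(\mu)$ is finite, there are only $2^{|\mathrm{supp}(\mu)|}$ such intersections, hence only finitely many distinct distance values $\{dist_\mu(q,q'):q'\in\mathrm{CQ}\}$. Any infinite $\prec^\mu_q$-descending chain $\cdots\prec^\mu_q q_2\prec^\mu_q q_1\prec^\mu_q q_0$ would yield an infinite strictly decreasing sequence of these distances, which is impossible inside a finite subset of $\mathbb{R}_{\geq 0}$. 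So the non-well-founded direction cannot occur, and every non-empty set of CQs has a $\preceq^\mu_q$-minimal element (namely one achieving the minimum distance to $q$).

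For the failure of well-foundedness with infinite support, the plan is to exhibit an explicit family. Work over the schema with a single binary relation $R$, and take the Boolean CQs
\[
q()\colondash R(x,x),\qquad q_n()\colondash R(x_0,x_1),R(x_1,x_2),\ldots,R(x_{n-1},x_n)\quad(n\geq 1),
\]
so that $q_n$ asserts the existence of an $R$-walk of length $n$. For each $n\geq 1$, let $I_n$ be the instance consisting of a simple $R$-path on $n{+}1$ distinct vertices, and define $\mu$ by $\mu(I_n)=2^{-n}$, which is a valid discrete probability distribution with infinite support. The verification reduces to the observation that $\extension{q}\subseteq\extension{q_n}$ (a self-loop yields walks of any length), so $\extension{q}\oplus\extension{q_n}=\extension{q_n}\setminus\extension{q}$, and within $\mathrm{supp}(\mu)$ the example $I_k$ satisfies $q_n$ iff $k\geq n$ while never satisfying $q$. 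Consequently $dist_\mu(q,q_n)=\sum_{k\geq n}2^{-k}=2^{-n+1}$, which is strictly decreasing in $n$, yielding the infinite descending chain $q_1\succ^\mu_q q_2\succ^\mu_q q_3\succ^\mu_q\cdots$ and witnessing that $\preceq^\mu$ is not well-founded.

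The main delicate point will be justifying the computation of $\extension{q_n}\cap\mathrm{supp}(\mu)$: one must confirm that the longest $R$-walk in the simple path $I_k$ has length exactly $k$ (which relies on the vertices being distinct, ruling out cycles that would create longer walks), so that $I_k\notin\extension{q_{n}}$ for $k<n$. Everything else is routine bookkeeping with geometric series, together with a brief remark that the $q_n$ are pairwise non-equivalent CQs (their canonical examples are non-isomorphic cores), so the chain consists of genuinely distinct equivalence classes.
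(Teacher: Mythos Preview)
Your argument for the finite-support case is essentially the same as the paper's: with support of size $n$ there are at most $2^n$ possible distance values, so no infinite strictly descending chain can exist.

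For the infinite-support case your construction is correct and \emph{genuinely different} from the paper's. The paper works over a schema $\{E,P\}$, fixes $q()\colondash P(x)$, puts half the mass on a single $P$-free two-clique $B$ and spreads the rest over all structures with $P$ equal to the full domain; it then argues that no fitting CQ (for a certain collection of labeled examples) can be $\preceq^\mu_q$-minimal, invoking the sparse incomparability lemma to produce, for any candidate $q'$ of girth $n$, a strictly closer fitting CQ of larger girth. Your route is more elementary: a single binary relation, the explicit chain $q_n$ of ``walk of length $n$'' queries, and the geometric-mass distribution on simple paths give $dist_\mu(q,q_n)=2^{-n+1}$ directly, so the descending chain is exhibited by a one-line computation rather than an appeal to structural graph theory. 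What the paper's argument buys is that it simultaneously shows the stronger statement that a $\preceq^\mu$-\emph{repair} can fail to exist (i.e., a non-empty set of fitting CQs with no minimal element); your chain, by contrast, lives in the set of all CQs and does not immediately yield a repair-nonexistence example, but it is entirely self-contained and suffices for the proposition as stated.

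One small remark: the aside about the $q_n$ being pairwise non-equivalent as CQs is unnecessary here. The relation $q_{n+1}\prec^\mu_q q_n$ is by definition just $dist_\mu(q,q_{n+1})<dist_\mu(q,q_n)$, which you have already established; distinctness of the underlying CQ-equivalence classes plays no role.
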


As a consequence, we obtain:

\begin{restatable}{proposition}{propmurepairexistence}
    Let $(q,E)$ be an annotated CQ and $\mu$  an example distribution with finite support.
  \begin{enumerate}
    \item If there is any CQ that fits $E$,  then there is a $\preceq^{\mu}$-repair for $(q,E)$. 
  \item If there is any CQ $q'$ that fits $E$ with $q\subseteq q'$,  there is a  $\preceq^{\mu}$-generalization for $(q,E)$. 
  \item
If there is any CQ $q'$ that fits $E$ with $q'\subseteq q$,  there is a $\preceq^{\mu}$-specialization for $(q,E)$. \end{enumerate}
\end{restatable}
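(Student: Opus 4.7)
The plan is to derive all three statements as immediate corollaries of the well-foundedness of $\preceq^{\mu}$ established in the preceding proposition, together with the definitions of $\preceq$-repair, $\preceq$-generalization and $\preceq$-specialization. Recall that well-foundedness for a proximity pre-order $\preceq$ was defined to mean that for every CQ $q$, every non-empty set of CQs admits a $\preceq_q$-minimal element; this is exactly what the definitions of the three notions require.

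For item 1, I would let $S$ be the set of CQs (over the relevant schema and of the appropriate arity) that fit $E$. By hypothesis $S \neq \emptyset$. Since $\mu$ has finite support, the preceding proposition tells us that $\preceq^{\mu}$ is well-founded, so $S$ admits a $\preceq^{\mu}_q$-minimal element $q'$: a CQ $q' \in S$ such that no $q'' \in S$ satisfies $q'' \prec^{\mu}_q q'$. Unfolding definitions, this is precisely what it means for $q'$ to be a $\preceq^{\mu}$-repair for $(q,E)$.

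Items 2 and 3 follow by the same argument applied to the sets
$S_{\textrm{gen}} = \{ q' \mid q' \text{ fits } E \text{ and } q \subseteq q' \}$ and
$S_{\textrm{spec}} = \{ q' \mid q' \text{ fits } E \text{ and } q' \subseteq q \}$,
respectively, which are non-empty by the respective hypotheses. Well-foundedness of $\preceq^{\mu}$ furnishes a $\preceq^{\mu}_q$-minimal element in each set, which matches the definition of $\preceq^{\mu}$-generalization, respectively $\preceq^{\mu}$-specialization.

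There is essentially no obstacle: the substantive content is entirely absorbed into the well-foundedness of $\preceq^{\mu}$, which is the previous proposition. One minor subtlety to mention is that $\preceq^{\mu}$ is only a pre-order rather than a partial order, so ``minimal element'' is understood modulo equivalence of CQs; syntax independence of $\preceq^{\mu}$ (guaranteed by \Cref{prop:dist-order-properties}) ensures that this ambiguity is harmless, as any CQ equivalent to a $\preceq$-repair (resp.\ generalization, specialization) is itself such.
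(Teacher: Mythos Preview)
Your proposal is correct and takes essentially the same approach as the paper: the proposition is stated in the paper immediately after ``As a consequence, we obtain:'' following the well-foundedness result (\texttt{propmubasis}), and no separate proof is given, so the intended argument is precisely the one you spell out --- apply well-foundedness of $\preceq^{\mu}$ to the relevant non-empty set of candidates. Your added remark on syntax independence is a harmless elaboration; the paper's definition of well-foundedness already phrases minimality via the strict relation $\prec_q$, so the pre-order subtlety does not actually arise.
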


\medskip 

On the other hand, there can be infinitely 
many $\preceq^\mu$-repairs. Indeed, it is not difficult to show the following using
a pigeon-hole argument:

\begin{restatable}{proposition}{propinfmurepairs}
Let $\mu$ be any example distribution with finite support.
There are annotated CQs $(q,E)$ for which there are infinitely many $\preceq^\mu$-repairs, up to equivalence.
\end{restatable}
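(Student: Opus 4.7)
The plan is to exhibit, for any given $\mu$ with finite support, a single annotated Boolean CQ $(q,E)$ that admits infinitely many pairwise non-equivalent fitting CQs, all lying in the same (zero-distance) class under $dist_\mu$. This automatically makes each of them $\preceq^\mu_q$-minimal, hence a $\preceq^\mu$-repair.

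Since only finitely many relation symbols occur in the support of $\mu$, I would first pick a binary symbol $R$ fresh with respect to all instances in $\mathrm{supp}(\mu)$, and set $q() \colondash R(x,y)$ together with $E = (\emptyset, \{I_0\})$, where $I_0 = \{R(a,b), R(b,a)\}$. For each $n\geq 1$, let $q'_n$ be the Boolean CQ whose canonical example is a directed $R$-cycle of length $2n+1$. Following exactly the reasoning of Example~\ref{ex:dom-neg-notexists}, each $q'_n$ fits $E$: an odd directed cycle admits no consistent 2-coloring, so it has no homomorphism into $I_0$. Using the classical fact that the directed cycle of length $a$ maps into the directed cycle of length $b$ iff $b \mid a$, the queries $q'_n$ are pairwise non-equivalent.

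The pigeonhole step then consists in observing that the freshness of $R$ forces $q$ and every $q'_n$ to evaluate to $\emptyset$ on every $e \in \mathrm{supp}(\mu)$ (as such $e$ contains no $R$-fact, no homomorphism from the body of $q$ or $q'_n$ into $e$ exists). Thus $\mathrm{supp}(\mu) \cap (\extension{q} \oplus \extension{q'_n}) = \emptyset$, and therefore $dist_\mu(q,q'_n) = 0$ for every $n$. Since $dist_\mu$ is non-negative, no fitting CQ can lie strictly below $q'_n$ in $\preceq^\mu_q$, so each $q'_n$ is a $\preceq^\mu$-repair for $(q,E)$, and there are infinitely many of them up to equivalence.

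I do not foresee a substantial obstacle; the only subtlety is to ensure that $R$ is chosen outside the schema of the support of $\mu$, which is precisely what realizes the pigeonhole collapse suggested by the hint: infinitely many pairwise non-equivalent fitting CQs are rendered indistinguishable to $\mu$ and therefore all share the minimum $dist_\mu$-value with respect to $q$.
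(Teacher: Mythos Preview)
Your argument is correct. Both your proof and the paper's hinge on the same underlying observation: once infinitely many pairwise non-equivalent fitting CQs all lie at $dist_\mu$-distance~$0$ from $q$, each of them is automatically a $\preceq^\mu$-repair. The paper establishes this non-constructively: it notes that a CQ's behavior on the finite support of $\mu$ is one of at most $2^{|\mathrm{supp}(\mu)|}$ patterns, applies pigeonhole to get a pattern realized by infinitely many CQs, takes that pattern as the labeled-example set $E$, and lets $q$ be any CQ fitting $E$. You instead realize the same collapse \emph{explicitly}: by choosing $R$ fresh for $\mathrm{supp}(\mu)$ you force every $R$-only CQ to be false on every support example, so all of them (in particular $q$ and each odd-cycle $q'_n$) share the all-negative pattern and sit at distance~$0$. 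Your route buys a concrete $(q,E)$ and a concrete infinite family of repairs; the paper's route avoids having to manufacture a fresh symbol and works directly over whatever schema $\mu$ uses.

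One small point to tidy up: your construction yields Boolean queries and a Boolean example. If the examples in $\mathrm{supp}(\mu)$ have arity $k>0$, you should either remark that $\extension{q}\oplus\extension{q'_n}$ then has $\mu$-measure~$0$ for trivial arity reasons, or (cleaner) pad your queries to arity $k$ using, say, a second fresh unary symbol $P$ and answer atoms $P(x_1),\ldots,P(x_k)$, extending $I_0$ with a single $P$-fact. Either way the argument goes through unchanged.
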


\medskip

In light of this, we omit a complexity-theoretic analysis for 
$\preceq^\mu$-repairs.

\section{Discussion}
\label{sec:discussion}

We proposed and studied notions
of \emph{$\preceq$-generalizations}, \emph{$\preceq$-specializations}, and \emph{$\preceq$-repairs}, parameterized by a proximity pre-order $\preceq$, providing  a  
principled framework
for example-driven query debugging and refinement (an idea partially inspired by the interactive schema-mapping design tool  EIRINI~\cite{Alexe2011:designing}).
    We explored two ways to obtain a proximity pre-order for CQs: 
containment-based and edit distance-based. In each case, we assessed the behavior of the obtained repair notions through examples, and we studied
the existence, verification and construction problems, as well as the size of repairs.
Other algorithmic problems may be considered in  followup work, such as 
repair enumeration (cf.~\cite{Kimelfeld2020:counting}) and  computing ``possible'' or ``certain'' answers across all repairs of a given CQ (as in~\cite{Barcelo2020:static} for query approximations). 
\looseness=-1

Based on our findings, $\preceq^{\dom}$-generalizations
and $\preceq^{\dom}$-refinements are reasonbly well-behaved (although
the latter do not always exist) while unconstrained $\preceq^{\dom}$-repairs are not;
$\preceq^{\editdist}$-repairs behave   
favorably: they exist whenever a fitting CQ exists; there are always only finitely many repairs up to equivalence; and the main algorithmic tasks are decidable with typically lower complexity than for $\preceq^{\dom}$-repairs; in addition, $\preceq^{\editdist}$-repairs tend to be of smaller size (Prop.~\ref{prop:size-of-editdist-repairs}). 
Although $\preceq^{\editdist}$-repairs, too, in some cases
 display surprising behavior (Example~\ref{ex:editdist-positive-not-a-generalization}), this may be an unavoidable consequence of our syntax-independence requirement  taken together with the inherently syntactic nature of edit distance.
 
An important outstanding issue with 
$\preceq^{\editdist}$-repairs is to design practical algorithms for constructing them. In \cite{ijcai2023:satbased}, a SAT-based
 approach for computing minimal-size fitting $\mathcal{ELI}$-concepts (i.e., Berge-acyclic connected unary CQs) was proposed and evaluated, showing promising performance. While a SAT-solver 
 may not be applicable here due to
 the higher complexity of the problem, we
 believe that inspiration can be taken from this approach. It may also be worthwhile to study approximation algorithms for distance-based query repairs,
i.e, algorithms that produce fitting CQs that have near-minimal distance to the input CQ.

Naturally, these result are specific to the particular class of  queries  we considered: CQs.
One may also study query repairs for other query classes (e.g., self-join free CQs, unions of CQs, or nested queries). Note, in particular, that self-join free CQs are cores and our examples of counterintuitive behavior of $\preceq^{\editdist}$-repairs are based on CQs that are not cores. Moreover,
the computational problems associated with $\preceq^{\editdist}$-repairs are
often of lower complexity for CQs that are cores.


It is also natural to let the space of candidate repairs depend on the input query in a stronger way. For instance, we may require that
the join structure of the query remains fixed, and  that
repairs differ only in their WHERE-clause (cf.~\cite{DBLP:conf/aaai/MusleaL05}).
\looseness=-1

Among different avenues for further research, let us mention one: query repair operations could be studied from a more structural perspective. For instance,
how does repairing a query w.r.t.~a collection of labeled examples $E\cup E'$ compare to repairing it w.r.t.~$E$ followed by repairing it w.r.t.~$E'$? And, assuming a ``true'' CQ $q^*$, if we repair a given CQ $q$ repeatedly based on labeled examples for $q^*$, does this process converge towards $q^*$ in a formal sense?

\bibliographystyle{plainurl}
\bibliography{bib}

\appendix

\section{Additional preliminaries}

\begin{definition}[Conjunction of CQs]
\label{def:conjunction}
Let $q_1(x_1, \ldots, x_k)$ and $q_2(y_1, \ldots, y_k)$ be two CQs of the same arity that do not have any variables in common.  Then
$q_1\land q_2$ denotes the CQ constructed 
as follows. 
First, let $\sim$ denote the smallest
equivalence relation on $\{1, \ldots, k\}$
such that $i\sim j$ holds whenever $x_i=x_j$
or $y_i=y_j$. Next, let $z_1, \ldots, z_k$ be
a sequence of fresh variables such that $z_i=z_j$ iff
$i\sim j$. Finally, $q_1\land q_2$ denotes the
query 
$q'(z_1, \ldots, z_k) \colondash \phi$,
where $\phi$ is obtained by concatenating the 
bodies of $q_1$ and $q_2$ and replacing all
occurrences of answer variables $x_i$ respectively $y_i$ by $z_i$ (for all $i\leq n$).
\end{definition}

The following lemma, which is easy to prove, justifies our use of the notation
$q_1\land q_2$.

\begin{lemma}
   For all CQs $q_1, q_2$, it holds that 
    $\extension{q_1\land q_2}=\extension{q_1}\cap \extension{q_2}$
\end{lemma}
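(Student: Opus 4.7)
\medskip\noindent\textbf{Proof plan.} The plan is to prove both inclusions by appealing to the standard homomorphism characterization of query answering: for any CQ $q$, one has $(I,\mathbf{a})\in\extension{q}$ iff there is a homomorphism from the canonical example $e_q$ to $(I,\mathbf{a})$ (this is implicit in the preliminaries). Using Definition~\ref{def:conjunction}, let $q_1(x_1,\dots,x_k)\colondash\varphi_1$ and $q_2(y_1,\dots,y_k)\colondash\varphi_2$ be the given CQs (taken variable-disjoint, as allowed since CQs are considered up to renaming), let $\sim$ be the equivalence relation on $\{1,\dots,k\}$ induced by equalities among the $x_i$'s and among the $y_i$'s, and let $z_1,\dots,z_k$ be the chosen fresh answer variables with $z_i=z_j$ iff $i\sim j$. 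Fix $(I,\mathbf{a})=(I,\langle a_1,\dots,a_k\rangle)$.

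For the inclusion $\extension{q_1\land q_2}\subseteq\extension{q_1}\cap\extension{q_2}$, I would take a homomorphism $h:e_{q_1\land q_2}\to (I,\mathbf{a})$, so $h(z_i)=a_i$ and $h$ preserves every atom in the (renamed) concatenation of $\varphi_1$ and $\varphi_2$. From $h$ I would read off two homomorphisms: $h_1:e_{q_1}\to(I,\mathbf{a})$ defined by $h_1(x_i)=a_i$ and $h_1(v)=h(v)$ on the existential variables of $q_1$, and symmetrically $h_2$ for $q_2$. The atoms of $\varphi_1$ (after the $x_i\mapsto z_i$ rewriting) are among the atoms of $q_1\land q_2$, so preservation by $h$ directly gives preservation by $h_1$; likewise for $h_2$. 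Hence $\mathbf{a}\in q_1(I)$ and $\mathbf{a}\in q_2(I)$.

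For the converse $\extension{q_1}\cap\extension{q_2}\subseteq\extension{q_1\land q_2}$, suppose $h_1:e_{q_1}\to(I,\mathbf{a})$ and $h_2:e_{q_2}\to(I,\mathbf{a})$ are witnessing homomorphisms. Define $h$ on the variables of $q_1\land q_2$ by $h(z_i)=a_i$ and, on the (disjoint) existential variables, by $h_1$ respectively $h_2$. Well-definedness on the $z_i$'s amounts to showing that whenever $z_i=z_j$, i.e.\ $i\sim j$, we have $a_i=a_j$; this follows because every generator of $\sim$ forces $a_i=a_j$ ($x_i=x_j$ forces $a_i=h_1(x_i)=h_1(x_j)=a_j$, and symmetrically for $y_i=y_j$), and the property is closed under the reflexive/transitive/symmetric closure. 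Preservation of atoms is immediate from $h_1$ and $h_2$ after tracking the renaming $x_i,y_i\mapsto z_i$. Thus $h$ is a homomorphism $e_{q_1\land q_2}\to(I,\mathbf{a})$, giving $\mathbf{a}\in(q_1\land q_2)(I)$.

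The whole argument is a routine bookkeeping exercise; the only point requiring care is checking that the equivalence relation $\sim$ is ``sound'' for the combined map $h$, i.e.\ that the identifications forced by $\sim$ are respected by $\mathbf{a}$. This is the step I would write out most carefully, since it is precisely where the definition of $\sim$ as the smallest equivalence relation closing under both queries' answer-variable equalities is used.
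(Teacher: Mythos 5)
Your proof is correct; the paper itself gives no proof of this lemma, dismissing it as ``easy to prove,'' and your argument via the homomorphism characterization of $\extension{q}$ is exactly the standard one that is intended. You also correctly isolate the one non-trivial point --- that the identifications forced by the smallest equivalence relation $\sim$ are respected by the answer tuple $\mathbf{a}$, which follows because the relation $\{(i,j)\mid a_i=a_j\}$ is an equivalence relation containing the generators of $\sim$.
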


The following lemma expresses some 
well known (and easily verified) properties of the direct
product construction:

\begin{lemma}\label{lem:products}
    For all CQs $q$ and data examples $e,e'$,
    \begin{enumerate}
\item $e_1\times e_2\to e_1$ and $e_1\times e_2\to e_2$,
        \item 
$e\times e'\in \extension{q}$ if and only if $e\in\extension{q}$ and $e'\in\extension{q}$.
    \end{enumerate}
\end{lemma}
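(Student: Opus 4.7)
For part 1, the plan is to exhibit the two projection maps explicitly. Writing $e_1 = (I_1, \mathbf{a})$ and $e_2 = (I_2, \mathbf{b})$, I would define $\pi_1 : \adom(I_1 \times I_2) \to \adom(I_1)$ by $\pi_1(\langle a,b\rangle) = a$, and similarly $\pi_2$. That $\pi_1$ preserves facts is immediate from the definition of $I_1 \times I_2$: every fact $R(\langle a_1,b_1\rangle, \ldots, \langle a_n,b_n\rangle)$ in $I_1 \times I_2$ exists precisely because $R(a_1,\ldots,a_n)$ is a fact of $I_1$, so $\pi_1$ sends it to a fact of $I_1$. Finally, the distinguished tuple of $e_1 \times e_2$ is by construction the componentwise pairing of $\mathbf{a}$ and $\mathbf{b}$, so $\pi_1$ sends it to $\mathbf{a}$, as required. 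The case of $\pi_2$ is symmetric.

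For part 2, I would use the standard Chandra--Merlin-style reformulation already recorded in the preliminaries: $e \in \extension{q}$ iff $e_q \to e$.

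For the forward direction, suppose $e_q \to e \times e'$. Composing with the projections from part 1 yields $e_q \to e$ and $e_q \to e'$, hence $e \in \extension{q}$ and $e' \in \extension{q}$.

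For the backward direction — which is the only step requiring real content — suppose we have homomorphisms $h_1 : e_q \to e$ and $h_2 : e_q \to e'$. I would define the pairing $h : e_q \to e \times e'$ by $h(x) = \langle h_1(x), h_2(x)\rangle$ on each variable $x$ of $q$. For each atom $R(x_1,\ldots,x_n)$ in the body of $q$, the fact $R(h_1(x_1),\ldots,h_1(x_n))$ is in $I_e$ and $R(h_2(x_1),\ldots,h_2(x_n))$ is in $I_{e'}$, so by the definition of the direct product, $R(\langle h_1(x_1),h_2(x_1)\rangle, \ldots, \langle h_1(x_n),h_2(x_n)\rangle)$ is a fact of $I_{e \times e'}$. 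Hence $h$ preserves all facts. Moreover, on the tuple of answer variables $\mathbf{x}$ of $q$, $h$ sends $\mathbf{x}$ to the componentwise pairing of $h_1(\mathbf{x})$ and $h_2(\mathbf{x})$, which is exactly the distinguished tuple of $e \times e'$.

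No step here looks like a real obstacle; the only thing to take care of is being explicit that the definition of direct product on data examples pairs distinguished tuples componentwise, so that all homomorphism conditions (both on facts and on distinguished tuples) line up correctly. The pairing-of-homomorphisms construction in the backward direction of part 2 is the conceptual heart of the lemma, and it is essentially the universal property of $\times$ as a categorical product in the category of data examples with homomorphisms.
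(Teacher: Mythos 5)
Your proof is correct and follows exactly the route the paper has in mind: the paper omits the argument, noting only that the lemma ``follows (using the Chandra-Merlin theorem) from well-known properties of direct products,'' and those properties are precisely the projection homomorphisms and the pairing-of-homomorphisms construction you spell out. Nothing to add.
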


\begin{lemma}\label{lem:distributivity-boolean}
Let
    $q_1, q_2$ be Boolean CQs and let $I,J$ be instances. If $I\times I_{q_1}\to J$ 
    and $I\times I_{q_2}\to J$, then 
    $I\times I_{q_1\land q_2}\to J$.
\end{lemma}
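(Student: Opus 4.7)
The plan is to exploit the fact that the Boolean conjunction of two CQs has a very simple canonical instance, which then makes the direct product on the left side of the desired homomorphism decompose cleanly.

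First, I would observe that because $q_1,q_2$ are Boolean (no answer variables) and, by the convention in Definition~\ref{def:conjunction}, their variable sets are taken disjoint, the canonical instance $I_{q_1\land q_2}$ is nothing more than the disjoint union $I_{q_1}\sqcup I_{q_2}$. In particular, $\adom(I_{q_1\land q_2})$ partitions into $\adom(I_{q_1})$ and $\adom(I_{q_2})$, and every fact of $I_{q_1\land q_2}$ lives entirely in one component.

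Second, I would note the elementary distributivity fact that direct product distributes over disjoint union: namely, $I\times(I_{q_1}\sqcup I_{q_2})$ is (naturally isomorphic to) the disjoint union $(I\times I_{q_1})\sqcup (I\times I_{q_2})$. This is because any fact of $I\times(I_{q_1}\sqcup I_{q_2})$ pairs a fact of $I$ with a fact of $I_{q_1}\sqcup I_{q_2}$, and the latter is either entirely a fact of $I_{q_1}$ or entirely a fact of $I_{q_2}$; hence pair-elements of the form $\langle a,c\rangle$ with $c\in\adom(I_{q_i})$ only ever co-occur with other pair-elements of the same index $i$.

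Finally, given homomorphisms $h_1\colon I\times I_{q_1}\to J$ and $h_2\colon I\times I_{q_2}\to J$ assumed in the hypothesis, I would define the homomorphism $h\colon I\times I_{q_1\land q_2}\to J$ by gluing: set $h(\langle a,c\rangle)=h_i(\langle a,c\rangle)$ whenever $c\in\adom(I_{q_i})$, for $i\in\{1,2\}$. This is well-defined by disjointness of the components. Any fact of $I\times I_{q_1\land q_2}$ lies entirely in one of the two product components by the previous step, so on that fact $h$ agrees with the corresponding $h_i$, and hence preserves it.

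There is no real obstacle here; the only point worth checking carefully is that the convention on fresh variables in Definition~\ref{def:conjunction} really does make $I_{q_1\land q_2}$ a disjoint union in the Boolean case (so that no gluing of the two homomorphisms is needed on shared elements). Once this observation is in place, the rest is bookkeeping with the direct product.
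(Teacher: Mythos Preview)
Your proposal is correct and follows essentially the same approach as the paper's proof: both observe that $I_{q_1\land q_2}$ is the disjoint union $I_{q_1}\uplus I_{q_2}$ for Boolean CQs, that direct product distributes over this disjoint union, and that the two given homomorphisms can therefore be glued (unioned) to obtain the desired one. Your write-up is in fact more detailed than the paper's, which dispatches the same three steps in three sentences.
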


\begin{proof}
  For Boolean CQs $q_1, q_2$, it holds that $I_{q_1\land q_2}$ is isomorphic to the 
  disjoint union $I_{q_1}\uplus I_{q_2}$. Furthermore, it is easy
  to see that
  $I\times (I_{q_1}\uplus I_{q_2})$ is isomorphic to
  $(I\times I_{q_1})\uplus (I\times I_{q_2})$.
  It follows that  homomorphisms $h_1:I\times I_{q_1}\to J$ 
    and $h_2:I\times I_{q_2}\to J$ can be combined by unioning them, to obtain a homomorphism  
    $h:I\times I_{q_1\land q_2}\to J$.
\end{proof}

\section{Proofs for Sect.~\ref{sec:containment-based}}

\begin{lemma}\label{lem:dom-repair-equiv}
    For all CQs $q,q_1, q_2$, it holds that 
    $q_1\prec^{\dom}_q q_2$  iff $q_1\preceq^{\dom}_q q_2$ and 
$q_1, q_2$ are non-equivalent.
\end{lemma}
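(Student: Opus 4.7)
The plan is to unfold the definitions and reduce everything to a simple set-theoretic identity. By definition, $q_1 \prec^{\dom}_q q_2$ means $q_1 \preceq^{\dom}_q q_2$ together with $q_2 \not\preceq^{\dom}_q q_1$. So, assuming $q_1 \preceq^{\dom}_q q_2$ throughout, the statement reduces to showing that $q_2 \not\preceq^{\dom}_q q_1$ is equivalent to $q_1 \not\equiv q_2$. Recall also the basic identity $\extension{q_1} = \extension{q_2}$ iff $q_1 \equiv q_2$, which I will use freely.

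For the contrapositive of one direction, if $q_1 \equiv q_2$ then $\extension{q_1} = \extension{q_2}$, hence $\extension{q} \oplus \extension{q_1} = \extension{q} \oplus \extension{q_2}$, which gives both $q_1 \preceq^{\dom}_q q_2$ and $q_2 \preceq^{\dom}_q q_1$, contradicting $q_1 \prec^{\dom}_q q_2$. For the other direction, assume $q_1 \preceq^{\dom}_q q_2$ and $q_2 \preceq^{\dom}_q q_1$; I will deduce $q_1 \equiv q_2$. The two assumptions yield the set equality $\extension{q} \oplus \extension{q_1} = \extension{q} \oplus \extension{q_2}$. Now taking the symmetric difference of both sides with $\extension{q}$ and using that $\oplus$ is associative with $\extension{q} \oplus \extension{q} = \emptyset$, one obtains $\extension{q_1} = \extension{q_2}$, hence $q_1 \equiv q_2$.

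There is essentially no obstacle: the argument rests entirely on the fact that $A \mapsto A \oplus \extension{q}$ is an involution on sets of data examples, hence injective. The only thing worth being careful about is not confusing the semantic equivalence $q_1 \equiv q_2$ with its (here equivalent) reformulation $\extension{q_1} = \extension{q_2}$, which is valid because extensions of queries are determined by, and determine, the semantic function $I \mapsto q(I)$.
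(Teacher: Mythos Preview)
Your proof is correct and follows essentially the same approach as the paper's own proof, which simply asserts (without justification) that $q_1\preceq^{\dom}_q q_2$ and $q_2\preceq^{\dom}_q q_1$ together imply equivalence of $q_1$ and $q_2$. You have supplied the missing detail by invoking the cancellation property of symmetric difference, which is exactly what underlies the paper's one-line observation.
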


\begin{proof}
    It suffices to observe that $q_1\prec^{\dom}_q q_2$ and $q_2\prec^{\dom}_q q_1$ together imply 
    that $q_1$ and $q_2$ are equivalent.
\end{proof}

\begin{lemma}\label{lem:genspec}
Let $(q,E)$ be any annotated CQ.
\begin{enumerate}
\item The $\preceq^{\dom}$-generalizations for $(q,E)$ are precisely the $\preceq^{\dom}$-repairs $q'$ for $(q,E)$ satisfying $q\subseteq q'$.
\item The $\preceq^{\dom}$-specializations for $(q,E)$ are precisely the $\preceq^{\dom}$-repairs $q'$ for $(q,E)$ satisfying $q'\subseteq q$.
\end{enumerate}
\end{lemma}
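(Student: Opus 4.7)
The plan is to reduce both items to a single monotonicity property of the containment-of-difference pre-order, namely that $\preceq^{\dom}_q$-comparability preserves containment relations with the reference query $q$. More precisely, I will first establish: (a) if $q\subseteq q'$ and $q''\preceq^{\dom}_q q'$, then $q\subseteq q''$; and dually (b) if $q'\subseteq q$ and $q''\preceq^{\dom}_q q'$, then $q''\subseteq q$. Together with the trivial observation that the condition defining a $\preceq^{\dom}$-repair quantifies over a strictly larger class of competitor queries than does the condition for a $\preceq^{\dom}$-generalization or a $\preceq^{\dom}$-specialization, (a) and (b) will yield both items directly.

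To prove (a), I would unfold the definitions. Since $q\subseteq q'$ gives $\extension{q}\subseteq \extension{q'}$, we have $\extension{q}\oplus\extension{q'} = \extension{q'}\setminus\extension{q}$, which is disjoint from $\extension{q}$. The hypothesis $q''\preceq^{\dom}_q q'$, i.e., $\extension{q}\oplus\extension{q''}\subseteq \extension{q}\oplus\extension{q'}$, then forces $\extension{q}\oplus\extension{q''}$ to be disjoint from $\extension{q}$ as well; but this is precisely the assertion that every example in $\extension{q}$ also lies in $\extension{q''}$, i.e., $q\subseteq q''$. Claim (b) is proved by the symmetric calculation, interchanging the roles of $\extension{q}$ and $\extension{q'}$.

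Granted (a) and (b), both parts of the lemma follow. For part~(1), the direction from repair to generalization is immediate: if $q'$ is a $\preceq^{\dom}$-repair satisfying $q\subseteq q'$, then it fits $E$ and $q\subseteq q'$, while the non-existence clause in the definition of a repair already rules out \emph{every} potential competitor $q''$ with $q''\prec^{\dom}_q q'$ that fits $E$, hence in particular those also satisfying $q\subseteq q''$. For the converse direction, assume $q'$ is a $\preceq^{\dom}$-generalization and argue by contradiction: a putative $q''$ fitting $E$ with $q''\prec^{\dom}_q q'$ would, by (a), also satisfy $q\subseteq q''$, contradicting the generalization condition. Part~(2) is symmetric, using (b) in place of (a). The main point to watch out for is that $\preceq^{\dom}_q$-dominance does not a priori imply any containment between the compared queries themselves, so the monotonicity step (a)/(b) is essential and is where the specific form of $\preceq^{\dom}$ (as opposed to its abstract pre-order properties) enters; beyond this there is no real obstacle.
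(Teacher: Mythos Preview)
Your proposal is correct and follows essentially the same route as the paper's proof. The paper's argument for item~1 is: the repair-to-generalization direction is immediate, and for the converse, any competitor $q''$ with $q''\prec^{\dom}_q q'$ must satisfy $q\subseteq q''$ (since $q\subseteq q'$), yielding the contradiction; the inference ``since $q\subseteq q'$, it follows that $q\subseteq q''$'' is exactly your monotonicity claim (a), which the paper invokes without spelling out the symmetric-difference calculation you provide.
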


\begin{proof}
    We will only discuss 1. The argument for 2 is analogous.
    The right-to-left direction is immediate. For the left-to-right
    direction, let $q'$ be a $\preceq^{\dom}$-generalization for $(q,E)$ and
    suppose for the sake of a contradiction that $q$ is not a 
    $\preceq^{\dom}$-repair. Then there is a CQ $q''$ that fits $E$ such that
    $\extension{q}\oplus \extension{q''}\subsetneq \extension{q}\oplus\extension{q'}$.
    Since $q\subseteq q'$, it follows that $q\subseteq q''$. This contradicts
    the fact that $q'$ is a $\preceq^{\dom}$-generalization for $(q,E)$.
\end{proof}

\propdomgenspecasexpected*

\begin{proof}
We prove 1. The proof for 2 is similar.
The left-to-right direction is immediate from Lemma~\ref{lem:genspec}.
For the right-to-left direction:
if $q\subseteq q'$, then
$\extension{q}\oplus\extension{q'}=\extension{q'}\setminus\extension{q}$, and hence a
$q''\preceq^{\dom}_q q'$ holds iff
$\extension{q}\subseteq\extension{q''}\subsetneq \extension{q'}$. 
\end{proof}

\thmdomgeneralizationmostspecific*

\begin{proof}
    1. Observe that a CQ $q'$ fits $e_q$ as a positive example if and only if $q\subseteq q'$, which implies that $\extension{q'}\oplus\extension{q} = \extension{q'}\setminus\extension{q}$. It follows that
a CQ is a $\preceq^{\dom}$-generalization for $(q,E)$
if and only if it is most-specific among the CQs
that fit~$(E^+\cup\{e_q\},E^-)$.

2.    Observe that $q_\bot\subseteq q$ holds for \emph{every} CQ $q$, 
   and, consequently, 
   $\extension{q_\bot}\oplus \extension{q} = \extension{q}\setminus \extension{q_\bot}$.
   It follows that $q_1\preceq_{q_\bot} q_2$ holds if and only if
   $q_1\subseteq q_2$. Therefore, 
   a query is a $\preceq^{\dom}$-generalization for $(q_\bot,E)$ 
   if and only if 
   if it is a most-specific fitting CQ for $E$.
\end{proof}

\coruniquedomgeneralization*

\begin{proof}
    It was observed in~\cite{pods2023:extremal}
    that, for any collection of labeled examples $E$, if there exists a fitting CQ for $E$ then there is, up to equivalence, 
    precisely one most-specific fitting CQ for $E$. By the first item of Thm.~\ref{thm:dom-generalization-most-specific}, the same
    holds for $\preceq^{\dom}$-generalizations.
\end{proof}

\corcomplexitydomgeneralizations*

\begin{proof}
    By Thm.~\ref{thm:dom-generalization-most-specific}
    Item 1, 2, and 3 follow immediately from 
    analogous complexity results for
    the existence, verification, and 
    construction problem for most-specific fitting CQs \cite{pods2023:extremal}. For 4, it suffices to 
    let $e_n$ be the directed $R$-cycle of length $p_n$, where $p_n$ is the $n$-th
    prime number. 
    It follows
    from the prime number theorem that $e_n$ is of size $O(n\ln n)$. Furthermore, the direct product $\Pi_{i=1\ldots n} (e_i)$ is homomorphically equivalent to
    a directed $R$-cycle of length $\Pi_{i=1\ldots n} (p_i)$,
    from which it follows that every most-specific fitting CQ for 
    $E_n^+$ has size at least $2^n$. It follows by Thm.~\ref{thm:reduction-specialization} that
    every $\preceq^{\dom}$-specialization for $(q,E^+_n)$
    is of size at least $2^n$. See also \cite[Example~4.1]{SigmodRecordColumn}.
\end{proof}

\thmreductionspecialization*

\begin{proof}
Let $k$ be the arity of the CQs and examples involved.
Before describing the proof, we need
    some auxiliary terminology. We will need to 
    work with conjunctive queries that do not satisfy the 
    safety condition. We will call such queries \emph{pre-CQs}. Most of the definitions and facts which were stated in Sect.~\ref{sec:prelim} for CQs (e.g., the definition of conjunction, the definition of canonical examples, and the relationship between query-containment and homomorphisms)   apply equally to pre-CQs.
    
    We can associate with every CQ $q(x_1, \ldots, x_k)$ an equivalence relation $\sim_q$ over the set
    $\{1, \ldots, k\}$, namely where
    $i\sim_q j$ holds iff $x_i=x_j$.
    Conversely, we can associate to 
    every equivalence relation $\sim$
    over $\{1, \ldots, k\}$ a pre-CQ
    $q^\sim$, namely  
    \[q^\sim(\textbf{x})\colondash\] with empty body, where
    $\textbf{x}=x_1\ldots x_k$ is a tuple of variables such that, for all $i,j\leq k$, 
    $x_i=x_j$ iff $i\sim j$. In other words, $q^{\sim}$ expresses equality constraints between the different elements of the answer tuples as dictated by $\sim$.

    By the \emph{fact graph} of a CQ $q$ we mean the graph
    whose nodes are the atoms and where two atoms are connected by an edge if they share an existential variable (that is, there is a non-answer variable that occurs in both atoms). By a \emph{subquery} of $q$ we will mean a pre-CQ whose atoms are a subset
    of the atoms of $q$.  
    By a \emph{connected component} of $q$ (or simply, a \emph{component} of $q$) we 
    will mean a subquery of $q$ whose atoms form a connected component in the fact graph of $q$. It is easy to see that every CQ $q$ is equivalent to the conjunction $q^{\sim_q}\land\bigwedge_{\text{$q'$ a connected component of $q$}}q'$. 
    
        From 1 to 2: Let $q'(\textbf{y})$ be a $\preceq^{\dom}$-specialization for $(q(\textbf{x}),E)$. 
    We may assume without loss of generality that $q'$
    is a core. To simplify notation, we will denote $\sim_{q'}$ simply by $\sim$ (note that, by assumption, $\sim_q$ is the identity relation).
    We will denote by
    $q/_\sim$ the quotient of $q$ with respect to the equivalence relation $\sim$. Observe that, for examples $e$ that are consistent with $\sim $ (in other words, that satisfy $q^\sim$), it holds that
    $e\in\extension{q/_\sim}$ iff $e\in\extension{q}$.
    Let $q'_2$ be the pre-CQ that is the conjunction of $q^{\sim}$ together with those 
    connected components of $q'$ that do not homomorphically map to $q/_\sim$.  
    If $q'_2$ happens to satisfy the safety condition, we may take
    $q''$ to be $q'_2$. If not, we construct $q''$ as follows:
    we start with $q'_2$, and,
    for each answer variable $y_i$ of $q'$ that does not occur in $q'_2$,
    we choose an arbitrary relation $R$ and index $j\leq arity(R)$ for
    which it holds that $y_i$ occurs in the $j$-th position of an $R$-atom
    in $q'$ (such $R$ and $j$ must exist since $q'$ is safe), and
    we add the atom $R(u_1, \ldots, u_n)$ where
    for each $j\leq n$, $u_{j'}$ is a fresh variable except if $j'=j$ in which case
    $u_{j'}=y$. 
    It is not difficult to see that $|q''|\leq |q'|$.

            We will show that item 2 holds for $q''$ as constructed above. 

    \medskip\par\noindent
    Claim 1: $q''$ fits $(E^+,E'^-)$.
    
    \medskip
    For the positive examples, 
    since $q'$ fits $E^+$ and $q'\subseteq q''$, we have that $q''$ fits $E^+$.
    For the negative examples, since $q'$ fits $E'^-$, it must be the
    case that, for each negative example $e\in E'^-$, either $q^\sim$ is 
    not satisfied by $e$, or
    there is a component of $q'$ that does not map homomorphically to it. 
    In the former case, $q''$ fits $e$ as a negative example since it includes $q^{\sim}$. Otherwise,  
    the component of $q'$ in question cannot homomorphically map to $q/_\sim$ 
    (otherwise $q$ would fit the negative example in question, which, by construction of $E'^-$, is not the case). Therefore, it must belong to $q'_2$ and hence to $q''$. Therefore, $q''$ fits the negative examples $E'^-$.

     \medskip\par\noindent
    Claim 2: $q'$ is equivalent to $q\land q''$.

    \medskip
    Since $q'$ is a $\preceq^{\dom}$-specialization
    of $q$, we have that $q'\subseteq q$.
   It is also clear from the construction that 
   $q'\subseteq q''$ and therefore
   $q'\subseteq q\land q''$. 
   Furthermore, it is easy to see that 
   $q\land q''$ fits $E$. Therefore, since
   $q'$ is a $\preceq^{\dom}$-specialization for $(q,E)$,
   it cannot be the case that $q'\subsetneq q\land q''$. 
   In other words, $q'$ is equivalent to $q\land q''$.
    
    \medskip\par\noindent
    Claim 3: $q''$ is weakly most general fitting for $(E^+,E'^-)$.
    
    \medskip
    Suppose for the sake of a contradiction that there is a CQ $q'''$ that fits $(E^+,E'^-)$ and such that $q''\subsetneq q'''$. 
    Then, in particular, $\sim_{q'''}\subseteq\sim_{q''}$.  
    Now, we take the conjunction of 
    $q$ and $q'''$. Observe that
    $q\land q'''$ fits $E=(E^+,E^-)$
    and     $q\land q''\subseteq q\land q'''$.
    We claim that $q\land q'''\not\subseteq q''$, hence 
    $q\land q''\subsetneq q\land q'''$, 
    contradicting the fact that $q\land q''$ is a $\preceq^\dom$-specialization for $(q,E)$.
    Suppose for the sake of a contradiction  that $q\land q'''\subseteq q''$. Then in particular, $\sim_{q'''}$ cannot be strictly contained in $\sim_{q''}$.
    Since $q'''\not\subseteq q''$, 
    some component of $q''$ does not homomorphically map to 
    $q'''$. This component cannot be one of the atoms added
    to make $q''$ safe, hence it must be one of the components
     that do not map to $q$. This yields our contradiction,
     since, in order for $q\land q'''\subseteq q''$ to hold,
     each component of $q''$ must homomorphically map either 
     to $q'''$ or to $q$ (note that we are using here the fact
     that $\sim_{q\land q'''}=\sim_{q''}$).
    
    \medskip\par\noindent
    From 2 to 1:
    Assume $q$ fits $E^+$ and $q' = q\land q''$ where $q''$ is a weakly most-general fitting CQ for $(E^+,E'^-)$. We claim that $q'$ is a $\preceq^{\dom}$-specialization, for $(q,E)$. 
    
    Suppose
    for the sake of a contradiction that there
    exists a CQ $q'''$ that fits $E$ and such that
    $q'\subsetneq q'''\subseteq q$. 
        In particular, $q'''$ fits $E'^-$. 
    Let $q''''$ be the conjunction of $q^{\sim_{q'''}}$ together with
    those components of $q'''$ that do not map to $q/_{\sim}$.
    We claim that $q''''$ fits $(E^+,E'^-)$ and that
    $q''''\not\subseteq q''$, contradicting the assumption  that
    $q''$ is weakly most-general fitting.

    Since $q'''\subseteq q''''$, it is clear that $q''''$ fits $E^+$.
    For each negative example in $e\in E'^-$,  either  $q^{\sim_{q'''}}$ is not satisfied by $e$, or there is a component of $q'''$ that does not homomorphically maps to $e$.
    In the former case, since $q''''$ includes $q^{\sim_{q'''}}$, 
    $q''''$ fits $e$ as a negative example. Otherwise,
    the component of $q'''$ in question does not map to 
    $q/_\sim$ either (because by assumption $q$ maps to the negative examples in $E'^-$). Therefore, this component belongs to $q''''$.
    Therefore, we have shown that $q''''$ fits $(E^+,E'^-)$.
    Furthermore,
    $q''\subsetneq q''''$. The inclusion by construction of $q''''$, and it is strict because $q'''\subseteq q$ and $q'''\not\subseteq q\land q''$, therefore $q'''\not\subseteq q''$, therefore $q''''\not\subseteq q''$. 
\end{proof}

\thmwmgasdomspecialization*

\begin{proof}
    (1 to 2:)
    Let $q$ be a weakly most-general fitting CQ for $E$, and consider any 
    minimally-constrained query $q_\top$ such that $q\subseteq q_\top$. 
    We claim that $q$ is a $\preceq^{\dom}$-repair for $(q_\top,E)$. Suppose otherwise. Then there exists a CQ $q'$ that fits $E$, with
    $q\subsetneq q\subseteq q_\top$. But then $q'$ is a fitting CQ for $E$
    that is strictly more general than $q$, a contradiction.

    (2 to 1:) Let $q$ be a $\preceq^{\dom}$-repair for $(q_\top,E)$ for all minimally-constrained CQ $q_\top$ for which it holds that $q\subseteq q_\top$.
    We claim that $q$ is a weakly most-general fitting CQ for $E$. 
    Suppose, for the sake of a contradiction, that 
    there is a fitting CQ $q'$ such that $q\subsetneq q'$.
    Let $q_\top$ be a minimally-constrained CQ such that $q'\subseteq q_\top$
    (indeed, such a $q_\top$ always exists and can easily be read off from $q'$). By transitivity, we have that $q\subseteq q_\top$, and hence,
    by assumption, $q$ is a $\preceq^{\dom}$-specialization of $q_\top$. 
    At the same time, we have that $q\subsetneq q'\subseteq q_\top$, 
    which gives us 
    our contradiction.
\end{proof}

\corspecializationcomplexity*

\begin{proof} 
  Before we start, we note that, 
  in each of the three problems, 
  we may assume without loss of generality that the input annotated
  CQ $(q,E)$ is such that 
  $q$ has no repeated answer variables. 
  Indeed, consider any CQ $q(\textbf{x})$ where  $\textbf{x}=x_1\ldots x_k$ is a tuple of variables containing 
  repeated occurrences of the same variable(s).
  Let $k'<k$ be the number of distinct variables in the tuple $\textbf{x}$,
  and let $\textbf{x}'=x_{i_1}\ldots x_{i_{k'}}$, where $i_j$ denotes
  the first position in which the
  $j$-th distinct variable occurs in 
  the tuple $\textbf{x}$.
  We will denote by $\widehat{q}$  the CQ that has the same body as $q$ and whose tuple of answer variables is $\textbf{x}'$. Note that $\widehat{q}$ is, by construction, 
  repetition-free.
  Furthermore, let $\widehat{E}$ be
  obtained from $E$ by replacing each
  example $e=(I,\textbf{a})$ by the example $(I',\textbf{a}')$,
  where (i) $I'$ is the quotient of $I$ under the 
  equivalence relation $\sim$ generated by the pairs $\{(a_i,a_j)\mid x_i=x_j\}$, 
  and (ii) $\textbf{a}'=[a_{i_1}]_\sim\ldots [a_{i_{k'}}]_\sim$. 
  It is not difficult to see that
  there is a one-to-one correspondence
  between specializations of $(q,E)$ 
  and specializations of $(\widehat{q},\widehat{E})$ with effective (polynomial-time computable)
  translations back and forth. Therefore, 
  we can restrict attention to the repetition-free case.

  1. It was shown in \cite{pods2023:extremal} that
  the existence problem for weakly most-general fitting CQs is ExpTime-complete (the lower bound applies to inputs where $E$ contains a single negative examples and an unbounded number of positive examples).
It follows by Thm.~\ref{thm:reduction-specialization} and Thm.~\ref{thm:wmg-as-dom-specialization} that $\preceq^{\dom}$-\problem{specialization-existence} is also ExpTime-complete. 

  2. We recall that the complexity class $P^{NP}_{||}$ admits 
many equivalent definitions, including
as the class of problems solvable in polynomial time using a single parallel round of calls to an NP-oracle (meaning that the algorithm may make polynomially many calls to an NP-oracle but these calls must be independent of each other)~\cite{Buss1991:truth}.
  It was shown in \cite{pods2023:extremal} that
  the verification problem for weakly most-general fitting CQs is NP-complete, and that NP-hardness holds already for
  a fixed collection $E$ of negative examples. Thm.~\ref{thm:reduction-specialization} therefore implies the following
  $P^{NP}_{||}$-algorithm for 
  $\preceq^{\dom}$-\problem{specialization-verification}: we first use a parallel round of NP-oracle calls to compute $E'^-$. Next, we perform one more NP-oracle call to guess the
  existence of a polynomial-sized CQ $q''$ that is a weakly most-general fitting CQ for $(E^+,E'^-)$, together with 
  homomorphisms between $q\land q''$ and $q'$. Observe
 that this algorithm uses two rounds of parallel NP-oracle calls,
 while the definition of $P^{NP}_{||}$ we mentioned earlier allows only a 
 single round of parallel NP-oracle calls.
 This is no problem, because it was shown in \cite{Buss1991:truth} that 
  a constant number of rounds of parallel calls is no more powerful than a single round. 
  For the lower bound, it suffices to observe that for any annotated
  CQ $(q,E)$, $q$ is a $\preceq^{\dom}$-specialization for $(q,E)$ if
  and only if $q$ fits $E$ (this actually holds for any proximity pre-order).
  Since it is DP-hard to decide whether a given CQ fits a  set of labeled data examples \cite{pods2023:extremal}, even with a single positive and a single negative example, DP-hardness of  $\preceq^{\dom}$-\problem{specialization-verification} follows.
%

  3. In \cite{pods2023:extremal}, 
  an algorithm was given, based on tree automata, for constructing a  weakly most-general fitting CQ (when it exists) in 2ExpTime. It follows by Thm.~\ref{thm:reduction-specialization} that a
  $\preceq^{\dom}$-specialization
  (when it exists) can be constructed in 2ExpTime.
\end{proof}

 Lemma~\ref{lem:products} follows (using the Chandra-Merlin theorem) from well-known properties of direct products. We omit the proof.

\propreducetoposneg*

\begin{proof}
    If $E^+=\emptyset$, the proposition holds trivially.
    Therefore, assume $E^+\neq\emptyset$.
    Let $q'$ be any $\preceq^{\dom}$-repair for $(q,E)$.
    We distinguish two cases. 
    
    The first case
    is where $q$ fits $E^+$. We claim that,
    in this case, (b) holds. Indeed, 
    from the fact that $q'$ fits $E$, it 
    follows by Lemma~\ref{lem:products}
    that $q'$  
    fits $\widehat{E^-}$ (as negative examples). Next,    assume for a contradiction that there
    is a CQ
    $q''$ that fits $\widehat{E^-}$ (as negative examples) with
    $q''\preceq^{\dom}_q q'$. 
    Since $q$ and $q'$ both fit $E^+$
    and $q''\preceq^{\dom}_q q'$, 
    it follows that $q''$ fits $E^+$.
    Therefore, by Lemma~\ref{lem:products}, $q''$ also fits
    the product example $\Pi_{e'\in E^+}(e')$.
    It follows,  using~Lemma~\ref{lem:products} again, that  $q''$ fits each 
    example $e\in E^-$. In conclusion,
    $q''$ fits $E$, which contradicts the
    assumption that $q'$ is a $\preceq^{\dom}$-repair for $(q,E)$.

    The second case is where $q$ \emph{does not} fit $E^+$. We claim that, in this case, (a) holds. Assume for a contradiction that there
    is a CQ
    $q''$ that fits $E^+$ with
    $q''\preceq^{\dom}_q q'$. 
    Since $q$ fails to fit $E^+$, it 
    follows by~Lemma~\ref{lem:products}  that  $q$ fits $\widehat{E^-}$ (as negative examples). Since both $q$ and $q'$
    fit $\widehat{E^-}$, and 
    $q''\preceq^{\dom}_q q'$, it follows
    that $q''$ fits $\widehat{E^-}$ as well (as negative examples).
    Since $q''$ fits $E^+$ it follows,
    using~Lemma~\ref{lem:products} again, that $q''$ fits $E^-$.
In conclusion,
    $q''$ fits $E$, which contradicts the
    assumption that $q'$ is a $\preceq^{\dom}$-repair for $(q,E)$.    
    \end{proof}

\propnegativespecialization*

\begin{proof}
    Let $q'$ be a $\preceq^{\dom}$-repair for $(q,E)$ and suppose for the sake of a contradiction that $q'\not\subseteq q$
    (in other words, $\extension{q'}\setminus\extension{q}\neq\emptyset$). 
    Let $q''=q'\land q$. 
    Clearly, $q''$ fits $E$. 
    Furthermore, it is easy to see
    that $\extension{q''}\setminus \extension{q}=\emptyset$ and $(\extension{q}\setminus\extension{q''})= (\extension{q}\setminus \extension{q'})$. This shows that 
    $q''\preceq^{\dom}_q q'$, 
    contradicting the assumption that $q'$ is a $\preceq^{\dom}$-repair for $(q,E)$.
\end{proof}

\thmdomrepairspositivecharacterization*

\begin{proof}
        \emph{From left to right:} It follows immediately from the definition of $\preceq^{\dom}$-query repairs that if $q$ fits $E^+$, then every $\preceq^{\dom}$-query repair of $q$ w.r.t.~$E^+$ must be equivalent to $q$. Therefore, we only have to consider the case where
    $q$ does not fit $E^+$.
    If $q'$ is an $\preceq^{\dom}$-query repair of $q$ w.r.t.~$E^+$, then in particular $q'$ fits $E^+$. 
    It remains to show that $\Pi(E^+) \times e_{q\land q'}\to e_{q'}$.    
Take the canonical CQ $q'''$ of $\Pi(E^+)\times (e_{q\land q'})$
(that is, $q'''$ is the CQ that has as its 
variables the values occurring in this
example, and as atoms the facts in the example. Note that $q'''$ does indeed satisfy the safety condition, as 
can be seen from the fact that
$q'$ does and $e_{q'}\to\Pi(E^+)\times e_{q\land q'}$).
It follows
     from the construction that $q'''$ 
     fits $E^+$, and that every positive example for $q$ that is a
     positive example for $q'$ is also a positive example for $q'''$,
     while each negative example for $q$ that is a negative example
     for $q'$ is also a negative example for $q'''$. Hence,
     $q'''\preceq_q q'$. Since we have assumed that $q'$ is an $\preceq^{\dom}$-repair, it follows that $q'''$ is equivalent to $q'$, and hence
     $\Pi(E^+)\times (e_{q\land q'})\to e_{q'}$.


     \emph{From right to left:}
     If $q$ fits $E^+$ and $q'$ is equivalent to $q$,
     it follows immediately  that $q'$ is a $\preceq^{\dom}$-repair for $(q,E^+)$. Hence, we only have to consider the case
     where $q$ does not fit $E^+$.
     Assume that $\Pi(E^+)\times e_q\to e_{q'}$ and $e_{q'}\to \Pi(E^+)$. 
     It follows from Lemma~\ref{lem:products} that, then, $q'$ fits $E^+$.
     We  claim that $q'$ is an $\preceq^{\dom}$-repair for $(q,E^+)$.
  Consider any CQ $q''$ that fits $E^+$
     and suppose that 
     $q''\preceq_q q'$. We will show that 
     $q''$ is equivalent to $q'$. 
     Consider the canonical example $e_{q''}$. This is a positive example for $q''$
     and a negative example for $q$ (because, otherwise, by composition of homomorphisms, we would have that $q$ fits $E^+$). In other words, $e_{q''}\in\extension{q''}\setminus\extension{q}$.
     Therefore, since 
     $q''\preceq_q q'$, it must also be a positive example for $q'$. 
     This shows that $e_{q'} \to e_{q''}$. 
     For the converse direction, observe that
     $e_{q''}\to \Pi(E^+)\times e_{q\land q'}$ 
     (this follows from the fact $e_{q\land q'}$ is a 
     positive example for both $q$ and $q'$, and hence,
     since 
     $q''\preceq_q q'$, is also a positive example for $q''$).  
     Since $\Pi(E^+)\times e_{q\land q'}\to e_{q'}$, by transitivity that 
     $e_{q''}\to e_{q'}$. 

     \emph{Equivalence of 2 and 2' in the Boolean case:} By Lemma~\ref{lem:products},
     since $e_{q}$ is a subinstance of $e_{q\land q'}$, it is clear that 2 implies 2'. Note that, by Lemma~\ref{lem:products},
     whenever $q'$ fits $E$ we have $e_{q'}\to \Pi(E^+)$. For the 
     converse direction, it follows from $e_{q'}\to \Pi(E^+)$ that $q'$ fits $E^+$. Furthermore,
     we have
     $\Pi(E^+)\times e_{q}\to e_{q'}$ by assumption,  as well as    $\Pi(E^+)\times e_{q'}\to e_{q'}$ by 
     Lemma~\ref{lem:products}. Therefore,     
     by Lemma~\ref{lem:distributivity-boolean}, 
     $\Pi(E^+)\times e_{q\land q'}\to e_{q'}$.
\end{proof}

\propgenasrepair*

\begin{proof}
It suffices to observe that a CQ $q'$ fits $e_q$ if
and only if $q\subseteq q'$.
\end{proof}

\section{Proofs for Sect.~\ref{sec:distance-based}}

\propdistorderproperties*
\begin{proof}
    Conservativeness follows from the if-direction of the defining property 2 of (weak) semantic distance metrics, while syntax independence follows from property 2 in combination with the triangle inequality. 
\end{proof}

\propeditdistancemetric*
\begin{proof} (sketch)
    It is clear that $\editdist(q_1,q_2)$ is a 
    non-negative integer. 

    For all CQs $q_1$ and $q_2$, 
    $\editdist(q_1,q_2)=0$ iff $q_1$ and $q_2$ are equivalent, because
    $q_1,q_2$ are equivalent CQs iff they have
    isomorphic cores, i.e., iff there is a renaming 
    $\rho$ (constant on the answer variables) such that $\core(I_{q_1})=\core(I_{\rho(q_2)})$. 

    To show that $\editdist(q_1, q_2)=\editdist(q_2,q_1)$, it is enough to 
    show that $\editdist(q_2,q_1)\leq \editdist(q_1, q_2)$.
    Suppose that $\editdist(q_1, q_2)= n$.
    Then by definition, there is a one-to-one variable renaming $\rho$
    such that
\[|\core(I_{q_1})\oplus \core(I_{\rho(q_2)})| = n
\]
    It follows that 
\[|\core(I_{\rho^{-1}(q_1)}))\oplus \core(I_{q_2})| = n
\]
and hence $\editdist(q_2,q_1)\leq n$.

Finally, for the triangle inequality, suppose that
$\editdist(q_1,q_2)=n$
and $\editdist(q_2,q_3)=m$.
Then there are renamings $\rho,\rho'$ such that
\[|\core(I_{q_1})\oplus \core(I_{\rho(q_2)})| = n
\text{~~~and~~~}
|\core(I_{q_2})\oplus \core(I_{\rho'(q_3)})| = m
\]
Let $\rho''$ be the composition of $\rho'$ followed by $\rho$.
Then
\[|\core(I_{q_1})\oplus \core(I_{\rho''(q_3)})| \leq n+m
\]
and therefore $\editdist(q_1,q_3)\leq n+m$.
\end{proof}

\proptestingeditdist*
\begin{proof}
  For the DP lower bound, we give a polynomial time reduction from the following
  problem: given a triple of Boolean CQs $(q_1,q_2,q_3)$ with $q_3$ connected, decide
  whether $q_1$ and $q_2$ are equivalent and $q_3$ is a core. Note that this problem
  is DP-hard since CQ equivalence is NP-hard and deciding whether a connected Boolean 
  CQ is a core is coNP-hard. The latter was proved in  \cite{HN92} for undirected graphs
  (with the connectedness condition being only implicit), but then clearly also
  applies to directed graphs and Boolean CQs.

 Let $(q_1,q_2,q_3)$ be an input to the above problem.  For $i \in \{1,2\}$, let
 the CQ $q'_i$ be obtained from $q_i$ by marking all variables with  a fresh unary relation symbol $A$. Moreover, let $q'_3$ be obtained from $q_3$ by marking every variable with a fresh 
 unary relation symbol $B$ and let $q''_3$ be obtained from $q'_3$ by additionally marking
 every variable  $x$ with a fresh unary relation symbol $B_x$. Let $n$ be the number of variables in $q_3$ and let $p_1$ be the disjoint union of $q'_1$ and~$q'_3$, and $p_2$ the disjoint union of $q'_2$ and $q''_3$.
  \\[2mm]
  {\bf Claim}. $(q_1,q_2,q_3)$ is a `yes' instance if and only if $\editdist(p_1,p_2)\leq n$.
  \\[2mm]
  `$\Rightarrow$'. Assume that $(q_1,q_2,q_3)$ is a `yes' instance, that is, 
  $q_1$ and $q_2$ are equivalent and $q_3$ is a core. Due to the use of the 
  fresh relation symbols and since both $q'_3$ and $q''_3$ are cores (the forner
  since $q_3$ is a core and the latter by construction), $\core(p_1)$
  is the disjoint union of $\core(q'_1)$ and $q'_3$, and $\core(p_2)$
  is the disjoint union of $\core(q'_2)$ and $q''_3$. Since
  $q_1$ and $q_2$ are equivalent, $\core(q'_1)$ and $\core(q'_2)$ are isomorphic.
  Up to renaming variables, we can thus obtain $\core(p_2)$ from $\core(p_1)$ by adding the $n$ atoms $B_x(x)$, for every variable $x$ in $q_3$. This shows that  $\editdist(p_1,p_2)\leq n$.

  \smallskip

  `$\Leftarrow$'. Assume that $q_1$ and $q_2$ are not equivalent or $q_3$ is not
  a core. Due to the use of the relation symbols $A$ and $B$ and since $q''_3$ is a core, we know that $\core(p_1)$
  is the disjoint union of $\core(q'_1)$ and $\core(q'_3)$, and $\core(p_2)$
  is the disjoint union of $\core(q'_2)$ and $q''_3$. 
  
  First assume that $q_1$ and $q_2$
  are not equivalent.
    To obtain $\core(p_2)$ from $\core(p_1)$,
  we must add $n$ atoms that use the relation symbols $A_x$, one for every variable $x$ in $q_3$. This, however,
  will still not yield $\core(p_2)$: since  $q_1$ and $q_2$ are not equivalent,
  $\core(q'_1)$ and $\core(q'_2)$ are not isomorphic and we must add or delete at least one more atom. Thus $\editdist(q_G,q'_G) > n$.

  Now assume assume that $q_3$ is not a core, and thus neither is $q'_3$.   To obtain $\core(p_2)$ from $\core(p_1)$,
  we must add $n$ atoms that use the relation symbols $A_x$, one for every variable $x$ in $q_3$. 
  Let the resulting CQ be $\widehat q$. The added atoms 
  must clearly be on $n$ \emph{distinct} variables. However, $\core(q'_3)$ contains
  strictly less than $n$ variables, because the core of every non-core query such as $q'_3$ is a proper retract of that query. The query $\widehat q \setminus \core(q'_1)$ must thus be disconnected.
  Thus $\widehat q\setminus \core(q'_1)$ and the (connected)
  $q'_3$ are not isomorphic, implying that we have to add at least one
  more atom to obtain $q''_3$. Consequently $\editdist(q_G,q'_G) > n$.
  This finishes the proof of the claim.

  \smallskip
 
  For the $\Sigma^p_2$ upper bound, we use the following algorithm. Given input $q_1$, $q_2$ and $n$ with $q_1,q_2$ $k$-ary and over schema $\calS$, guess $k$-ary CQs $q'_1,q'_2$ over $\calS$ with no 
  more atoms than $q_1$ and $q_2$. Then verify  that $q'_i$ is the core of $q_i$, for $i \in \{1,2\}$, which can be done in DP \cite{FKPtods05}. We may then guess a variable renaming $\rho$
  and check in polynomial time that $|I_{q'_1}| \oplus |I_{\rho(q'_2)}| \leq n$. 
  
  \smallskip
  
  Now for the case where $q_1,q_2$ are promised to be cores. For the NP upper bound, the algorithm described above runs in NP because
 the guessing of
  cores and the DP check are unnecessary.
  For the lower bound, we exhibit a simple polynomial
  time reduction from the subgraph isomorphism problem, which is
  NP-complete, taking inspiration from \cite{DBLP:journals/pvldb/ZengTWFZ09}. Let two undirected graphs $G_1=(V_1,E_1)$ and $G_2=(V_2,E_2)$ be given
  and assume that we want to decide whether $G_1$ is isomorphic to a 
  subgraph of $G_2$. With a subgraph of $G_2$, we mean any graph $G=(V,E)$
  with $V \subseteq V_2$ and $E_2 \subseteq V_2$. 
  Reserve two binary
  relation symbols $E$ and $R$. For $i \in \{1,2\}$, we construct a Boolean conjunctive query $q_i$ as follows:
  \begin{itemize}
      \item for all $(v_1,v_2) \in E_i$, include atoms $E(v_1,v_2)$ and $E(v_2,v_1)$;

      \item for all distinct $v_1,v_2 \in V_i$, include facts $R(v_1,v_2)$ and $R(v_2,v_1)$.
      
  \end{itemize}
  So $q_i$ is the `overlay' of $G_i$ with a clique. Because of the latter, 
  $q_1$ and $q_2$ are cores. Moreover,  $G_1$ is isomorphic to a subgraph of  $G_2$ if and only if $\editdist(q_1,q_2) \leq n$ with $n = |e_{q_2}|-|e_{q_1}|$ (if this number is negative, then the answer is `no'). For the `only if' direction, note that if $G_1$ is isomorphic to a subgraph of $G_2$ via some isomorphism $\iota:V_1 \rightarrow V_2$, then
  $e_{\iota(q_1)} \subseteq e_{q_2}$, implying  $|\core(e_{q_1}) \oplus \core(e_{\iota(q_2)})| = |e_{\iota(q_2)} \setminus e_{q_1}|$ and
  2 $|e_{\iota(q_2)} \setminus e_{q_1}|= |e_{\iota(q_2)}| - |e_{q_1}|= |e_{q_2}| - |e_{q_1}| = n$. Thus $\iota$ viewed as a variable renaming witnesses  $\editdist(q_1,q_2) \leq n$. Conversely, assume that $\editdist(q_1,q_2) \leq n$. Then there is a bijective variable renaming $\iota$ such that $|\core(e_{q_1}) \oplus \core(e_{\iota(q_2)})| \leq n$ and thus 
  $|e_{q_1} \oplus e_{\iota(q_2)}| \leq n$. But by definition $n= |e_{q_2}| - |e_{q_1}| = |e_{\iota(q_2)}| - |e_{q_1}|$, implying $|e_{\iota(q_2)} \setminus e_{q_1}| \geq n$.  It follows that $e_{q_1} \subseteq e_{\iota(q_2)}$ and thus $\iota$ is an isomorphism witnessing that 
  $e_{q_1}$ is isomorphic to a subinstance of $e_{q_2}$, implying by construction of
  $q_1$ and $q_2$ that $G_1$ is isomorphic to a subgraph of $G_2$.
\end{proof}

\propeditdistfinitebasis*
\begin{proof}
Well-foundedness follows from the fact that the edit distance is a non-negative
integer and there are no infinite descending chains of non-negative integers. The finite basis property follows from the fact
that, for each non-negative integer~$n$, there are only
    finitely many CQs $q'$, up to equivalence, with $\editdist(q,q')\leq n$.
    This, in turn, is the case since there are, up to equivalence, only finitely many CQs  (over a given finite schema) of size at most $n$ for any given $n$.
\end{proof}


\begin{example}
\label{ex:editdist-negative-not-a-specialization}
This example serves to show that a 
  $\preceq^{\editdist}$-repair with respect to 
  negative examples is not necessarily a 
  $\preceq^{\editdist}$-specialization.
   Consider the following Boolean CQs and negative example:
  $$
  \begin{array}{rcl}
      q() &:-& R(x,y), R(x,z), R(y,u), R(z,u), P(y), Q(y), Q(z), W(z) 
  \\[1mm]
        q'_1() &:-& R(x,y), R(x,z), R(y,u), R(z,u), P(y), W(y), Q(z), W(z)  
  \\[1mm]
          q'_2() &:-& R(x,y), R(x,z), R(y,u), R(z,u), P(y), Q(y), P(z), W(z)  
  \\[1mm]
      e &=& \{  R(a,b), R(a,c), R(b,d), R(c,d), P(b), Q(b), Q(c), W(c)\}
  \end{array}
  $$
  Then $q'_1$ and $q'_2$ are the (only) $\preceq^{\editdist}$-repairs of 
  for $(q,E)$ where $E=(E^+,E^-)$ with $E^+=\emptyset$ and $E^-=\{e\}$
  but they are not $\preceq^{\editdist}$-specializations since $q' \not\subseteq q$. 
  \end{example}

\thmeditdistrepaircomplexity*
\begin{proof}
\emph{Point~1}. For the $\Sigma^p_2$-upper bound we may use the following
algorithm. Assume that $(q,E)$ and $n$ are given as input and let the
number of atoms in $q$ be~$\ell$. Guess a CQ $q_0$ with at most $\ell$ atoms and a CQ $q'$with at most $\ell+n$ atoms, both over the same schema and of the same arity as $q$, and also guess a renaming $\rho$ of the variables in $a$ that is the identity on all answer variables.
Then verify the following:
\begin{enumerate}
    \item $q_0$ is the core of $q$ and $q'$ is a core;
    \item $|I_{q_0} \oplus I_{\rho(q')}| \leq n$;
    \item $q'$ fits $E$.
\end{enumerate}
Point~1 can be verified in DP \cite{FKPtods05} and so can be Point~3 \cite{pods2023:extremal}, and Point~2 can obviously be verified in polynomial
time. We thus obtain the desired $\Sigma^p_2$-upper bound. It clearly falls back to NP if the input query is a core and only positive examples are given.

\smallskip

We prove the lower bound by reduction from {\sc size-bounded fitting existence} for
Boolean CQs, defined as follows:

\medskip
\noindent
\textbf{Inputs}: 
\begin{itemize}
    \item a set $E$ of $0$-ary labeled examples; 
    \item a size bound $n \geq 0$;
\end{itemize} 
\textbf{Output}: 
\begin{itemize}
    \item  \emph{Yes} if there is a fitting CQ for $E$ with at most $n$ atoms and \emph{No} otherwise.
\end{itemize}
\medskip
\noindent
This problem has been proved $\Sigma^p_2$-hard in \cite{GottlobLS99}. The 
reduction is as follows: $E,n$ form a yes-instance of  {\sc size-bounded fitting existence} if and only if $(q_\emptyset,E)$ and $d=n$ form a yes-instance of $\editdist$-\problem{bounded fitting existence} where $q_\emptyset$ is the empty Boolean CQ. The correctness of this reduction is easily verified. 

\smallskip
\emph{Point~2}. Clearly, a $\preceq^{\editdist}$-repair
for  $(q,E)$ exists if and only if there is a CQ $q'$ that fits
$E$: we can always edit $\core(q)$ to arrive at $\core(q')$. Thus
 $\editdist$-{\sc repair existence} conincides with existence of a fitting CQ for $E$, which is coNExpTime-complete with an unbounded number of examples and coNP-complete with a bounded number of examples \cite{CateD15}.
 If the input contains only positive examples or only negative examples,
 existence of a fitting CQ can be decided in PTime. We only sketch the algorithms, starting
 with the case of only positive examples. Here, the only reason why a fitting may not exist is that in the
 product $e_\Pi$ of the positive examples, some distinguished element does not participate in
 any fact; as a consequence, the CQ $q_\Pi$ whose canonical example is $e_\Pi$ is then not safe,
 and in fact there is no (safe) fitting CQ. We may thus decide fitting existence for a given set of $k$-ary labeled examples $E$ by checking that, for every $i \in \{1,\dots,k\}$, there is a
 relation symbol $R$ of some arity $n$ and a $j \in \{1,\dots,n\}$  such that for all
 positive examples $(I,\textbf{a}) \in E^+$, the $i$-th value $a_i$ in {\bf a} participates in some
 fact $R(\bar c) \in I$ in the $j$-th position. In the case of only negative examples, there
 is a fitting if and only if for all $(I,\textbf{A}) \in E^-$, there is a finite $(J,\textbf{b})$
 such that $(J,\textbf{b}) \not\rightarrow (I,\textbf{a})$. This, in turn, is the case if and only
 if $I$ contains all possible facts that only use the relation symbols in $E$ and the values in {\bf a}. Clearly, this can be checked in polynomial time.
 
 The above arguments are easily varied to apply to generalization and specialization as a repair, we now give details.
First, there is a $\preceq^{\editdist}$-generalization for $(q,E)$  if and only if there is a CQ $q'$ that fits the set of examples $E'$ that is obtained from $E$ by adding the canonical instance $I_q$ of $q$ as an additional positive example. And second, there is a $\preceq^{\editdist}$-specialization for $(q,E)$  if and only if (i)~there is a CQ that fits 
$E$ and (ii)~$e_q \rightarrow \Pi_{e\in E^+} e$. To see this, note that the
CQ $q_\Pi$ whose canonical example is 
$\Pi_{e\in E^+} e$ is the most 
specific CQ that fits $E$ \cite{pods2023:extremal}. Thus any 
specialization $q'$ of $q$ that fits $E$ must satisfy $q_\Pi \subseteq q'$.
Moreover, if (i) and (ii) are satisfied, then since  $\preceq^{\editdist}$ is well-founded we can start
from the fitting CQ $q_\Pi$ and by successive generalization find a $\preceq^{\editdist}$-specialization for $(q,E)$. It remains to remark that Condition~(ii) can be verified by
checking that $e_q \rightarrow e$ for all $e \in E^+$, which is possible in ExpTime
by brute force. For a bounded number of examples, the same approach yields a DP algorithm.

\smallskip
\emph{Point~3}. We start with the upper bound. Assume that $(q_1,E)$ and $q_2$ are given as the input. We first construct the cores of $q_1$
and $q_2$ by guessing CQs $q'_1$ and $q'_2$ over the same schema and arity and with no more
atoms than $q_1$ and~$q_2$, respectively; then verify in DP that $q'_1$ and $q'_2$ are the cores of $q_1$ and~$q_2$. We may then replace $q_1,q_2$ with $q'_1,q'_2$. Then check in DP that $q_2$ fits $E$ and answer `no' if this is not the case. Next, determine $\editdist(q_1,q_2)$ by increasing
the value of $k=0,1,\dots,m$, with $m$ the number of atoms in $q_1$ plus the number of atoms in $q_2$, each time using Prop.~\ref{prop:testingeditdist} to decide in NP whether $\editdist(q_1,q_2) \leq k$. It remains to use the algorithm from Point~1 to verify that there is no repair $q'_2$ for $(q_1,E)$ with $\editdist(q_1,q'_2) < \editdist(q_1,q_2)$. It can be verified that this yields a $\Sigma^p_3$ algorithm.

\smallskip

The lower bound is proved by reduction from the complement of \editdist-{\sc bounded fitting existence}, which is $\Pi^p_2$-hard by Point~1. The proof  shows that this problem 
is hard already for Boolean CQs, and in fact already for the fixed Boolean (empty)
CQ $q_\emptyset$. We may thus restrict our attention to this case. Let $(q,E)$ and $d$ be an input to this restricted version, with $q=q_\emptyset$. Reserve a fresh binary relation symbol $R$ and let $E'$ be obtained from $E$ by disjointly adding to each positive example an $R$-path of length $d+1$ and to each negative example an $R$-path of length $d$. Let $q_R$ be the Boolean CQ whose canonical example is an $R$-path of length $d+1$. It can be verified that the following are equivalent:
\begin{itemize}
    \item $(q_\emptyset,E)$ and $d$ is a no-instance of  \editdist-{\sc bounded fitting existence};
    \item $(q_\emptyset,E)$ and $q_R$ is a yes-instance of $\preceq^{\editdist}$-{\sc repair verification}.
\end{itemize}
We have thus found the desired reduction.

\smallskip
When generalizations or specializations are thought in place of repairs, 
essentially the same upper bound proof applies. We only need to verify, 
in addition, that $q_1 \subseteq q_2$ or $q_2 \subseteq q_1$, respectively,
by guessing a homomorphism. This does not change the complexity of the overall
procedure. The lower bound proof above clearly also applies to specializations,
but not to generalizations. 

To prove $\Pi^p_2$-hardness for the latter, 
we first observe that in the hardness proof for {\sc size-bounded fitting existence} given
in  \cite{GottlobLS99}, the (only) positive example does not contain any reflexive
facts, that is, no facts of the form $R(c,\dots,c)$. This implies that
$\editdist$-\problem{bounded-fitting-existence} is $\Sigma^p_2$-hard already in
the restricted case where  positive examples contain no reflexive facts and for the fixed Boolean CQ $q_{\max}$ that contains only a single 
variable $x$ and a reflexive atom $S(x,\dots,x)$ for every relation
symbol $S$ used in the hardness proof in~\cite{GottlobLS99}. Let $\ell$ be
the number of such symbols. In fact, $\Sigma^p_2$-hardness for the restricted
version of $\editdist$-\problem{bounded-fitting-existence} just described again follows  from a simple reduction from {\sc size-bounded fitting existence}: if the positive examples in $E$
contain no reflexive facts, then $E,d$ form a yes-instance of  {\sc size-bounded
fitting existence} if and only if $q_{\max}$ has a repair of size $\ell+d$.
This is because we have to remove all $\ell$ reflexive atoms from $q_{\max}$
to obtain a fitting, then arriving at $q_\emptyset$. 

We can now prove $\Pi^p_2$-hardness of $\preceq^{\editdist}$-\problem{generalization-verification}
by reduction from the complement of the restricted version of  $\editdist$-\problem{bounded-fitting-existence}. Let $(q,E)$ and $d$ be an input for this problem, with $q=q_{\max}$. We can assume that $d \geq \ell$: since the positive examples in $E$
contains no reflexive facts and we are using the fixed query $q_{\max}$, it is 
otherwise clear that the answer is `no'. Reserve a fresh binary relation symbol~$R$ and let $E'$ be obtained from $E$ by disjointly adding to each positive example an $R$-path of length $(d-\ell)+1$ and to each negative example an $R$-path of length $d-\ell$. Let $q_R$ be the Boolean CQ whose canonical example is an $R$-path of length $(d-\ell)+1$. It can be verified that the following are equivalent:
\begin{itemize}
    \item $(q_{\max},E)$ and $d$ is a no-instance of  \editdist-{\sc bounded fitting existence};
    \item $(q_{\max},E)$ and $q_R$ is a yes-instance of $\preceq^{\editdist}$-\problem{generalization-verification}.
\end{itemize}
We have thus found the desired reduction.
\end{proof}

%

\propsdidistisametric*
\begin{proof}
Recall that an ultrametric is a distance metric satisfying
$dist(q_1,q_3)\leq \max(dist(q_1,q_2), dist(q_2,q_3))$.
It is clear 
    that $dist(q,q')$ is a non-negative real number; that $dist(q,q')=0$ if and only
    if $q$ and $q'$ are equivalent; 
    and that $dist(q_1,q_2)=dist(q_2,q_1)$. It remains to show that
    $\sdidist(q_1,q_3)\leq \max(\sdidist(q_1,q_2), \sdidist(q_2,q_3))$.
    Let $\sdidist(q_1,q_2)=1/n$ and $\sdidist(q_2,q_3)=1/m$.
    (where $n,m\in\mathbb{N}\cup\{\infty\}$),
    and let $I$ be any instance containing at most
    $\min(n,m)$ facts. Then 
    $q_1(I)=q_2(I)$ and $q_2(I)=q_3(I)$, hence
    $q_1(I)=q_3(I)$. It follows that
    $\sdidist(q_1,q_3)\leq 1/\min(n,m) = \max(1/n,1/m)$.
\end{proof}

\propsdidistbound*
\begin{proof}
    If two CQs $q,q'$ are non-equivalent, they must disagree either
    on the canonical instance of $q$ or on the canonical instance of $q'$.
\end{proof}

\propsdidistcomplexity*
\begin{proof}
The problem is 
    in $\Pi^p_2$ by co-guessing a distinguishing instance of size at most $k$
    and testing that $q,q'$ disagree on it. NP-hardness holds because for $k>\max(|q|,|q'|)$, 
    we have by Prop.~\ref{prop:sdidist-bound} that $\sdidist(q,q')\leq 1/k$ iff 
    $\sdidist(q,q')=0$ iff $q$ and $q'$ are equivalent.
\end{proof}

\propsdidistfinite*

\begin{proof}
    We prove the first item. The proof of the other items is similar.
    If $q$ fits $E$,  $q$ is its own 
    $\preceq^{\sdidist}$-repair. Therefore,
    assume that $q'$ does not fit $E$, and 
    let $n$ be the size of the smallest 
    example $e$ in $E$ that $q$ does not fit.
    Every CQ $q'$ that fits $E$
    must disagree with $q$ on $e$, and
    therefore satisfies $\sdidist(q,q')\geq 1/n$.
    In other words, every fitting CQ $q'$ must
    satisfy $\sdidist(q,q')=1/m$ for some
    $m\in\{1, \ldots, n\}$. Since this range of possible distance values is finite and a fitting CQ
    exists, it follows that a $\preceq^{\sdidist}$-repair exists.
\end{proof}



\propcomputingmudist*
\begin{proof}
We recall that the complexity class $P^{NP}_{||}$ admits 
many equivalent definitions, including
as the class of problems solvable in polynomial time using a single parallel round of calls to an NP-oracle (meaning that the algorithm may make polynomially many calls to an NP-oracle but these calls must be independent of each other)~\cite{Buss1991:truth}.
    Following this definition, we use  NP-oracle calls to
    test the label of each example in the support of $\mu$ w.r.t.~$q$ and w.r.t.~$q'$. We add up the probabilities of the examples where the labels according to $q$ and $q'$ differ. This way we compute  
    $dist_\mu(q,q')$. We then compare with $r$. 
    For the lower bound,
    we give a reduction from the problem of testing whether two graphs have the same chromatic number. It follows
    from results in~\cite{Wagner1986:more} and \cite{Buss1991:truth} that this problem is complete for $P^{NP}_{||}$. Take $q_1=q_{G_1}$ and $q_2=q_{G_2}$ to be the canonical CQs of the two given graphs, and take $\mu$ to be the uniform distribution over all cliques of size at most the size of $max(size(G_1),size(G_2))$. then $dist(q_1,q_2)=0$ iff $G_1$ and $G_2$ have the same chromatic number.
\end{proof}

\propmubasis*
\begin{proof}
    If $\mu$ has finite support, say of cardinality $n$, then there are only 
    $2^n$ possible values for the probability of an event.
    It follows that cannot exist 
    an infinite sequence of events of strictly decreasing probability. 
    It follows that $\preceq^\mu$ is well founded.
    To show that the same does not hold when $\mu$ has infinite support, let us
    fix a schema $S=\{E,P\}$ where
    $E$ is a binary relation (think: the edge relation of a graph) and $P$ is a unary relation. We construct $\mu$ by giving probability .5 to the structure $B$ that is a two-element $E$-clique without $P^B=\emptyset$, and dividing the remaining .5 probability mass to all structures $A$ satisfying $P^A=Dom(A)$, in such a way
    that probabilities sum up to 1. For instance this can be
    done by choosing a suitable enumeration $A_1, A_2, \ldots$ and assigning probability  $\mu(A_i)=2^{1/2i}$.    Let $E$ be the collection of labeled examples that consists of a single positive example which is the one-element $E$-loop without $P$, and let 
    $q$ be the Boolean CQ 
    $\exists x P(x)$. 
    By construction $q$ fails to fit $E$, and $B$ is a negative example for $q$, while all other structures to which $\mu$ assigns positive probability are positive examples for $q$. 
    Let $q'$ be any CQ that fits $E$.  Suppose for the sake of a contradiction that $q$ has a $\preceq^\mu$-repair $q'$ w.r.t.~$E$.
    If  $q'$ contains an occurrence of $P$,
    $dist^\mu(q,q')\leq .5$.
    It is, however, easy to see 
    that there are fitting CQs with smallest distance (an example of such a CQ is the query $q''$ expressing the existence of a $E$-cycle of length 3).
    If, on the other hand, $q'$ does \emph{not}
    contain an occurrence of $P$,
    then, since it fits $B$, it 
    must must be the canonical CQ of a directed graph $G$ that is non-2-colorable and hence contains an oriented $E$-cycle of odd length. Let $n$ be the smallest such odd number. By the \emph{sparse incomparability lemma}~\cite{kun2013constraints}, there is a 
    non-2-colorable directed graph $G'$ with $G'\to G$ of girth strictly greater than $n$. 
    It follows that the associated CQ $q_{G'}$  fits $E$ and has a strictly smaller distance to $q$, contradicting the assumption that $q'$ was a $\preceq^\mu$-repair.
\end{proof}



\propinfmurepairs*

\begin{proof}
    Let $\mu$ 
    be any example distribution with finite support. Let $e_1, \ldots, e_n$ be the examples in the support of $\mu$. 
    Since there are infinitely many CQs up to equivalence (using a binary relation $R$ in addition to the relations in the examples, if needed),
 by the pigeon hole principle, there is a labeling of the examples $e_1, \ldots, e_n$ as a collection of labeled examples $E$, such that there are infinitely 
    many CQs up to equivalence that fit $E$.
    Let $q$ be any CQ that 
    fits $E$. By construction, every CQ that fits $E$ is a  $\preceq^{\mu}$-repair for $(q,E)$.
\end{proof}

\end{document}